\documentclass[11pt]{article}

\usepackage[table, dvipsnames]{xcolor}
\usepackage{graphicx, xspace, glossaries, multirow, natbib}
\usepackage[margin=1in]{geometry}
\usepackage{amsfonts, amsmath, amsthm, amssymb, thmtools, thm-restate, mathrsfs}
\usepackage{algorithm}
\usepackage[noend]{algpseudocode}

\usepackage{hyperref}
\definecolor{mydarkblue}{rgb}{0,0.08,0.45}
\definecolor{mydarkred}{rgb}{0.75, 0.27, 0.34}
\hypersetup{
    colorlinks=true,
    citecolor=mydarkblue,
    linkcolor=mydarkred,
}

\usepackage{cleveref} 

\crefname{section}{Section}{Sections}
\crefname{subsection}{Section}{Sections}
\crefname{subsubsection}{Section}{Sections}
\crefname{appendix}{Appendix}{Appendices}

\crefname{table}{Table}{Tables}
\crefname{theorem}{Theorem}{Theorems}
\crefname{lemma}{Lemma}{Lemmas}
\crefname{conjecture}{Conjecture}{Conjectures}
\crefname{proposition}{Proposition}{Propositions}
\crefname{claim}{Claim}{Claim}
\crefname{algorithm}{Algorithm}{Algorithms}

\newtheorem{lemma}{Lemma}
\newtheorem{claim}{Claim}

\crefname{remark}{Remark}{Remarks}
\crefname{equation}{Eqn.}{Eqns.}
\crefname{equation}{Eqn.}{Eqns.}

\theoremstyle{definition}
\newtheorem{definition}{Definition}
\crefname{definition}{Definition}{Definitions}

\crefname{remark}{Remark}{Remarks}

\newacronym{lp}{LP}{Linear Program}

\newcommand{\Z}{\mathbb{Z}}
\newcommand{\R}{\mathbb{R}}
\newcommand{\E}{\mathbf{E}}

\newcommand{\cost}{\mathtt{cost}}
\newcommand{\ALG}{\mathtt{alg}}
\newcommand{\OPT}{\mathtt{opt}}

\newcommand{\kCSFE}{$k$-$\mathtt{SBFE}$\xspace}
\newcommand{\gp}{\textup{\texttt{GP}}\xspace}
\newcommand{\hgp}{\textup{\texttt{HGP}}\xspace}

\newcommand{\PSBFE}{$\mathtt{Part}$-$\mathtt{SBFC}$\xspace}
\newcommand{\SBFE}{$\mathtt{SBFE}$\xspace}
\newcommand{\SBFC}{$\mathtt{SBFC}$\xspace}

\newcommand{\matroidSBFC}{\allowdisplaybreaks\textup{\texttt{matroid}}-\textup{\texttt{SBFC}}\xspace}

\newcommand{\kmssc}{$k$-$\mathtt{MSSC}$\xspace}
\newcommand{\MSSC}{$\mathtt{MSSC}$\xspace}

\newcommand{\ugreedy}{\textup{\texttt{cond.greedy}}}
\newcommand{\convexbody}{\mathcal{K}}

\newcommand{\ignore}[1]{}

\newcommand{\mssc}{\textup{\texttt{MSSC}}\xspace}

\title{Stochastic Function {Certification} with Correlations}

\author{
    Rohan Ghuge \\ UT Austin \and
    Jai Moondra\footnote{Corresponding author, email: \texttt{jaimoondra@cmu.edu}.} \footnote{Part of this work was done while the author was at Georgia Tech.} \\ CMU \and
    Mohit Singh\footnote{The author was supported by NSF grants CCF-2504994 and CCF-2106444.} \\ Georgia Tech}

\date{}

\begin{document}

\maketitle
\thispagestyle{empty}

\begin{abstract}
    We study the \emph{Stochastic Boolean Function Certification} (\SBFC) problem, where we are given $n$ Bernoulli random variables $\{X_e: e \in U\}$ on a ground set $U$ of $n$ elements with joint distribution $p$, a Boolean function $f: 2^U \to \{0, 1\}$, and an (unknown) \emph{scenario} $S = \{e \in U: X_e = 1\}$ of active elements sampled from $p$. We seek to probe the elements one-at-a-time to reveal if they are active until we can certify $f(S) = 1$, while minimizing the expected number of probes. Unlike most previous results that assume independence, we study correlated distributions $p$ and give approximation algorithms for several classes of functions $f$.

    When $f(S)$ is the indicator function for whether $S$ is the spanning set of a given matroid, our problem reduces to finding a basis of active elements of a matroid by probing elements. For \emph{non-adaptive} algorithms, this generalizes several problems in classical combinatorial optimization, including the Min-Sum Set Cover problem (\cite{feige2004approximating, kaplan2005learning}) for $1$-uniform matroids, and the $k$-Min Sum Set Cover problem (\cite{bansal_improved_2020}) for $k$-uniform matroids. We give a non-adaptive $O(\log n)$-approximation algorithm for arbitrary distributions $p$, and show that this is tight up to constants unless P $=$ NP, even for partition matroids. We give a linear programming relaxation that combines knapsack cover inequalities with matroid constraints. 
    Combining techniques from submodular optimization and stochastic optimization, we show that the linear program can be solved in polynomial time. We then use randomized rounding over matroid polytopes to obtain the result. For uniform matroids, we give constant factor $4.642$-approximation that can be further improved to a $2$-approximation if additionally the random variables are negatively correlated for the case of $1$-uniform matroid.
    
    We also give an \emph{adaptive} $O(\log k)$-approximation algorithm for \SBFC for $k$-uniform matroids for the \emph{Graph Probing} problem, where we seek to probe the edges of a graph one-at-a-time until we find $k$ active edges. The underlying distribution on edges arises from (hidden) independent vertex random variables, with an edge being active if at least one of its endpoints is active. This significantly improves over the information-theoretic lower bound on $\Omega(\mathrm{poly}(n))$ (\cite{jiang2019cost}) for adaptive algorithms for $k$-uniform matroids with arbitrary distributions.

    \paragraph{Keywords:} Approximation algorithms, stochastic function evaluation, min-sum set cover, matroids, conditional negative association
\end{abstract}

\newpage
\tableofcontents
\thispagestyle{empty}

\newpage

\setcounter{page}{1}

\section{Introduction}

The \emph{Stochastic Boolean Function Evaluation} (\SBFE) problem is a fundamental problem in stochastic combinatorial optimization, where the objective is to determine the state of a complex system composed of multiple stochastic components. The key assumption is that the overall state of the system can be expressed as a Boolean function over the individual component states. So, in order to determine the state of the system, we must perform tests to reveal the states of these components. One approach to diagnose such systems is to perform tests on all components, which can be prohibitively expensive. A better and more practical approach involves \emph{sequential testing}, where components are tested one by one until the overall state of the system can be determined. Such sequential testing problems arise in a number of applications such as healthcare, manufacturing, telecommunication and project scheduling (see, for example, the surveys by \cite{Moret82} and \cite{UNLUYURT2025356}).

An instance of \SBFE consists of a universe $U$, a joint distribution $p:2^U \to [0,1]$ for the realized active set $S \subseteq U$, and a Boolean function $f:2^U \to \{0,1\}$ describing the system status. Probing an element reveals its state, and the goal is to determine $f(S)$ while minimizing the expected number of probes. Classical examples include the series system (the AND-function)~\citep{B72} and the $k$-of-$n$ problem~\citep{Ben-Dov81}. More generally, \SBFE has been studied for linear threshold functions, score classification, symmetric Boolean functions, and partition matroid constraints~\citep{deshpande2016approximation,gkenosis2018stochastic,ghuge2024nonadaptive,plank2024simple,gkenosis2022stochastic,hellerstein2026approximating}. The problem has also been studied in the machine learning community under the name of active search~\citep{garnett2012bayesian,jiang2017efficient,jiang2019cost}.

Most prior work assumes independent element states, but in many applications these states are correlated, so probing one element may reveal information about others. This makes the problem substantially harder: even $1$-of-$n$ admits only a $4$-approximation under arbitrary correlations~\citep{kaplan2005learning}, and for $k$-of-$n$, no adaptive algorithm using polynomially many oracle calls can achieve an $o(n^{0.16})$-approximation; moreover, the adaptivity gap can be polynomially large~\citep{jiang2019cost}. Beyond these works, \SBFE under correlated distributions remains largely unexplored. This leads to a central question: \emph{what structural assumptions on $f$ or $p$ allow these barriers to be overcome?} 

Under correlated uncertainty, we focus first on \emph{one-sided certification} problems: the realized instance is known to satisfy a monotone feasibility condition, and the goal is to reveal a small witness of this fact using as few probes as possible.

A canonical example of this is in \emph{reliability testing}. A system consists of $n$ components and may already be known to be operational from an inexpensive end-to-end test, even though the states of the individual components remain unknown. Components from the same supplier, batch, or operating environment may exhibit correlated failure modes, so if one component is defective then related components are more likely to be defective as well. By testing components sequentially, the goal is to uncover a short certificate of operability. For a $k$-of-$n$ redundant system, this amounts to identifying $k$ working components. More generally, for a system composed of multiple subsystems, one must verify that each subsystem contains sufficiently many working components, which corresponds to a partition matroid constraint. This is inherently one-sided: the system is already known to work, and the task is only to certify this fact by revealing a feasible set of working components. This motivates the following abstraction.

\medskip
\noindent\textbf{Stochastic Boolean Function Certification Problem (SBFC).}
In this paper, we study a certification variant of stochastic Boolean function evaluation under correlated distributions, which we call \SBFC. The input consists of a universe $U$ of elements, a joint distribution $p:2^U \to [0,1]$ over the realized active set $S \subseteq U$, and a \emph{monotone} Boolean function $f:2^U \to \{0,1\}$. Thus, with probability $p(S)$, exactly the elements in $S$ are active. We assume that every realization in the support of $p$ satisfies $f(S)=1$. A probing strategy sequentially tests elements and observes whether they are active, and it may stop once the observed active elements alone certify feasibility. Formally, if $T \subseteq U$ denotes the set of probed elements, then we require that $f(S \cap T)=1$. The objective is to design a probing strategy that minimizes the expected number of probes. 

Although \SBFC is formally easier than \SBFE, it already captures several important cases. In particular, for the $k$-of-$n$ problem, \SBFE essentially reduces to \SBFC (see \cref{sec: na-k-of-n}). More broadly, the certification viewpoint extends naturally to richer combinatorial settings, including partition constraints and matroids.

More generally, under correlated distributions the difficulty of stochastic probing comes from two distinct sources: the combinatorial complexity of the target function $f$, and the structural complexity of the distribution $p$. Our results are organized along these two complementary axes. 
We study both adaptive and non-adaptive policies for this problem.
A policy is \emph{adaptive} if the choice of the next element to probe may depend on the outcomes observed so far, and \emph{non-adaptive} if elements are probed in a fixed, predetermined order. 

\medskip
\noindent \textbf{Axis I: Structured functions under arbitrary correlations.}
Our first axis asks what is possible when the correlation structure is completely arbitrary, but the target function has additional combinatorial structure. Even for the simplest nontrivial function, the $k$-of-$n$ function, arbitrary correlations already change the picture dramatically: the adaptivity gap can be polynomially large~\citep{jiang2019cost}, in contrast to the independent setting where it is bounded by a constant \cite{nielsen2025non}. Accordingly, meaningful guarantees for non-adaptive algorithms must be stated relative to the optimal non-adaptive benchmark, rather than the adaptive optimum.

Along this axis, we introduce \matroidSBFC, where $f(S)=1$ iff $S$ contains a basis of a given matroid $\mathcal{M}=(U,\mathcal{I})$. This captures $k$-of-$n$ certification (uniform matroid), partition-constrained certification (partition matroid), and spanning-tree certification (graphic matroid) as special cases. We give a tight $\Theta(\log n)$-approximation for non-adaptive matroid-\SBFC under arbitrary distributions, based on an LP relaxation with knapsack-cover inequalities. For the important special case of $k$-uniform matroids, we improve this to a $4.642$-approximation via a connection to the $k$-Min Sum Set Cover problem~\citep{feige2004approximating, bansal_improved_2020}.

\medskip
\noindent \textbf{Axis II: Exploiting correlation structure.}
Our second axis takes the complementary viewpoint: we fix the target function to be $k$-of-$n$ and instead exploit additional structure in the distribution $p$. This direction is motivated by applications such as \emph{drug discovery}, where candidate compounds may share latent biological pathways: if a pathway is active, then several compounds that depend on it are likely to be active as well \cite{kainkaryam2009pooling}. The goal is to identify $k$ active compounds using as few assays as possible. Such dependencies arise from shared latent causes, precisely the kind of structure captured by graph and hypergraph-based probing models.

Concretely, we introduce the \emph{Graph Probing} (\gp) problem: given a graph $G=(V,E)$ with independent Bernoulli vertex variables $\{Y_v\}_{v \in V}$, each edge variable $X_e = Y_u \vee Y_v$ can be probed, and the goal is to find $k$ active edges, if they exist, using as few probes as possible. The observed edge variables are correlated through shared endpoints, even though the latent vertex variables are independent. We give an adaptive $O(\log k)$-approximation for Graph Probing, and complement this with a polynomial lower bound on the adaptivity gap, showing that adaptivity is essential in this model. We further generalize to \emph{Hypergraph Probing}, where each observed variable is the OR of up to $\rho$ independent latent variables, obtaining an $O(\rho^2 \log k)$-approximation.

Additionally, when the distribution satisfies \emph{Conditional Negative Association} (CNA), a natural negative dependence condition, we give a $2$-approximation for non-adaptive $1$-of-$n$ \SBFE, improving the tight $4$-approximation of \cite{kaplan2005learning} for this class of distributions.

\subsection{Problem definitions}\label{sec:problem-definitions}

Having already defined the Stochastic Boolean Function Evaluation (\SBFE) and Stochastic Boolean Function Certification (\SBFC) problem, we now formally define the additional problems studied in this paper. 

\medskip
\noindent{\bf Matroid-\SBFC (\matroidSBFC).} 
We are given a matroid $\mathcal{M} = (U, \mathcal{I})$. Let $\mathcal{S}\subseteq 2^{U}$ denote all the spanning sets of $\mathcal{M}$, i.e., sets that contain a basis. We are given a probability distribution $p:\mathcal{S}\rightarrow [0,1]$ which identifies the joint distribution over the active elements. The goal is to  probe elements until a basis of active elements has been identified, minimizing the expected number of probes. This captures $k$-of-$n$ certification (uniform matroid with rank $k$), partition-constrained certification (partition matroid), and spanning tree certification (graphic matroid) as special cases.

We note that for the special case of $k$-of-$n$ (i.e., $k$-uniform matroids), our results can certify both cases whether the number of active elements is at least $k$ or less than $k$. Certifying $|S| \ge k$ reduces to finding $k$ active elements, while certifying $|S| < k$ is equivalent to finding $n - k + 1$ inactive elements, which is itself a $(n-k+1)$-of-$n$ instance on the complemented distribution. Thus, for $k$-of-$n$, our algorithms solve the $k$-of-$n$-\SBFE problem.

\medskip
\noindent{\bf Graph Probing (\gp).}
The input is a simple undirected graph $G = (V, E)$ with independent Bernoulli vertex variables $\{Y_v\}_{v \in V}$, where $\Pr(Y_v = 1) = p_v$ is known for each $v$. Each edge $e = (u, v) \in E$ has an associated observed variable $X_e = Y_u \lor Y_v$. Only the edge variables can be probed. The goal is to adaptively probe edges until $k$ active edges (with $X_e = 1$) have been found, minimizing the expected number of probes.

\medskip
\noindent{\bf Hypergraph Probing (\hgp).} We also consider a natural generalization of Graph Probing where correlations among observed variables arise from a known hypergraph rather than a graph. Here, we are given a set $L = \{Y_1, \ldots, Y_m\}$ of independent latent Bernoulli variables with $\Pr(Y_j = 1) = p_j$, and a set $R = \{X_1, \ldots, X_n\}$ of observed variables. Each observed variable $X_i$ corresponds to a hyperedge $N(i) \subseteq L$ in a hypergraph $H = (L, R)$, with $X_i = \bigvee_{j \in N(i)} Y_j$. That is, an observed variable is active if and only if at least one of its underlying latent variables is active. Only the observed variables can be probed, and the goal is to find $k$ active observed variables while minimizing the expected number of probes. We define the \emph{rank} $\rho := \max_i |N(i)|$ to be the maximum hyperedge size. Note that Graph Probing is the special case where every hyperedge has size exactly $2$. 

\medskip
\noindent{\bf Conditional Negative Association (CNA).} A distribution $p$ over $2^U$ is said to satisfy CNA if, for any disjoint subsets $A, B \subseteq U$ and any conditioning on the elements outside $A \cup B$, the events $\{e \in S\}$ for $e \in A \cup B$ are negatively associated. Under CNA distributions, we study the $1$-of-$n$ special case of \SBFE.

Throughout, we assume access to the distribution $p$ through either a \emph{sampling oracle} that samples a scenario $S$ from $p$, or through a \emph{joint probability oracle}: given subsets $U_0, U_1 \subseteq U$, this oracle returns the probability $\sum_{S: U_1 \subseteq S, U_0 \cap S = \emptyset} p_S = \Pr_p\left(X_e = 1 \ \forall \ e \in U_1, X_e = 0 \ \forall \ e \in U_0\right)$ that each element in $U_1$ is active and each element in $U_0$ is inactive. These oracles are equivalent up to $\mathrm{poly}(n)$ oracle calls (see \cref{sec: oracle-equivalence}).

\subsection{Results and techniques}\label{sec:results-techniques}

We now state our results formally. 
Our first result is a tight logarithmic approximation for non-adaptive 
\matroidSBFC under arbitrary correlations.

\begin{restatable}{theorem}{NonAdaptiveLogApprox}
    \label{thm: na-log-approximation}
    There is a polynomial-time $O(\log n)$-approximation algorithm for 
    non-adaptive \matroidSBFC for arbitrary distributions. Further, 
    unless $\mathrm{P} = \mathrm{NP}$, there is no $o(\log n)$-approximation algorithm for non-adaptive \matroidSBFC, 
    even for partition matroids.
\end{restatable}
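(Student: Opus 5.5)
For the upper bound I will write a time-indexed LP relaxation in the style of min-sum set cover. Let $x_{e,t}$ indicate that element $e$ is probed by time $t$, and require $\sum_e x_{e,t}\le t$ and $x_{e,t}\le x_{e,t+1}$; equivalently, the vector $x_{\cdot,t}$ lies in $t$ times the $0/1$ cube intersected with a cardinality bound. For each scenario $S$ in the support of $p$ and each time $t$, let $z_{S,t}\in[0,1]$ indicate that $S$ is not yet covered (no basis of $S$ has been probed) by time $t$. The objective is $\sum_S p(S)\sum_t z_{S,t}$.

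Coverage is enforced by knapsack-cover inequalities adapted to matroids. For every $A\subseteq U$,
\[
\sum_{e\in S\setminus \mathrm{span}(A)} x_{e,t} \;\ge\; \bigl(r(U)-r(A)\bigr)\,(1-z_{S,t}).
\]
An alternative I would consider is the submodular-cover form $\sum_{e\in S\setminus A} x_{e,t}\,(r(A+e)-r(A)) \ge (r(U)-r(A))(1-z_{S,t})$. Both are valid for integral schedules: if some basis $B\subseteq S$ has been probed, then $B$ contributes at least $r(U)-r(A)$ elements outside $\mathrm{span}(A)$.

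Since the support of $p$ may be exponential, I will not write $z_{S,t}$ explicitly. Instead I move to the dual, or use the sample-average/oracle approach: the quantity $\mathbb{E}_S[\max(0,\text{uncovered fraction})]$ is a convex function of $x_t$, and it can be optimized via its separation problem. For a fixed $x$ and $S$, the most violated $A$ is obtained by minimizing $\sum_{e\in S\setminus \mathrm{span}(A)}x_e - (r(U)-r(A))$ over $A$. This is a submodular-type minimization over flats, which can be solved with matroid intersection or submodular minimization. Expectations over $p$ are estimated with the sampling oracle, which costs a $(1+\epsilon)$ factor. I expect this step, making the LP polynomially solvable under arbitrary correlations, to be the main technical obstacle. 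The fallback is to use the submodular-cover inequalities, since for those separation is submodular minimization of $g(A)=r(U)-r(A)-\sum_{e\in S\setminus A}x_e(r(A+e)-r(A))$, which is handled by standard arguments.

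For rounding, I work in geometric phases $t=2^j$. In phase $j$ I take $\alpha=O(\log n)$ independent rounds. In each round I sample a set from the fractional point $x_{\cdot,2^{j+1}}$, viewed as a point in a scaled independence polytope, using randomized swap rounding or pipage rounding, whose negative correlation gives Chernoff-type concentration. Each phase appends its sampled elements to the order, so phase $j$ probes $O(2^j\log n)$ elements. Fix a scenario $S$ with $z_{S,2^j}\le 1/2$. The inequalities for all $A$ mean that $x$ restricted to $S$ fractionally covers the rank deficit of every flat, which is essentially $\frac12$ times a point of the spanning-set polytope of $\mathcal{M}|S$. Rounding $O(\log n)$ times therefore yields a spanning set of $S$ with probability $1-1/\mathrm{poly}(n)$. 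I will argue this by repeatedly raising the rank: each round raises the rank of the chosen set within $S$ by a constant fraction of the remaining deficit in expectation, by submodularity of the rank function applied to the contracted matroid, so $O(\log r)\le O(\log n)$ rounds suffice.

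Charging then follows the standard min-sum set cover argument. Scenarios uncovered after phase $j$ pay $O(2^{j}\log n)$, and $\Pr[\text{uncovered at phase } j]\le \Pr[z_{S,2^{j-1}}>1/2]+1/\mathrm{poly}(n)$. Summing over $j$ gives cost $O(\log n)\cdot\mathrm{LP}\le O(\log n)\cdot\OPT$.

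For the lower bound I reduce from set cover, whose $(1-\epsilon)\ln n$ hardness is due to Feige/Dinur–Steurer. Given sets $T_1,\dots,T_m$ over elements $1,\dots,N$, create a partition matroid with one part per set-cover element $i$, each part having capacity 1, where part $i$ has a copy $(i,j)$ for each set $T_j\ni i$. Let the scenario distribution mostly be a "hard" scenario in which all copies are active, plus low-weight scenarios that force any good order to "buy" a whole set $j$ at once. To do this, make the copies of set $j$ effectively a single group: in each scenario only one set index $j$ is active, so covering all parts requires probing copies whose set indices form a cover. The intended consequence is that a non-adaptive order of cost $C$ yields a set cover of size about $C/\text{(part count)}$, and conversely. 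Getting the gap to survive the min-sum objective (rather than min-max) is the delicate part. I would use the known $\Omega(\log n)$ hardness of min-sum-type covering with scenarios, padding with heavy scenarios so that the expected cost is dominated by the time to complete a full cover.
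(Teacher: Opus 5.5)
Your upper bound follows the paper's route and is sound. You use a time-indexed LP with matroid knapsack-cover inequalities; your family over all $A\subseteq U$ with $S\setminus\mathrm{span}(A)$ is equivalent to the paper's family over $T\subseteq S$. You solve it by sampling plus submodular minimization (computing $u_{S,t}(x)$ is really a ratio minimization, done by a parametric search over $\lambda$, not just the difference you wrote), and you round in geometric epochs with an $O(\log n)$ boost.

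Your variant ($O(\log n)$ independent rounds at marginals $x_{\cdot,2^{j+1}}$, with the rank deficit decaying geometrically via contraction) works. However, the ``constant fraction per round'' step is not plain submodularity. You need the Calinescu et al.\ type bound: if $2y$ lies in the dominant of the base polytope, sampling at $y$ gives expected rank at least $(1-e^{-1/2})$ times the rank. Apply it to $(\mathcal{M}/C)|_S$, which is legitimate because your inequalities with $A=C\cup T$ put $2x|_S$ in that dominant. The paper applies the same lemma once, with $\beta=2\log n$.

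The genuine gap is the lower bound; the construction you describe does not give a valid reduction.
\begin{itemize}
\item If in each scenario only one set index $j$ is active, the active set $\{(i,j): i\in T_j\}$ meets only the parts $i\in T_j$. It is then not spanning, which violates the certification promise. If several indices are active, nothing forces the probed copies to form a set cover.
\item The all-active ``hard'' scenario is solved with one probe per part, so it carries no hardness.
\item The claimed correspondence between cost $C$ and a cover of size about $C/(\text{part count})$ is never derived. Probing the copies of set $j$ costs $|T_j|$, not one unit per set.
\item You defer the min-sum versus min-max issue to an unnamed ``known $\Omega(\log n)$ hardness of min-sum covering with scenarios.'' The standard min-sum covering problems (MSSC, generalized MSSC) have constant-factor approximations, so this is exactly what must be proved, not cited.
\end{itemize}

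The paper's missing idea runs as follows.
\begin{itemize}
\item Take $N$ disjoint copies $U_\ell$ of the hitting-set ground set as parts of capacity $1$.
\item Draw a uniformly random permutation $\pi$ and make $e_\ell$ active iff $e\in A_{\pi(\ell)}$. Every scenario is then spanning, and $\E[\OPT]\le N\cdot \OPT_{\mathtt{HS}}$ by probing all copies of an optimal hitting set.
\item For any fixed order, let $\Gamma_\ell$ be the shortest prefix within $U_\ell$ that is a hitting set, and let $\tau=\frac N2\min_\ell|\Gamma_\ell|$.
\item At least $N/2$ parts have $\Gamma_\ell$ incomplete within the first $\tau$ probes. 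Each such part has a set $A_{\phi(\ell)}$ that its truncated prefix misses.
\item A second-moment (Paley--Zygmund) argument shows that with probability at least $1/4$ some such part has $\pi(\ell)=\phi(\ell)$, forcing cost greater than $\tau$. So $\E[\ALG]\ge \frac N8\min_\ell|\Gamma_\ell|$.
\item The best $\Gamma_\ell$ is therefore an $O(\alpha)$-approximate hitting set, and $n=N^2$ keeps $\log n=\Theta(\log N)$.
\end{itemize}
The randomly permuted copies are what turn the min-sum objective into a min-max type guarantee. Your sketch has no substitute for them.
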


This result is based on first formulating the linear program and applying randomized rounding. The linear program formulation (see \ref{lp: na-strong} in \cref{sec: na-log-upper-bound}) combines \emph{knapsack cover} inequalities along with matroid constraints, leading to a formulation with exponential number of variables as well as constraints. We reformulate the linear program as a convex program where the complexity of the exponential number of variables and constraints is transferred to the objective. While the objective now has exponential number of terms, we show that a random estimate of the gradient can be obtained via parametric submodular function minimization~\cite{iwata2001combinatorial}. Then we can employ standard algorithms for stochastic optimization to solve the program in polynomial time. 

Our rounding algorithm relies on the following sampling result for matroids (\cref{lem: matroid-randomized-rounding}): given a fractional point $y$ in the spanning set polyhedron of matroid, independently rounding every element with probability $\alpha \cdot y$ results in a set with expected rank at least $(1-e^{-\alpha})$ times the rank of the matroid. The sampling result is implicit in \cite{calinescu_maximizing_2007} where it is stated for $\alpha=1$ and we formally state and prove for every $\alpha\geq 1$. Our rounding algorithm rounds the fractional solution after scaling it by $\alpha=O(\log n)$. This allows us to show that the fractional coverage, over  time, of the linear programming solution is matched by the coverage of the integral solution, if allowed $O(\log n)$ factor additional time giving the claimed guarantee.

For the important special case of $k$-of-$n$~\SBFC, we improve the approximation ratio to a constant by utilizing the connection to the $k$-Min Sum Set Cover (\kmssc) problem. In this case, our linear program is essentially the same as one for  \kmssc by~\cite{bansal2010constant}, though, exponential in size. Additionally, we show how to obtain the same guarantee for the general $k$-of-$n$~\SBFE problem as well. 

\begin{restatable}{theorem}{NonAdaptiveKofN}\label{thm: k-of-n-nonadpative-approximation}
    There is a polynomial-time non-adaptive $4.642$-approximation algorithm for both $k$-of-$n$ \SBFC and \SBFE for arbitrary distributions.
\end{restatable}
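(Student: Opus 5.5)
Non-adaptive $k$-of-$n$ \SBFC is a $k$-Min Sum Set Cover instance with exponentially many hyperedges, so the plan is to run the Bansal--Gupta--Krishnaswamy (BGK) LP-rounding approach for \kmssc on that instance. The reduction works as follows: each scenario $S$ in the support of $p$ is a "hyperedge" with weight $p(S)$ and covering requirement $k$. The cost of a permutation $\pi$ is $\sum_S p(S)\, t_S(\pi)$, where $t_S(\pi)$ is the first time $k$ elements of $S$ have been probed. This is exactly weighted \kmssc, with weights given by $p$.

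For \SBFE, scenarios with $|S|\ge k$ require $k$ probed active elements, and scenarios with $|S|<k$ require $n-k+1$ probed inactive elements. These are two \kmssc-type families, one on $S$ and one on $U\setminus S$ with different requirements. The BGK analysis handles per-hyperedge requirements $k_S$, so both families fit in one instance.

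\textbf{The LP.} Variables $x_{e,t}$ say that element $e$ is probed at time $t$, with $\sum_t x_{e,t}=1$ and $\sum_e x_{e,t}=1$. Variables $y_{S,t}$ indicate that $S$ is covered by time $t$. The knapsack-cover constraints are
\[
\sum_{e\in S\setminus A}\sum_{t'<t} x_{e,t'} \;\ge\; (k_S-|A|)\, y_{S,t}
\qquad \text{for all } A\subseteq S,\ |A|<k_S .
\]
The objective is $\sum_S p(S)\sum_t (1-y_{S,t})$. This LP has exponentially many $y$-variables and constraints. Since the number of time steps is only $n$, I would eliminate $y$. For fixed $x$, the optimal $y_{S,t}$ is $\min\{1,\min_A \frac{\sum_{e\in S\setminus A}\sum_{t'<t}x_{e,t'}}{k_S-|A|}\}$. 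The minimizing $A$ consists of the $|A|$ elements of $S$ with the largest fractional mass, so $y_{S,t}$ is a concave function of $x$ computable in polynomial time for each $S$.

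The objective becomes a convex function $F(x)=\mathbb{E}_{S\sim p}\bigl[\sum_t(1-y_{S,t}(x))\bigr]$ over the Birkhoff-type polytope. It is an expectation over $S$, and we can sample $S$ using the sampling oracle. A sampled $S$ gives an unbiased subgradient, since a subgradient of the min-of-linear expression comes from the minimizing $A$. Its norm is bounded by $\mathrm{poly}(n)$, and $F\ge 1$, so projected stochastic subgradient descent gives a $(1+\epsilon)$-approximate LP solution in polynomial time with high probability. This is the same scheme sketched for \cref{thm: na-log-approximation}, but simpler, because there is no matroid and therefore no submodular minimization.

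\textbf{Rounding.} Apply the BGK rounding. For each scale $t=2^j$ (or a geometric grid with ratio chosen by the constant analysis), independently include each $e$ with probability $\min\{1,\alpha\sum_{t'<t}x_{e,t'}\}$. Append the selected elements in order of $j$, and drop repeats. The analysis is per scenario $S$. Let $t_S$ be the LP "half-covered" time, i.e.\ the first $t$ with $y_{S,t}\ge$ some threshold. Using the KC constraint with $A$ equal to the heavy elements, the expected number of elements of $S$ picked at scale $t\ge t_S$ is at least $k_S$ with constant probability. A Chernoff/Poisson argument then shows that the failure probability decays geometrically across scales. Scale $j$ costs $O(\alpha 2^j)$ probes, so the expected cover time of $S$ is $O(1)\cdot\sum_t(1-y_{S,t})$. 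This bound is per scenario, so it is unaffected by correlations, and summing with weights $p(S)$ gives the result.

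To reach the specific constant $4.642$, I would reuse the refined analysis of \cite{bansal_improved_2020} for \kmssc: a randomized choice of the threshold and geometric ratio, optimized numerically. That argument is also entirely per-hyperedge, with weights irrelevant. The main obstacle is checking that the LP solved only approximately via stochastic optimization still supports that constant-tight analysis. This costs a factor $(1+\epsilon)$, which is absorbed because $4.642$ is presumably a rounded-up constant. A second point to check is that \cite{bansal_improved_2020}'s LP matches ours, including the KC form and per-edge $k_S$, so that their rounding lemma applies verbatim.
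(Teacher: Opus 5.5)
Your proposal is correct, and for \SBFC it is the paper's argument. The paper uses the same time-indexed LP with knapsack-cover inequalities, solves it to additive $\varepsilon$ by projected stochastic subgradient descent with one sampled scenario per step, and then applies the per-scenario $4.642$ guarantee of the kernel rounding of \cite{bansal_improved_2020}. Your subgradient oracle is just the uniform-matroid specialization of the paper's parametric submodular minimization: for fixed $|A|=a$ the minimizing $A$ is the $a$ heaviest elements.

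The genuine difference is how you handle \SBFE. The paper uses a factor-$2$ shortcut. Using the complement symmetry $k \leftrightarrow n-k+1$, it assumes $k<n/2$. On scenarios with $|S|<k$, every order pays at most $n \le 2(n-k+1)$, while the optimum pays at least $n-k+1$. So the paper runs the LP and rounding only on the scenarios with $|S|\ge k$, which is a plain $k$-\mssc-type instance with the single requirement $r^*=k$.

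You instead encode \SBFE exactly as one weighted generalized min-sum set cover instance. Each scenario with $|S|\ge k$ becomes hyperedge $S$ with requirement $k$, and each scenario with $|S|<k$ becomes hyperedge $U\setminus S$ with requirement $n-k+1$. This needs the rounding of \cite{bansal_improved_2020} for heterogeneous requirements $k_e$. That is precisely the setting (GMSSC) in which their $4.642$ bound is proved, so the check you flagged goes through.

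Your route is exact and uniform: no case split on $k$ versus $n/2$ and no complementation. The paper's route keeps the LP identical to the matroid LP with one requirement, and pays a harmless factor $2 \le 4.642$ on the trivially handled family.
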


Designing adaptive algorithms is typically more challenging, and \cite{jiang2019cost} showed that there is no adaptive $o(n^{0.16})$-approximation algorithm for \kCSFE under arbitrary correlations that uses a polynomial number of oracle calls. This raises the natural question: \emph{Can additional structure in the correlation model lead to improved guarantees for adaptive algorithms?} We answer this in the affirmative by giving an adaptive $O(\log k)$-approximation for the Graph Probing model:
\begin{restatable}{theorem}{graphProbing}\label{thm: graph-probing}
    There is a polynomial-time adaptive $O(\log k)$-approximation algorithm for Graph Probing, relative to the optimal adaptive algorithm.
\end{restatable}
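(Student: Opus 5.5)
The plan is to reduce the guarantee to an adaptive covering argument whose integer-valued utility is capped at $k$, so that a greedy policy loses only a logarithmic factor in $k$. Two facts shape it. First, every policy probes at least $k$ active edges, so $\OPT \ge k$ (on realizations with at least $k$ active edges). It therefore suffices to control the probes spent on \emph{inactive} edges, plus an $O(\log k)$ factor from greediness. Second, the latent structure is simple. Probing an inactive edge $(u,v)$ reveals $Y_u=Y_v=0$, which deletes both endpoints and makes every remaining edge at $u$ or $v$ active iff its other endpoint is active. Probing an active edge only reveals $Y_u \vee Y_v = 1$. Because the $Y_v$ are independent, the posterior after any history is a product distribution on the surviving vertices, conditioned on a family of ``OR-constraints'' (one per active edge seen). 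I would first make this explicit as a structural lemma. Its role is to let the algorithm compute, in polynomial time, $\Pr(X_e=1 \mid \text{history})$ for every edge $e$.

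The algorithm is adaptive greedy for a surrogate utility $g$ defined on partial realizations. The utility counts certified active edges, capped at $k$. It also counts ``implied'' progress: once $u$ is known to be inactive (from an inactive probe), each edge $(u,w)$ becomes a proxy for testing $w$ alone, and a vertex $w$ certified active contributes $\min(\deg(w),\,\cdot)$ future guaranteed active edges. At each step the policy probes the edge maximizing the expected increase of $g$ under the current posterior. I would then apply the Golovin--Krause adaptive submodular cover bound, which gives a $(\ln Q+1)$-approximation for integral utilities with maximum $Q=k$, hence $O(\log k)$. To make that theorem apply, I would view the true underlying realization as the vertex vector $Y$ rather than the edge vector, so that independence holds at the latent level and observations are deterministic functions of $Y$.

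The key steps in order are as follows.
\begin{enumerate}
\item Prove the posterior/structural lemma and the lower bound $\OPT \ge \max\{k,\ \text{expected number of inactive probes forced}\}$.
\item Define $g$ and check that it is adaptive monotone and that reaching $g=k$ certifies $k$ active edges. Any unverified but implied edges are then collected by extra probes, each active, costing at most $k \le \OPT$.
\item Verify adaptive submodularity of $g$ with respect to the latent product distribution.
\item Invoke the adaptive greedy cover bound.
\end{enumerate}

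I expect step (3) to be the main obstacle. The observation on an active edge is an OR of two latent bits, which is not a ``full reveal'' of an item, and such partial observations typically break adaptive submodularity. My first attempt is to change the action set so each action does reveal latent bits. Define a macro-action ``test $v$'': probe edges $(v,w)$ with already-known-inactive $w$ if one exists, which reveals $Y_v$ exactly; otherwise probe an arbitrary edge at $v$ and charge it. I would then show that any adaptive policy can be simulated by macro-actions with an $O(1)$ blow-up, since an inactive probe pays for itself by killing two vertices and an active probe is already counted in $k$. If adaptive submodularity still fails, the fallback is a direct phase-based charging argument. It would group greedy's steps by the value $k-g$ halving and show that in each of the $O(\log k)$ phases greedy's cost is $O(\OPT)$, by comparing greedy's per-probe expected gain with the rate of the optimal policy run from the current posterior.
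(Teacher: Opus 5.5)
Your proposal has a genuine gap at step (3), and neither the macro-action repair nor the fallback closes it. Every observation is an OR, so an active edge raises the posterior of both endpoints, and hence of every edge sharing them. This positive correlation defeats adaptive submodularity and myopic greedy.

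Here is a concrete instance. Take disjoint stars with $k=\sqrt n$ leaves each. Centers are active with probability $p=\log n/\sqrt n$ and leaves with probability $\epsilon \ll p$. Add many isolated edges, each active with probability $2p$.
\begin{itemize}
\item In this instance no vertex is known inactive before an inactive probe. Under your $g$, as you describe it, an active edge therefore earns only its own unit.
\item A star edge has expected gain about $p$ and an isolated edge about $2p$.
\item Once one star edge is seen active, the gain of every other edge in that star jumps to nearly $1$, which violates adaptive submodularity.
\item Greedy therefore probes only isolated edges and pays about $k/(2p)=\Theta(n/\log n)$.
\item Probing one edge per star and then finishing the first star that lights up costs $O(1/p+k)=O(\sqrt n)$. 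This is a polynomial gap, not $O(\log k)$.
\end{itemize}

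Your macro-action does not rescue this. \emph{Test $v$} reveals $Y_v$ only if $v$ already has a known-inactive neighbour, which is never the case at the start. Otherwise it probes an edge whose outcome is again an OR. Your $O(1)$ simulation claim (an inactive probe \emph{pays for itself by killing two vertices}) has no accounting behind it, and the fallback phase argument is only a statement of intent.

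There is also a complexity problem. The structural lemma does not give polynomial-time posteriors. Conditioning the product measure on $Y_u\vee Y_v=1$ for the observed active edges requires the probability that a random vertex set covers the observed active subgraph. That is a weighted vertex-cover count, which is \#P-hard for general graphs.

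The missing idea is the paper's way of simulating a vertex probe by edge probes while keeping the residual latent variables exactly independent.
\begin{itemize}
\item \textsc{ProbeVertex}$(v)$ probes the edges at $v$ until one is $0$, which makes both endpoints $0$.
\item Or all are $1$: then $\deg_v$ genuine active edges have been found, and $b_v=1$ is reported even if $Y_v=0$. This misreport can only help the knapsack policy.
\item It then recursively calls itself on every neighbour reached through an active edge.
\item Afterwards no residual vertex touches a probed edge. The residual instance is again Graph Probing with independent vertices and their original marginals, so no posterior is ever computed.
\item The recursion tree gives at most one more zero than ones per top-level call.
\end{itemize}

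The rest of the argument combines this with three facts:
\begin{itemize}
\item Any edge policy can be simulated by a vertex policy at a factor-$2$ loss.
\item There is a known $3$-approximation for stochastic min-knapsack on vertex degrees. This is exactly your intuition that a certified active vertex contributes $\deg$.
\item Each phase therefore makes at most $6\,\OPT$ top-level calls and, using $M_1\le k\le \OPT$, at most $8\,\OPT$ probes.
\end{itemize}
An active edge is counted by at most two vertices with $b_v=1$, so each phase at least halves the residual requirement. The $O(\log k)$ comes from these $1+\log_2 k$ phases, not from a greedy cover bound.
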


Our algorithm exploits the graph structure through a two-way reduction between Graph Probing and a \emph{Vertex Probing} problem. If we could probe vertex $v$ directly, observing $Y_v = 1$ would immediately certify $\deg_v$ active edges, reducing the problem to \emph{Stochastic min-Knapsack} on independent variables, for which an adaptive $3$-approximation is known \citep{deshpande2016approximation}. The reduction from Graph Probing to Vertex Probing is straightforward: each time we probe an edge, we probe both its endpoints. The reverse direction is more delicate. Naively probing all edges incident to a vertex destroys the independence of neighboring vertex variables, so the residual instance is no longer a valid Vertex Probing instance. We resolve this via a recursive probing strategy that maintains a key \emph{independence invariant}: conditioned on all probes so far, the remaining vertex variables are independent. The $O(\log k)$ factor arises from running $O(\log k)$ phases, each halving the residual knapsack requirement.

We further complement this with a lower bound on the adaptivity gap, showing that adaptive algorithms are significantly more efficient in this model.
\begin{restatable}{lemma}{AdaptivityGapGraphProbing}\label{lem: adaptivity-gap-graph-probing}
    The adaptivity gap for Graph Probing on $n$ vertices is $\Omega\left(\frac{\sqrt{n}}{\log n}\right)$.
\end{restatable}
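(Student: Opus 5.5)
We prove the bound with an explicit instance on which an adaptive policy quickly finds one active vertex and then uses its whole star, while any fixed order of edges collects active edges only at the ``average'' rate. We take $G = K_n$ with $p_v = q := \frac{c\log n}{\sqrt n}$ for every vertex and $k = \lceil n/2\rceil$. We set the parameters so that, with probability $1-n^{-\Omega(1)}$, the number of active vertices lies in $[\tfrac12 q n, 2qn]$; the upper end is $\Theta(\sqrt n\log n)$. On the complementary event a policy may probe all $\binom n2$ edges (or declare infeasibility), which adds only $O(1)$ to the expectation of both benchmarks. If the problem convention needs every realization to be feasible, we condition on at least one active vertex; this changes probabilities only by a factor $1\pm n^{-\Omega(1)}$.

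\textbf{Adaptive upper bound $O(n)$.} Fix a perfect matching $M$ of $K_n$ and probe its edges until some $(u,v)\in M$ is active. Each matching edge is active independently with probability about $2q$, so this takes $O(1/q) = O(\sqrt n/\log n)$ probes in expectation. Next we decide which endpoint is active. We probe $O(\log n)$ edges $(u,w)$ to fresh vertices $w$. If $u$ is active, all of these edges are active. If $u$ is inactive, each is active only with probability $q$, so all $O(\log n)$ of them being active has probability $n^{-\Omega(1)}$. Once an active vertex $u$ is identified, we probe all $n-1$ edges at $u$; they are all active, which gives $k$ active edges. The total expected cost is $O(n)$, and the failure events contribute $n^{-\Omega(1)}\cdot n^2 = O(n)$ after choosing $c$ large.

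\textbf{Non-adaptive lower bound $\Omega(n^{3/2}/\log n)$.} Fix any order of edges and let $E_t$ be its first $t$ edges, with $t \le \varepsilon n^{3/2}/\log n$. The number of active edges in $E_t$ is at most $\sum_{v \text{ active}} \deg_{E_t}(v)$. Its expectation is $q\cdot 2t = O(\varepsilon n)$, which is below $k/4$ for small $\varepsilon$. To get an expected-cost bound, we need that with constant probability fewer than $k$ of these edges are active. Markov's inequality alone on $\sum_v Y_v \deg_{E_t}(v)$ already gives $\Pr[\ge k]\le 1/2$. Hence the non-adaptive policy has not finished by time $t$ with probability at least $1/2$, and its expected cost is $\Omega(n^{3/2}/\log n)$. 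Dividing by the adaptive cost $O(n)$ gives a gap of $\Omega(\sqrt n/\log n)$.

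\textbf{Main obstacle.} The delicate step is making the conditioning and infeasibility accounting rigorous without losing more than the $\log n$ factor. First, $k$ active edges must exist with high probability. Second, the adaptive identification step must fail only with polynomially small probability, which is why $q$ carries the $\log n$ factor. Third, the Markov bound in the non-adaptive analysis must remain valid after conditioning on feasibility. I would handle all three by working under the unconditioned product measure and showing that each bad event has probability $n^{-\Omega(1)}$. Since every policy's cost is at most $n^2$, the bad events change each expected cost by only lower-order terms.
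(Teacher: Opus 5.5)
Your proof is correct, but it uses a different hard instance from the paper.

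\textbf{The paper's instance.} The paper takes a disjoint union of about $\sqrt{n}$ stars, each with $k=\sqrt{n}$ leaves. Leaves are never active, and each center is active with probability $p=\log n/\sqrt{n}$. Because the leaves are deterministic, one active edge reveals an entirely active star. The adaptive policy is therefore trivial: probe one edge per star, then finish the first active star, for a cost of $O(\sqrt{n})$. The non-adaptive bound is exactly your Markov argument: with $T=k/(2p)$, any prefix of length $T$ has $k/2$ active edges in expectation, so the cost is $\Omega(n/\log n)$. In that sparse instance, the $\log n$ in $p$ is what makes the no-active-star event have probability $O(1/n)$. This matters because on that event every policy must probe all $\approx n$ edges.

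\textbf{What your instance costs and buys.} Your dense instance, $K_n$ with i.i.d.\ vertices and $k=\lceil n/2\rceil$, costs you an endpoint-identification step. In exchange, feasibility is nearly automatic, since a single active vertex already yields $n-1\ge k$ active edges. As a result, the $\log n$ factor is not needed in your construction at all. With $q=c/\sqrt{n}$, your two arguments give adaptive cost $O(n)$ and non-adaptive cost $\Omega(n^{3/2})$, a gap of $\Omega(\sqrt{n})$. With $q=\Theta(\log n/n)$ they even give $\Omega(n/\log n)$.

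\textbf{Points to correct or tidy.}
\begin{itemize}
\item Your stated reason that $q$ ``carries the $\log n$ factor'' because of identification is backwards. Identification fails with probability about $q^{m}$, which only shrinks as $q$ decreases.
\item The concentration of the number of active vertices in $[\tfrac12 qn,2qn]$ is never used and can be dropped.
\item For odd $n$, use a near-perfect matching.
\item If some test edge $(u,w)$ comes back inactive, then $u$ is inactive, so $v$ is the active endpoint whose star you harvest.
\item Randomized non-adaptive orders reduce to deterministic ones by averaging.
\end{itemize}

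\textbf{What the paper's sparsity buys.} The paper's instance has $\Theta(n)$ edges, so its gap is also $\Omega(\sqrt{|E|}/\log|E|)$ in terms of the number of probeable elements. Your $K_n$ instance with $q=c\log n/\sqrt{n}$ gives only roughly $|E|^{1/4}$ in that parameter.
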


We also consider a natural generalization of Graph Probing to hypergraphs, where each observed variable may depend on up to $\rho$ independent latent variables rather than just two. This captures settings where correlations arise from shared latent causes of higher arity. We obtain the following result for this more general model:

\begin{restatable}{theorem}{hypergraphProbing}
    \label{thm: hypergraph-probing}
    There is a polynomial-time adaptive $O(\rho^2 \log k)$-approximation algorithm for Hypergraph Probing, relative to the optimal adaptive algorithm, where $\rho$ is the rank of the hypergraph.
\end{restatable}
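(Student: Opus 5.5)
Our plan is to carry the Graph Probing argument of \cref{thm: graph-probing} over to hypergraphs. We reduce in both directions to a \emph{Latent Probing} problem. In that problem we may probe a latent variable $Y_j$ directly. Observing $Y_j=1$ certifies every observed variable $X_i$ with $j\in N(i)$, so it contributes ``weight'' $\deg(j)$ to a stochastic min-knapsack with target $k$. Since the $Y_j$ are independent, the adaptive $3$-approximation for stochastic min-knapsack \citep{deshpande2016approximation} applies to the latent problem, with two caveats. First, active latent variables may cover the same observed variable, so the knapsack weights overcount. Second, once some probes have been made, the residual instance changes shape.

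\textbf{Lower bound on OPT.} Take an optimal adaptive policy for \hgp. Simulate it in the latent model: whenever it probes $X_i$, probe all $\le\rho$ latents in $N(i)$, which reveals $X_i$. This gives a latent policy of cost at most $\rho\cdot\OPT$ that certifies $k$ active observed variables. Moreover, the set of active latent variables it finds has total degree at least $k$, because every certified $X_i$ contains an active latent. So the optimal latent knapsack cost is at most $\rho\cdot\OPT$.

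\textbf{Upper bound via phases.} We maintain the invariant that, conditioned on all probes so far, the unrevealed latent variables are independent. Note that observing $X_i=0$ fixes every $Y_j$, $j\in N(i)$, to $0$, which preserves independence. Observing $X_i=1$ only tells us that some $Y_j$ in $N(i)$ is $1$, which breaks independence. To keep the invariant, we simulate each latent probe of $Y_j$ recursively, as in the graph case. We probe the hyperedges containing $j$ one at a time. When an active edge $X_i$ is found, it is counted toward $k$, and we then resolve the remaining latents of $N(i)$ by recursive probing, in the same way as for graphs. When inactive edges are found, their latents are simply set to $0$. The idea is to charge each observed probe either to the knapsack progress it produces or to a latent probe in the knapsack policy, losing at most a $\rho$ factor per latent probe. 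This is where the second factor of $\rho$ comes from, giving $O(\rho^2)$ per phase. We then run $O(\log k)$ phases. In each phase we take the residual requirement $k'$, recompute latent degrees restricted to still-uncertified observed variables, and run the knapsack algorithm with target $k'/2$ (or cap the weights at $k'$). The overcounting from overlapping hyperedges costs at most a factor $\rho$ in weight, since each certified $X_i$ is counted by at most $\rho$ latents. So achieving knapsack weight $\rho k'/2$ certifies at least $k'/2$ new variables. Each phase therefore costs $O(\rho^2)\cdot\OPT$, because the residual optimum is at most the original optimum. Summing over the phases gives $O(\rho^2\log k)$.

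\textbf{Main obstacle.} The delicate step is the recursive simulation of a latent probe so that (i) independence of the unrevealed latents is preserved and (ii) the cost is charged with only an $O(\rho)$ loss. The difficulty arises when a revealed active $X_i$ leaves correlated uncertainty among up to $\rho-1$ other latents. I would first try to resolve this by conditioning on the revealed event and treating the latents of $N(i)$ as a single merged ``super-latent''. Alternatively, I would show that the knapsack policy's guarantee survives if each such correlated block is treated as already contributing weight, bounding the extra probes by $\rho$ times the block count.
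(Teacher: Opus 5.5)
Your architecture is the paper's: a latent knapsack with weights $\deg(j)$, the simulation bound latent-$\OPT\le\rho\cdot\OPT$, recursive probing to restore independence, and phases. The accounting that yields $O(\rho^2\log k)$ fails at the claim of $O(\log k)$ phases.

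Within a phase the knapsack weights are fixed at the phase start, and every latent reported active is credited its full degree. After finding $X_i=1$ you recurse into all of $N(i)$, and each such latent whose remaining hyperedges are all active is also reported active. So one certified $X_i$ can be credited up to $\rho$ times, and reported weight $W$ only guarantees $W/\rho$ new certifications. Running with target $k'/2$ (or with capped weights) therefore certifies only about $k'/(2\rho)$, not $k'/2$. Reaching weight $\rho k'/2$ would need target $\rho k'/2$, and then the latent optimum is no longer within $\rho\cdot\OPT$. For example, with disjoint hyperedges and small $p_j$, an active $X_i$ typically has a single active latent, and the latent optimum at target $\rho k'/2$ is about $\rho^2\OPT/2$. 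Recomputing degrees between phases does not touch this within-phase double counting. Each phase therefore shrinks $k'$ only by a $(1-\Theta(1/\rho))$ factor, so you need $\Theta(\rho\log k)$ phases. Combined with your $O(\rho^2)$ per phase, the total is $O(\rho^3\log k)$.

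The missing ingredient is that a phase costs only $O(\rho)\cdot\OPT$, because the recursion overhead is additive rather than a per-latent-probe factor. Your charging of ``at most a $\rho$ factor per latent probe'' is not valid as stated. One call at a latent $j$ whose $d$ hyperedges are all active can spawn $d(\rho-1)$ recursive calls, each ending in a $0$. The correct count uses the recursion tree: it has at most $1+(\rho-1)m_1$ nodes and each node sees at most one $0$, so $m_0\le(\rho-1)m_1+1$ per call. Summing over a phase, the ones number at most $k\le\OPT$ and the calls at most $3\rho\cdot\OPT$. That gives $O(\rho)\cdot\OPT$ probes per phase, and with $O(\rho\log k)$ phases, $O(\rho^2\log k)$ overall.

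Two further points. First, you never say what is fed back to the knapsack policy when all hyperedges at $j$ are active and $Y_j$ stays undetermined. The paper reports $b_j=1$; the reported weight then remains at most $\rho$ times the number of certified variables, since all of $j$'s hyperedges were probed and found active. It then argues that such over-reporting cannot increase the knapsack policy's number of probes. Second, the ``super-latent'' idea in your obstacle paragraph would not work. Conditioned on $X_i=1$, the latents of $N(i)$ are not independent, so merging them does not give an OR-of-independent-latents instance for the other hyperedges meeting $N(i)$. The recursive resolution you already describe is the right one, with the invariant that no residual latent lies in a probed hyperedge.
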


The algorithm generalizes the recursive probing framework from Graph Probing to hypergraphs. The key difference is that probing a single observed variable can now reveal information about up to $\rho$ latent variables, rather than just two. The two factors of $\rho$ in the approximation ratio arise from distinct sources: recovering the independence invariant after each probe requires a recursive procedure whose cost grows by a factor of $\rho$ (due to hyperedges of size $\rho$ branching into $\rho - 1$ further probes in the worst case), and the number of phases increases from $O(\log k)$ to $O(\rho \log k)$ since each phase can only reduce the residual knapsack requirement by a $(1 - 1/\rho)$ factor.

Finally, we consider a different structural assumption on the distribution. When the component states satisfy \emph{Conditional Negative Association} (CNA), a natural and well-studied form of negative dependence, we obtain an improved guarantee for the $1$-of-$n$ problem.

\begin{restatable}{theorem}{CNAApproximation}\label{thm: cna-2-approximation}
    There is a polynomial-time $2$-approximation algorithm for non-adaptive $1$-of-$n$ \SBFE for CNA distributions.
\end{restatable}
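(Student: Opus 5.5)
We would analyze the natural greedy order. For $1$-of-$n$ \SBFE a non-adaptive policy is a permutation $\pi$. It stops at the first active element, or after all $n$ probes if none is active, so its cost is $\sum_{t=0}^{n-1} R^\pi_t$ with $R^\pi_t := \Pr(X_{\pi(1)} = \cdots = X_{\pi(t)} = 0)$. The greedy algorithm picks, at step $t+1$, the unprobed element $e$ maximizing $q_{t+1} := \Pr(X_e = 1 \mid X_{\pi(1)} = \cdots = X_{\pi(t)} = 0)$. Each such value is a ratio of two joint-probability oracle calls, so the algorithm runs in polynomial time. By definition $R_{t+1} = R_t(1-q_{t+1})$. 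The classical Feige--Lov\'asz--Tetali histogram argument gives factor $4$ for arbitrary $p$, as in \cite{kaplan2005learning}. Our goal is to use CNA to replace one of the two lossy factors of $2$ in that argument by an exact comparison.

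The first structural step uses two facts. CNA is preserved under conditioning on the event that a prefix is all inactive. The indicators $\mathbf{1}[X_e=0]$ are non-increasing, so negative association makes the zero-events submultiplicative: for disjoint $A,B$ and any such conditioning, $\Pr(A\cup B \text{ all } 0) \le \Pr(A \text{ all } 0)\Pr(B \text{ all } 0)$. Applied to the greedy run, this gives an upper bound on greedy's survival: $R_{t+s} \le R_t \prod_{j=1}^{s} (1 - q'_j)$. Here the $q'_j$ are the conditional marginals, given $G_t$ all inactive (with $G_t$ the first $t$ greedy elements), of the next $s$ greedy elements. The analogous statement holds for any set $O$ of $s$ elements disjoint from $G_t$, conditioned on $G_t$ all zero. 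The point is that under CNA, a block of $s$ probes removes survival mass at least as fast as if those elements were independent with their current conditional marginals. That is exactly the regime in which sorting by marginals is optimal.

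The main comparison fixes a time $t$ and $q := q_{t+1}$. We would lower bound $\OPT$ by looking at the optimal order's prefixes $O_s$. The lower bound uses $\Pr(O_s \text{ all } 0) \ge \Pr(O_s \cup G_t \text{ all } 0) = R_t \Pr(O_s \setminus G_t \text{ all } 0 \mid G_t \text{ all } 0)$. Every element outside $G_t$ has conditional marginal at most $q$ by greedy's choice, so a union bound gives $\ge R_t(1 - sq)$. Summing over $s$ yields $\OPT \gtrsim R_t/(2q)$ at every $t$. We would then charge greedy's cost through prices: each unit of mass removed at step $t+1$ is charged $1/q_{t+1}$, giving $\ALG = \sum_t R_t q_{t+1}\cdot(1/q_{t+1})$. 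The standard histogram argument loses a factor $2$ in the time axis and a factor $2$ in the price axis. We would use the submultiplicative upper bound from the previous paragraph to show that greedy's own tail after time $t$ is no larger than that of an independent surrogate with marginals at most $q$. This should remove the time-axis halving, so that the histograms can be compared with only the factor $2$ coming from $\sum_s \max(0, 1-sq) \approx 1/(2q)$.

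The main obstacle is that under negative dependence, conditioning on zeros can \emph{increase} the remaining marginals. So $q_t$ need not be monotone, and the clean independent-case exchange argument does not directly apply. We would first try to prove a monotone-coupling lemma: the greedy survival curve $t \mapsto R_t$ is dominated by the curve obtained from a product distribution built from the conditional marginals at time $t$. We would then apply the histogram comparison to this surrogate. If that fails, the fallback is an LP-duality proof using the \mssc{} LP of \cite{feige2004approximating}, with CNA supplying the extra valid inequality $\Pr(A \text{ all } 0) \le \prod_{e\in A}\Pr(X_e = 0)$ in the dual-fitting step.
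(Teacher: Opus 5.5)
Your setup is sound: the greedy rule, its polynomial-time implementation via the joint-probability oracle, and the Feige--Lov\'asz--Tetali union-bound estimate $\Pr(O_s\text{ all }0)\ge R_t(1-sq_{t+1})$ are all correct. The gap is that the step carrying the theorem, going from $4$ to $2$, is only announced. It rests on ``this should remove the time-axis halving,'' on a coupling lemma you ``would first try,'' and on an LP fallback that is not specified.

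The announced mechanism also cannot work in the histogram framework. Without horizontal shrinking, the halved greedy price-histogram and $\OPT$'s histogram have the same width, so the comparison is forced to be pointwise over the same mass. That comparison fails even for independent variables, which are CNA and for which greedy is optimal. Take all marginals equal to $\epsilon<1/2$. The first greedy column has price $1/\epsilon$, halved to $1/(2\epsilon)$, while $\OPT$ covers that first $\epsilon$ of mass at time $1$. So no extra information about greedy's tail can rescue a pointwise comparison with a single factor $2$.

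The information you do extract is also too weak. In $R_{t+s}\le R_t\prod_j(1-q'_j)$ the $q'_j$ are only bounded \emph{above} by $q$. Domination by an independent surrogate ``with marginals at most $q$'' therefore gives no lower bound on greedy's rate of progress after time $t$.

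The missing idea is to compare rates directly rather than go through a union bound.
\begin{itemize}
\item The paper defines the threshold index $T(j)$ as the first position in the optimal order whose element is not among greedy's first $j-1$ probes.
\item Let $B$ be the first $T(j)-1$ optimal elements, all already probed by greedy, and $C$ the rest of greedy's prefix. Greedy's choice plus CNCP give exactly $c_j\ge c^*_{T(j)}$, with no loss: conditioning on the extra zeros of $C$ cannot lower the activation probability of $X^*_{T(j)}$. This is the singleton case of the submultiplicativity you already wrote down.
\item Group greedy's steps into maximal blocks on which $T(j)$ is constant, say $T(j)=t_i$ for $j_{i-1}<j\le j_i$. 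This yields greedy $\le\sum_i (r_{j_{i-1}+1}-r_{j_i+1})/c^*_{t_i}$, together with $r_{j_{i-1}+1}\le r^*_{t_i}$.
\item The factor $2$ is then spent on the non-monotonicity you flagged as the main obstacle, but on the optimal side rather than the greedy side. Replace $c^*_t$ by the prefix minimum $d^*_t$.
\item For an optimal order $q^*_t$ is non-increasing, by an adjacent swap. This gives $1/c^*_s\le 1/c^*_t+(t-s)$, hence $\sum_t q^*_t/d^*_t\le 2\,\OPT$.
\item Summation by parts with the monotone $d^*$ gives greedy $\le\sum_t q^*_t/d^*_t\le 2\,\OPT$.
\end{itemize}
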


This improves the $4$-approximation of \cite{kaplan2005learning}, which holds for arbitrary correlations. Our algorithm is the same natural greedy policy that probes elements in decreasing order of their marginal probability of being active. The improved analysis exploits the CNA property: we construct a charging argument that maps probes of the greedy algorithm to probes of the optimal algorithm, and show that under negative association, the greedy algorithm makes comparable progress.

\section{Related work}\label{sec:related-work}

Stochastic Boolean function evaluation (\SBFE) has a long history in operations research and computer science~\citep{Moret82,UNLUYURT2025356} but almost all work assumes independent random variables. For simple functions like AND/OR, greedy and non-adaptive strategies are optimal~\citep{B72}. The $k$-of-$n$ problem generalizes this, and adaptive optimal algorithms were given by~\cite{Ben-Dov81}. Non-adaptive algorithms are also known: a $2$-approximation by~\cite{gkenosis2018stochastic}, and a tight $1.5$-approximation under uniform costs by~\cite{grammel2022algorithms}.

Approximation algorithms are also known for evaluating certain other functions. In particular, \cite{AllenHK+15} obtained a $(2k)$-approximation for monotone functions in ``disjunctive normal form'' with $k$ terms, and \cite{kaplan2005learning} obtained a logarithmic approximation ratio for functions with {\em both} disjunctive  and conjunctive normal form (CDNF) representations when restricted to the case of unit costs and uniform distributions. Several generalizations have been studied, including for arbitrary costs, \emph{threshold functions}, and with non-Boolean functions such as the \emph{score classification problem} \cite{deshpande2016approximation, JiangLL+20, gkenosis2018stochastic, INZ12,GolovinK-arxiv, plank2024simple,liu2022two,ghuge2024nonadaptive}.

Another closely related problem is the \emph{Stochastic min-Knapsack} problem, where given items with deterministic costs and random rewards, we want to minimize the expected cost to achieve a target total reward value. This problem is closely related to the evaluation of linear threshold functions, and we employ it as a key subroutine in our algorithm for the graph probing problem.

Few results are known under correlated distributions. \cite{kaplan2005learning} gave a $4$-approximation for read-once DNF using a joint probability oracle. \cite{jiang2019cost} also addressed correlated settings in the context of active search, but with different goals. When the support of the probability distribution is polynomial-sized, then non-adaptive $k$-\SBFC (resp. $1$-\SBFC) reduces to the $k$-\mssc (resp. \mssc) problem \cite{feige2004approximating, bansal2010constant, bansal_improved_2020}. This connection was first noticed by \cite{kaplan2005learning}, who generalized the $4$-approximation for \mssc to $1$-\SBFC. Since \mssc is NP-hard to approximate within any constant factor $< 4$, this establishes corresponding hardness lower bounds on $1$-\SBFC. This connection with \mssc has also been exploited for other stochastic problems with correlated distributions, such as the Pandora's Box problem \cite{chawla2021approximating}. The sampling algorithm we use to solve the LP that encodes arbitrary correlations in \cref{sec: na-log-upper-bound} is similar to the algorithm of \cite{swamy2012sampling}, who give approximation algorithms for multistage stochastic optimization problems with correlations.

Lastly, adaptivity gaps and non-adaptive strategies have been studied in many stochastic settings, including knapsack~\citep{dean2008approximating,BGK11}, matching~\citep{BGLMNR12,BehnezhadDH20}, and matroid intersection~\citep{GN13,GuptaNS17}, often yielding approximation guarantees. However, in our setting with correlations, we show that adaptivity gaps can be unbounded.

\section{Non-adaptive \SBFC}\label{sec: na}

We discuss non-adaptive \SBFC in this section. We present the LP relaxation and rounding algorithms in \cref{sec: na-log-upper-bound}, proving the upper bound in \cref{thm: na-log-approximation} and \cref{thm: k-of-n-nonadpative-approximation}. In \cref{sec: na-log-lower-bound}, we prove the hardness lower bound. In \cref{sec: cna}, we prove \cref{thm: cna-2-approximation} for CNA random variables.

\subsection{Randomized rounding algorithms}\label{sec: na-log-upper-bound}

We begin with the linear program formulation, and then give rounding algorithm for arbitrary matroids (\cref{sec: matroid-rounding}) and for $k$-uniform matroids (\cref{sec: na-k-of-n}). For scenario $S$, variable $u_{S, t}$ indicates that $S$ has not been `covered' by the first $t - 1$ probes. Then the objective value when scenario $S$ realizes is exactly $u_{S, \le n} := \sum_{t} u_{S, t}$. Taking expectation over all scenarios gives the objective $\sum_{S} p_S u_{S, \le n}$. 

We have a variable $x_{e, t} \in [0, 1]$ for each element $e \in U$ and time $t \in [n]$ indicating that $e$ is the $t$th element probed. Naturally, we can probe at most $x_t(U) := \sum_{e} x_{e, t} \le 1$ elements at any time $t$.

Let $r: 2^U \to \Z_{\ge 0}$ denote the rank function of the matroid $(U, \mathcal{I})$, and $r^* := r(U)$ denotes the rank of the matroid. Then $u_{S, t} = 0$ only if the first $t - 1$ probes contain a basis $B \subseteq S$, and certainly $x_{< t}(S) := \sum_{e \in S} \sum_{t' < t} x_{e, t} \ge r^*$ in this case. If $x_{< t}(S) < r^*$, then we must have $u_{S, t} = 1$. Combined, we get that $x_{< t}(S) \ge (1 - u_{S, t}) r^*$.

This LP is too weak, and we strengthen it by using knapsack cover-like inequalities:
\begin{align}
    \min & \sum_S p_S u_{S, \le n} & \text{s.t.} \tag{\textup{LP}}\label{lp: na-strong} \\
    x_{< t}(S \setminus T) &\ge (r^* - r(T)) (1 - u_{S, t}), & \forall S \in \mathcal{S}, T \subseteq S, t \in [n], \label{const: knapsack-cover-constraints}\\
    x_t(U) &\le 1 & \forall \ t \in [n], \label{const: schedule-one-element-at-a-time} \\
    x &\ge 0. \notag
\end{align}

This generalizes the LP of \cite{bansal2010constant} for $k$-\mssc. In this problem, we are given a collection $\mathcal{S} = \{S_1, \ldots, S_m\} \subseteq 2^U$ of subsets of size $|S_i| \ge k$ each, and seek to order or \emph{schedule} elements $U$. Subset $S_i$ is uncovered as long as fewer than $k$ elements of $S_i$ have been scheduled, and the objective is to minimize the average covering time across $S_i \in \mathcal{S}$. Choosing the $k$-uniform matroid and the uniform distribution $p$ over $\mathcal{S}$ recovers $k$-\mssc. This connection was first noticed by \cite{kaplan2005learning} for $k = 1$, i.e., for \mssc.

Even for $k$-\mssc, this LP has an exponential number of constraints due to the (necessary) knapsack cover-like constraints \ref{const: knapsack-cover-constraints}.
\cite{bansal2010constant} gave an efficient separation oracle for $k$-\mssc that iterates over all subsets in $\mathcal{S}$. This is inefficient for our LP since the set $\mathcal{S}$ of spanning sets in the matroid is specified implicitly, and can have size \emph{exponential} in $S$. Further, even for a fixed spanning set $S$, we need to separate over constraints over an arbitrary matroid. 

To address this, first, let us bring the variables $u_{S, t}$ from the constraints to the objective, and thus reformulate the LP as a convex optimization problem in just the `assignment' variables $x \in \R^{U \times n}$. For scenario/spanning set $S \in \mathcal{S}$ and time $t \in [n]$, Constraints (\ref{const: knapsack-cover-constraints}) can be rewritten as $u_{S, t} \ge 1 - \frac{x_{< t}(S \setminus T)}{r^* - r(T)}$ for all $T \subseteq S$ that satisfy $r(T) < r^* = r(S)$. An optimal solution $x$ to \ref{lp: na-strong} therefore satisfies
\[
    u_{S, t} = \max\bigg(0, \max_{\substack{{T \subseteq S}: \\ r(T) < r(S)}} \left[1 - \frac{x_{< t}(S) - x_{< t}(T)}{r(S) - r(T)}\right]\bigg).
\]
To make the dependence on $x$ explicit, denote $u_{S, t}(x) := u_{S, t}$, which is convex in $x$. Then \ref{lp: na-strong} can be rewritten as the following convex optimization problem over convex set $\mathcal{K} := \{x \in [0, 1]^{U \times n}: x_{t}(U) \le 1 \ \forall \ t \in [n]\}$:
\begin{align}\label{eqn: lp-strong-reformulated}
    \min f(x) := \min \sum_{t \in [n]} \sum_{S \in \mathcal{S}} p_S u_{S, t}(x) \quad \text{s.t.} \ x \in \mathcal{K}.
\end{align}
There are two issues with applying standard convex optimization algorithms to find the optimal point: (1) not only does the objective $f(x) := \sum_{S \in \mathcal{S}} p_S u_{S, t}(x) \quad \text{s.t.}$ consist of exponentially many terms, and (2) it is also unclear how even a single term can be evaluated for given $x \in \convexbody$. To address (2), we give an efficient `separation oracle' to compute the gradient of a particular term by reducing it to parametrized submodular function minimization \cite{iwata2001combinatorial}. To address (1), we use standard  stochastic optimization techniques and create an efficient, bounded, unbiased estimator for sub-gradients $\partial f(x)$. This leads to the following lemma (see Appendix \ref{sec: omitted-proofs} for the proof):

\begin{lemma}\label{lem: na-lp-poly-time-opt}
    There is a $\mathrm{poly}\left(n, \frac{1}{\varepsilon}\right)$-time algorithm that solves \ref{lp: na-strong} optimally up to additive accuracy $\varepsilon$ for any matroid on $n$ elements.
\end{lemma}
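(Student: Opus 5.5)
We solve the convex reformulation \eqref{eqn: lp-strong-reformulated} by projected stochastic subgradient descent over $\convexbody$. The plan has four steps: evaluate one term $u_{S,t}(x)$ and a subgradient of it, build an unbiased and bounded stochastic subgradient of $f$, invoke the standard convergence guarantee, and convert the near-optimal $x$ back into a solution of \ref{lp: na-strong}.

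\textbf{Step 1: one term.} Fix a spanning set $S$ and a time $t$, and write $a_e := x_{<t,e} := \sum_{t'<t} x_{e,t'} \in [0,1]$. Computing $u_{S,t}(x)$ reduces to the ratio maximization
\[
\lambda^* := \min_{T\subseteq S,\ r(T)<r(S)} \frac{a(S)-a(T)}{r(S)-r(T)},
\]
since $u_{S,t}(x)=\max(0,1-\lambda^*)$.

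\begin{itemize}
\item For a parameter $\lambda\ge 0$, test whether some $T\subseteq S$ has $a(S\setminus T) - \lambda\,(r(S)-r(T)) < 0$.
\item The map $T\mapsto \lambda r(T) - a(T)$ is submodular, because $r$ is submodular and $a$ is modular. Hence $\min_{T\subseteq S}\big[\lambda r(T) - a(T)\big]$ is a parametric submodular minimization.
\item The minimum value is monotone in $\lambda$. Either Newton/Dinkelbach iterations or the parametric framework of \cite{iwata2001combinatorial} therefore find $\lambda^*$ together with a minimizing set $T^*$ in polynomial time, using the rank oracle.
\item One must keep the constraint $r(T)<r(S)$. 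Sets with $r(T)=r(S)$ contribute the nonnegative value $a(S\setminus T)$, so they never produce a negative minimum and do not affect the threshold test.
\end{itemize}

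Given $T^*$, the function $u_{S,t}$ is a maximum of affine functions of $x$. A subgradient is therefore the gradient of the active piece, $-\frac{1}{r(S)-r(T^*)}\mathbf{1}[e\in S\setminus T^*,\ t'<t]$, or $0$ if $u_{S,t}(x)=0$. Every entry has magnitude at most $1$, since $r(S)-r(T^*)\ge 1$. So each subgradient has $\ell_2$ norm at most $n$.

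\textbf{Step 2: stochastic gradient.} Draw $S\sim p$ using the sampling oracle and $t$ uniformly from $[n]$. Output $n$ times the subgradient of $u_{S,t}$ at $x$. This is an unbiased estimate of a subgradient of $f$, with second moment bounded by $\mathrm{poly}(n)$.

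\textbf{Step 3: convergence.} The diameter of $\convexbody$ is at most $O(n)$, and Euclidean projection onto $\convexbody$ is a separable problem per time slot solvable in polynomial time. Standard stochastic projected subgradient descent with averaged iterates then gives $\E[f(\bar x)] - \OPT \le \varepsilon/2$ after $\mathrm{poly}(n)/\varepsilon^2$ iterations. Two further details:

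\begin{itemize}
\item To obtain the guarantee with high probability rather than in expectation, estimate $f$ at several candidate outputs by sampling and keep the best one. This works because $0\le f\le n$, so Hoeffding's inequality applies.
\item If a joint probability oracle is given instead, we convert it to a sampling oracle via \cref{sec: oracle-equivalence}.
\end{itemize}

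\textbf{Step 4: back to the LP.} Setting $u_{S,t}=u_{S,t}(\bar x)$ yields a feasible solution of \ref{lp: na-strong} with objective $f(\bar x)$. These $u$ values are implicit, but each can be computed by Step 1 on demand.

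\textbf{Main obstacle.} The hardest step is Step 1: the exact ratio minimization under the rank constraint $r(T)<r(S)$, and making it fully polynomial. I would first try Dinkelbach's method, which terminates in polynomially many iterations for ratios with integer denominators $r(S)-r(T)\in\{1,\dots,n\}$. The alternative is to enumerate the $n$ possible denominator values $d$ and solve $\min\{a(S\setminus T): r(T)\ge r(S)-d\}$, which is also a matroid-friendly problem.
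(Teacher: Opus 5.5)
Your proposal is correct and follows the paper's route. Like the paper, you use the convex reformulation over $\mathcal{K}$, projected stochastic subgradient descent, an unbiased bounded subgradient estimator built from a sampled scenario $S\sim p$, and a per-term subgradient obtained by parametric submodular minimization of $\lambda r(T)-a(T)$; sampling $t$ uniformly (rescaled by $n$) instead of summing over all $t$, and the Dinkelbach termination argument via integer denominators, are only minor refinements.

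One caution: drop the fallback of enumerating the denominator $d$. As written, $\min\{a(S\setminus T): r(T)\ge r(S)-d\}$ is trivially $0$ at $T=S$. The corrected subproblem $\max\{a(T): r(T)\le r(S)-d\}$ is NP-hard in general; for a graphic matroid it is minimum $q$-cut with $q$ part of the input. So the ratio must be handled parametrically, as both you and the paper do in the main argument.
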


\subsubsection{The $O(\log n)$-approximation for arbitrary matroids}\label{sec: matroid-rounding}

We present the rounding algorithm for \ref{lp: na-strong}.

\begin{algorithm}
    \caption{\textsc{RandomizedRounding}$(x, \alpha)$}\label{alg: randomized-rounding}
    \begin{algorithmic}[1]
        \Statex \textbf{input}: feasible solution $x \in \R^{U \times n}$ to \ref{lp: na-strong}, and parameter $\alpha > 0$
        \For{epoch $\sigma = 1$ to $\lceil \log_2 n \rceil$}
            \For{each $e \in U$ not yet probed} \Comment{In arbitrary order}
                \State toss a coin independently with probability $\min\left(1, \alpha \sum_{t \le 2^\sigma} x_{e, t} \right)$
                \State probe $e$ if coin toss is successful
            \EndFor
        \EndFor
    \end{algorithmic}
\end{algorithm}

\begin{lemma}\label{lem: na-randomized-rounding}
    Algorithm \ref{alg: randomized-rounding} returns a solution with expected cost at most $O(\log n) \cdot \cost(x)$ for $\alpha = 4 \log n$ for solution $x$ to \ref{lp: na-strong} for any instance of \matroidSBFC on $n = |U|$ elements
\end{lemma}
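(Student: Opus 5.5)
Fix a scenario $S\in\mathcal{S}$ and compare the expected time at which the rounded order certifies $S$ (i.e., the probed active elements span $\mathcal{M}$) with the fractional cost $u_{S,\le n}=\sum_t u_{S,t}(x)$. Linearity of expectation over $p$ then gives the result. For each epoch $\sigma$, let $t_\sigma := 2^{\sigma}$. Epoch $\sigma$ probes at most $\sum_e \alpha\sum_{t\le t_\sigma}x_{e,t}\le \alpha t_\sigma$ elements in expectation (using \eqref{const: schedule-one-element-at-a-time}). Hence the expected number of probes made before the end of epoch $\sigma$ is $O(\alpha t_\sigma)$, and if $S$ is first certified by the end of epoch $\sigma$, the cost is charged $O(\alpha 2^\sigma)$. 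Because the coin tosses in different epochs are independent, the number of probes is independent of whether $S$ is uncovered (one must be a bit careful, and can bound the expected count conditioned on the history by $\alpha t_\sigma$ anyway). It therefore suffices to show $\Pr[S \text{ uncovered after epoch }\sigma]\le$ something that sums against $2^\sigma$ to $O(\log n)\, u_{S,\le n}$.

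The key step is the following. Let $\tau_S$ be the smallest $t$ with $u_{S,t}\le 1/2$. Then $\sum_t u_{S,t}\ge \tau_S/2$, and for all $t\ge\tau_S$ the knapsack-cover constraints \eqref{const: knapsack-cover-constraints} give $x_{<t}(S\setminus T)\ge \tfrac12(r^*-r(T))$ for every $T\subseteq S$. I claim this says $2x_{<t}|_S$ lies in the spanning-set polyhedron of the matroid $\mathcal{M}|_S$ (which has rank $r^*$ since $S$ spans). Indeed, the spanning-set polyhedron of a matroid is $\{y\ge 0: y(S\setminus T)\ge r(S)-r(T)\ \forall T\}$, i.e., the dominant of the bases polytope. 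Now consider an epoch $\sigma$ with $2^\sigma\ge \tau_S$. Conditioned on the set $P$ of elements already probed, contract $P\cap S$. The same inequalities restricted to $T\supseteq P\cap S$ show that $2x_{<t}$ on $S\setminus P$ lies in the spanning polyhedron of $(\mathcal{M}|_S)/(P\cap S)$. Sampling each element with probability $\min(1,\alpha x)$, with $\alpha=2\cdot 2\log n$, is sampling at rate $(2\log n)\cdot y$ for $y$ in that polyhedron. \Cref{lem: matroid-randomized-rounding} then gives that the expected rank of the sample is at least $(1-n^{-2})$ times the residual rank.

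The main obstacle is converting this into a failure probability. The residual rank $\rho$ after the epoch is a nonnegative integer with $\E[\rho]\le n^{-2}\cdot r^*\le n^{-1}$, so by Markov, $\Pr[\rho\ge1]\le 1/n$. I would actually use a single epoch: the first epoch $\sigma^*$ with $2^{\sigma^*}\ge\tau_S$ already succeeds with probability $\ge 1-1/n$. On failure, at most $n$ probes are made in total, contributing at most $1$ to the expected cost. Hence
\[
\E[\text{cost on } S]\le O(\alpha\, 2^{\sigma^*})+1\le O(\log n)\,\tau_S+1\le O(\log n)\,u_{S,\le n},
\]
using $u_{S,\le n}\ge 1$ (since $u_{S,1}=1$ whenever $r^*\ge1$, as $x_{<1}=0$). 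I would double-check two points: that the $\min(1,\cdot)$ cap and the conditioning on earlier epochs do not break the polyhedron argument (elements probed earlier are exactly those contracted, and the unprobed ones are tossed with the full cumulative probability, which only helps), and that \cref{lem: matroid-randomized-rounding} applies to the contracted matroid with $\alpha=2\log n\ge1$. Summing over $S$ weighted by $p_S$ finishes the proof.
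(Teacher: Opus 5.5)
Your proof is correct and shares the paper's skeleton. You take the first time $\tau_S$ at which the LP half-covers $S$, observe that $2x_{<t}$ restricted to $S$ lies in the dominant of the base polytope for $t\ge\tau_S$, apply \cref{lem: matroid-randomized-rounding} with $\beta=\alpha/2=2\log n$, and charge the length $2^{\sigma^*}\le 2\tau_S$ of the relevant epoch against $u_{S,\le n}\ge\tau_S/2$. The differences are in the bookkeeping, and each side gains something.

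The paper applies the sampling lemma directly to the cumulative set of elements probed by the end of the epoch. Each $e\in S$ lies in that set independently with probability at least $\min(1,\alpha\sum_{t\le 2^{\sigma}}x_{e,t})$. So your conditioning on the history and passage to the contraction $(\mathcal{M}|_S)/(P\cap S)$ is valid but unnecessary.

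Your treatment of the probe count is leaner. The cost times the indicator of success is pointwise at most the number $N_{\sigma^*}$ of probes made through epoch $\sigma^*$. So you only need $\E[N_{\sigma^*}]\le 2\alpha\, 2^{\sigma^*}$, which makes the Chernoff bound of \cref{lem: na-not-too-many-elements-are-probed} superfluous.

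Your Markov step on the residual rank gives failure probability at most $r^*/n^2\le 1/n$, contributing an additive $n\cdot\frac1n=1$ that $u_{S,\le n}\ge1$ absorbs. This is the precise version of the paper's ``rank is integral'' step, which by itself gives success probability $1-r^*/n^2$ rather than $1-1/n^2$. Likewise, certifying membership in the dominant through its inequality description $\{y\ge0: y(S\setminus T)\ge r(S)-r(T)\ \forall\, T\subseteq S\}$ is the sound argument. Rescaling $y$ by $r(S)/y(S)$, as the paper does, need not land in $\mathcal{B}(\mathcal{M}_S)$.

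Two small corrections. First, delete the sentence asserting that the number of probes is independent of whether $S$ is uncovered. Both are functions of the same coin tosses, and your final display relies on the pointwise bound above, not on independence. Second, state the trivial case in which $u_{S,t}>1/2$ for every $t\in[n]$; then $u_{S,\le n}>n/2$ and any order suffices. Also note that ``at most $n$ probes on failure'' presumes the unprobed elements are appended after the last epoch, a convention the paper's analysis also needs implicitly.
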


\paragraph{Proof of upper bound in \cref{thm: na-log-approximation}.} \cref{lem: na-lp-poly-time-opt} allows us to compute an optimal solution $x$ to \ref{lp: na-strong} (up to constant additive error $\varepsilon > 0$) in polynomial time. Since \ref{lp: na-strong} is a relaxation of \matroidSBFC, it is sufficient to prove \cref{lem: na-randomized-rounding} to prove \cref{thm: na-log-approximation} for $\alpha = 4 \log n$. We will prove it more generally for all $\alpha \ge 1$.

We proceed to prove \cref{lem: na-randomized-rounding}. Fix a scenario $S \in \mathcal{S}$. In any fractional solution $(x, u)$, $u_{S, t}$ is non-increasing in $t$. Let $t_S$ be the first $t \in [n]$ such that $u_{S, t} < \frac{1}{2}$. We say that $S$ is \emph{half-covered} by the fractional solution in epoch $\sigma$ if $2^{\sigma - 1} < t_S \le 2^{\sigma}$.

By definition, $\sum_{t \in [n]} u_{S, t} \ge \sum_{t \le 2^{\sigma - 1}} u_{S, t} \ge \frac{1}{2} \cdot 2^{\sigma - 1} = \frac{1}{4} 2^{\sigma}$. We claim the following:
\begin{enumerate}
    \item With probability $\ge 1 - \frac{1}{n^2}$, \cref{alg: randomized-rounding} probes at most $(18 \log n) \cdot 2^\sigma$ elements by the end of epoch $\sigma$. This is proven in \cref{lem: na-not-too-many-elements-are-probed}.
    
    \item $S$ is covered by \cref{alg: randomized-rounding} in epoch $\sigma$ with probability $\ge 1 - \frac{1}{n^2}$. This is shown below.
\end{enumerate}

Together, this implies
\begin{align*}
    \left(\text{expected cover time for} \ S \ \text{by Algorithm} \ \ref{alg: randomized-rounding}\right) = O(\log n) \cdot 2^\sigma = O(\log n) \cdot \sum_{t \in [n]} u_{S, t}. 
\end{align*}
\cref{lem: na-randomized-rounding} then follows by linearity of expectation over distribution $p$ on scenarios.

We prove the second claim. Since $u_{S, t} < \frac{1}{2}$ for $t := 2^{\sigma} + 1$. Therefore, by knapsack cover constraints (\cref{const: knapsack-cover-constraints}), we have for all $T \subseteq S$ that
\[
    x_{< t}(S \setminus T) \ge (r^* - r(T)) \cdot \frac{1}{2}.
\]
Define $y_e := 2 x_{e, < t}$ for all $e \in S$. Then, setting $T = \emptyset$, we get $y(S) \ge r^*$. And more generally, $y(S) - y(T) \ge r^* - r(T)$, or that $y(T) \le r(T) + (y(S) - r^*)$, where $y(S) - r^* \ge 0$.

Any element $e \in S$ is probed by the end of epoch $\sigma$ with probability $\ge \min(1, \alpha x_{e, < t}) = \min(1, \frac{\alpha}{2} y_e)$, where $\alpha = 4 \log n$.

Note that $y$ lies in the up-hull of the base polymatroid $\mathcal{B}(\mathcal{M}_S) = \{z \in \R_{\ge 0}^S: z(T) \le r(T) \ \forall \ T \subseteq S, z(S) = r^*\}$: define $z = \frac{r(S)}{y(S)} \cdot y$; it is easy to check that $z \in \mathcal{B}(\mathcal{M})$ and $y \ge z$. We will use the following guarantee for rounding $y$ to a basis.

\begin{restatable}{lemma}{matroidRandomizedRounding}\label{lem: matroid-randomized-rounding}
    Consider a matroid $\mathcal{M}_S = (S, \mathcal{I}_S)$ on ground set $S$ of elements and a point $y$ in the up-hull of the corresponding base polymatroid. For $\beta \ge 1$, denote by $A_\beta \subseteq S$ the random subset of $S$ obtained by including each $e \in S$ in $A_\beta$ independently with probability $\min(1, \beta y_e)$. Then, the expected rank of $A_\beta$ is
    \[
        \E[r(A_\beta)] \ge \left(1 - e^{-\beta}\right) r(S).
    \]
\end{restatable}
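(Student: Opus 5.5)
The plan is a continuous-time (``Poissonized'') version of independent rounding, comparing the growth of the expected rank against the rank of a fixed basis via submodularity. This is the argument implicit in \cite{calinescu_maximizing_2007}, run up to time $\beta$ instead of time $1$.

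First, a monotonicity reduction. Since $1-e^{-\beta y_e} \le \min(1,\beta y_e)$ for every $e$, the random set $A_\beta$ stochastically dominates the set $A'_\beta$ that contains each $e$ independently with probability $1-e^{-\beta y_e}$. We can couple the two by using a common uniform variable per element. The rank function $r$ is monotone, so it suffices to show $\E[r(A'_\beta)] \ge (1-e^{-\beta}) r(S)$.

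Next, define for $\tau \ge 0$ the random set $A_\tau$ containing each $e\in S$ independently with probability $q_e(\tau) := 1-e^{-\tau y_e}$, and let $g(\tau) := \E[r(A_\tau)] = F(q(\tau))$, where $F$ is the multilinear extension of $r$. Note that $g(0)=0$. By the chain rule,
\[
g'(\tau) = \sum_{e\in S} y_e e^{-\tau y_e}\, \partial_e F(q(\tau)).
\]
Since $\partial_e F(q) = \E[r(A\cup\{e\}) - r(A\setminus\{e\})]$ and $\Pr(e\notin A_\tau) = e^{-\tau y_e}$, each term equals $y_e\,\E[r(A_\tau\cup\{e\}) - r(A_\tau)]$. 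This gives the key identity
\[
g'(\tau) = \E\Big[\sum_{e\in S} y_e\,\big(r(A_\tau\cup\{e\})-r(A_\tau)\big)\Big].
\]

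The main step is a pointwise bound: for every fixed $A\subseteq S$,
\[
\sum_{e} y_e\big(r(A\cup\{e\})-r(A)\big) \ge r(S)-r(A).
\]
To prove it, take $z\le y$ with $z$ in the base polytope $\mathcal{B}(\mathcal{M}_S)$, which exists because $y$ lies in its up-hull. Marginal gains are nonnegative, so replacing $y$ by $z$ only decreases the left-hand side. Since the base polytope is the convex hull of indicator vectors of bases, it then suffices to prove the bound for $z=\mathbf{1}_B$ with $B$ a basis. For such $B$, submodularity gives
\[
\sum_{e\in B}\big(r(A\cup\{e\})-r(A)\big) \ge r(A\cup B)-r(A) = r(S)-r(A),
\]
where the last equality holds because $B$ spans $S$. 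Taking expectations over $A=A_\tau$, we obtain the differential inequality $g'(\tau)\ge r(S)-g(\tau)$.

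Finally, integrate. The function $h(\tau)=e^{\tau}(r(S)-g(\tau))$ satisfies $h'\le 0$, so $r(S)-g(\beta)\le e^{-\beta} r(S)$, i.e.\ $g(\beta)\ge(1-e^{-\beta})r(S)$, which combined with the first step proves the lemma. The only delicate point I expect is justifying the derivative formula for $F$ along the path $q(\tau)$. This is routine because $F$ is a polynomial in $q$ and each $q_e(\tau)$ is smooth. The hypothesis $\beta\ge1$ is not actually needed by this argument.
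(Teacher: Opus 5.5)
Your proposal is correct and follows essentially the same route as the paper. Both arguments pass to the Poissonized set with inclusion probabilities $1-e^{-\beta y_e}$, establish $r(A)+\sum_e y_e\,(r(A\cup\{e\})-r(A))\ge r(S)$ by reducing to bases of the base polytope (the paper phrases this as $s(y)\ge r(S)$ via concavity of $s$), and integrate the resulting differential inequality up to time $\beta$ as in Calinescu et al.; your explicit multilinear-extension derivative computation fills in the step the paper omits, and you are right that $\beta\ge 1$ is not needed.
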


We show in \cref{sec: omitted-proofs} that this is a simple extension of the proof of Lemma 5 in \cite{calinescu_maximizing_2007}, who consider the special case when $y$ is exactly in the base polymatroid and $\alpha = 1$. The lemma can also be formulated for arbitrary monotone submodular functions.

By \cref{lem: matroid-randomized-rounding}, the corresponding sampled set $\{e \in S: e \mathrm{\ is \ probed \ by \ the \ end \ of \ epoch \ } \sigma \}$ has expected rank $\ge 1 - \exp(-\alpha/2) = 1 - \frac{1}{n^2}$. Since rank is integral, this sampled set contains a basis with probability $\ge 1 - \exp(-\alpha/2) = 1 - \frac{1}{n^2}$. This proves the claim.

To finish the proof of \cref{thm: na-log-approximation}, it remains to prove the following lemma bounding the number of probes by \cref{alg: randomized-rounding} by the end of epoch $\sigma$:

\begin{restatable}{lemma}{NotTooManyProbes}\label{lem: na-not-too-many-elements-are-probed}
    Fix $\alpha = 4 \log n$. For any $\sigma \in \{1, \ldots, \lceil \log_2 n \rceil\}$, the number of elements probed by the end of epoch $\sigma$ is at most $(18 \log n) \cdot 2^\sigma$ with probability $\ge 1 - \frac{1}{n^2}$.
\end{restatable}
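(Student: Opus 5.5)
The plan is to write the number of elements probed by the end of epoch $\sigma$ as a sum of independent indicators, bound its expectation by $O(\log n)\cdot 2^\sigma$, and finish with a multiplicative Chernoff bound.

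\textbf{Step 1: a probed element stays probed.} For each $e\in U$, the events ``$e$ is probed in epoch $\sigma'$'' come from independent coin tosses, and a probed element is never re-tossed. Hence the indicator $Z_e$ that $e$ has been probed by the end of epoch $\sigma$ satisfies
\[
\Pr(Z_e=1) \le \sum_{\sigma'\le\sigma} \min\Bigl(1,\alpha \sum_{t\le 2^{\sigma'}} x_{e,t}\Bigr) \le \alpha\sum_{\sigma'\le\sigma} x_{e,\le 2^{\sigma'}},
\]
by a union bound over epochs. The variables $Z_e$ are independent across $e$, since each element uses its own coins.

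\textbf{Step 2: bound the expectation.} Summing over $e$ and using constraint (\ref{const: schedule-one-element-at-a-time}), we get $\sum_e x_{e,\le 2^{\sigma'}} \le 2^{\sigma'}$. Therefore
\[
\mu := \E\Bigl[\sum_e Z_e\Bigr] \le \alpha \sum_{\sigma'\le \sigma} 2^{\sigma'} \le 2\alpha\, 2^{\sigma} = (8\log n)\, 2^\sigma .
\]

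\textbf{Step 3: concentration.} The probed count is a sum of independent Bernoullis with mean at most $M:=(8\log n)2^\sigma$. Increasing the success probabilities only makes the upper tail heavier, so the standard upper-tail Chernoff bound applies with $M$ in place of $\mu$. Taking $\delta = 9/8$, so that $(1+\delta)M = (18\log n)2^\sigma$, gives
\[
\Pr\Bigl[\textstyle\sum_e Z_e \ge (18\log n)2^\sigma\Bigr] \le \exp\Bigl(-\tfrac{\delta^2 M}{2+\delta}\Bigr) = \exp(-c\, M),
\]
with $c=(81/64)/(25/8)=81/200\approx 0.405$. Since $2^\sigma\ge 2$, we have $cM \ge 0.405\cdot 16\log n > 6\log n \ge 2\ln n$. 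The last inequality holds whether $\log$ means $\ln$ or $\log_2$. So the failure probability is at most $1/n^2$.

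\textbf{Main obstacle.} The delicate part is the constant bookkeeping, and specifically that the summation over epochs in Step 2 only costs a factor of $2$ because it is geometric. The Chernoff step needs the constant in the exponent to exceed $2\ln n$. This is safe here because $\mu$ carries a $\log n$ factor and $2^\sigma\ge 2$. If the constants had come out tighter, I would fall back to the form $\Pr[X\ge R]\le 2^{-R}$, valid for $R\ge 6\mu$, which requires slightly adjusting $18$. Here the direct computation above suffices.
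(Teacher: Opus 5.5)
Your proof is correct, but it is organized differently from the paper's.

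The paper applies Chernoff separately to the number $Y$ of elements newly probed in each epoch $\tau \le \sigma$. It pads the mean to $3\log n + \E[Y]$, which is needed because a single epoch's mean can be tiny. It takes $\delta = 2$ to get failure probability $1/n^3$ per epoch, then union-bounds over the at most $\lceil \log_2 n\rceil$ epochs and sums the per-epoch thresholds geometrically.

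You instead push the union bound over epochs inside each element's marginal. Since $Z_e$ depends only on $e$'s own coins, the $Z_e$ are independent, so a single Chernoff bound on the cumulative count suffices. Your mean already carries the $\log n$ factor, so no additive padding is needed.

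Your route is cleaner and actually delivers the stated constant. The paper's step $9\log n + 3\E[Y] \le (9\log n)\cdot 2^\sigma$ overlooks that $\E[Y]$ can be as large as $\alpha 2^\sigma = (4\log n)\,2^\sigma$. Done carefully, the per-epoch approach gives about $(16.5\log n)\,2^\tau$ per epoch and under $(33\log n)\,2^\sigma$ cumulatively. That is still $O(\log n)\cdot 2^\sigma$, which is all the proof of \cref{lem: na-randomized-rounding} uses.

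One small slip on your side: $\delta = 9/8$ gives $(1+\delta)M = (17\log n)\,2^\sigma$, not $(18\log n)\,2^\sigma$. This is harmless, since reaching $(18\log n)\,2^\sigma$ probes implies reaching $(17\log n)\,2^\sigma$. Alternatively, take $\delta = 5/4$, which hits $18$ exactly with the better exponent constant $25/52$.
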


For element $e \in U$, consider random variable $Y_e$ that indicates whether $e$ is probed in epoch $\sigma$. Then $\Pr(Y_e = 1) \le \alpha \cdot \sum_{t \le 2^\sigma} x_{e, t}$, and therefore the number $Y := \sum_{e \in Y} Y_e$ of elements probed in epoch $\sigma$ is $\le \alpha \sum_{e \in U} \sum_{t \le 2^\sigma} x_{e, t} = \alpha \sum_{t \le 2^\sigma} x_t(U) \le \alpha 2^\sigma$ (in expectation). The high probability claim in the lemma then follows by a standard concentration argument using Chernoff bounds (we formalize this argument in \cref{sec: omitted-proofs}).

\subsubsection{The $4.642$-approximation for $k$-uniform matroids}\label{sec: na-k-of-n}

In this section, we give a $4.642$-approximation algorithm for \SBFE on $k$-Uniform matroids (i.e., the $k$-of-$n$ problem). First, we note that it is sufficient to restrict to \SBFC. Indeed, when $k \ge n/2$ and there $\ge k$ elements are active, then any algorithm is a $2$-approximation since the optimal must also probe $\ge n/2$ elements. This is similarly true when $k < n/2$ and fewer than $k$ elements are active: in this case, $\ge n - k + 1 \ge n/2$ elements are inactive, and any algorithm must probe all of them. Thus, we can assume that $k < n/2$, and that at least $k$ elements are active.

Our result generalizes that of \cite{bansal_improved_2020}. Specifically, the \ref{lp: na-strong} reduces to their linear program for $k$-\mssc. Since we can solve \ref{lp: na-strong} to optimality using \cref{lem: na-lp-poly-time-opt}, it remains to round the optimal fractional solution to an integer solution. Like our analysis in \cref{sec: na-log-upper-bound}, \cite{bansal_improved_2020}'s analysis for their kernel-rounding algorithm gives an approximation guarantee for each scenario (i.e., spanning set). Specifically, they show that

\begin{restatable}[\cite{bansal_improved_2020}]{lemma}{KofNScenarioApprox}\label{lem:k-of-n-scenario-approx}
    Given a $k$-uniform matroid and a fractional solution $x$ to \ref{lp: na-strong}, there exists a randomized polynomial-time algorithm for $k$-\SBFC which covers any spanning set $S$ within factor $4.642$ of the cost $\sum_{t} u_{S, t}$ incurred by the fractional solution $x$ in expectation.
\end{restatable}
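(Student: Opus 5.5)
This lemma is essentially a restatement of the per-set guarantee of the kernel-rounding analysis of \cite{bansal_improved_2020}. The plan is to check that their analysis uses only (i) the time constraints \eqref{const: schedule-one-element-at-a-time} and (ii) the knapsack-cover constraints \eqref{const: knapsack-cover-constraints} restricted to the single set $S$ being analyzed. Their analysis never uses the fact that the family of sets is polynomially sized, nor any property of the other sets.

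For the $k$-uniform matroid we have $r(T)=\min(|T|,k)$, $r^*=k$, and the constraints that matter are those with $|T|<k$. For a fixed spanning set $S$ ($|S|\ge k$), constraint \eqref{const: knapsack-cover-constraints} then reads
\[
x_{<t}(S\setminus T)\ge (k-|T|)(1-u_{S,t})
\]
for all $T\subseteq S$ with $|T|<k$. This is exactly the $k$-\mssc knapsack-cover constraint of \cite{bansal2010constant,bansal_improved_2020} for the set $S$. The cost $\sum_t u_{S,t}$ is exactly the fractional cover time of $S$ in their LP.

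\textbf{Step 1.} Take the fractional solution $x$ (with the optimal $u_{S,t}(x)$ as in \eqref{eqn: lp-strong-reformulated}) and run their kernel-based rounding on $x$. I would recall its structure: a geometric sequence of time points $t_j$ chosen with a random offset; at each $t_j$, independent sampling of elements with probability proportional to $\alpha\, x_{e,<t_j}$, capped at $1$; elements with large fractional mass (the ``kernel'') are scheduled deterministically. This procedure depends only on $x$, not on the scenarios, so it runs in polynomial time even though $\mathcal{S}$ is exponential.

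\textbf{Step 2.} Reproduce their per-set argument for a fixed $S$. Once $u_{S,t}\le \gamma$, the knapsack-cover inequality with $T$ equal to the kernel elements of $S$ shows that the residual fractional mass on $S$ suffices to pick up the remaining needed elements with good probability. Integrating over $\gamma$ and the random offset then gives an expected cover time at most $4.642\sum_t u_{S,t}$.

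\textbf{Step 3.} Take expectation over $p$ by linearity. This is where the lemma is used afterwards, but it is not needed for the statement itself.

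The main obstacle is the honest verification in Step 2: Bansal et al.\ may use the LP globally, for instance through the optimality of $x$ or by building the kernel from all sets simultaneously. If so, I would first check that their kernel is defined per element via $x_{e,<t}$ thresholds, which is scenario-independent. I would then check that their per-set bound invokes the constraints only for that set. I expect both to hold, because their guarantee is stated set-by-set. Any additive $\varepsilon$ error from \cref{lem: na-lp-poly-time-opt} is absorbed into the constant.
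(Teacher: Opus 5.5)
Your top-level plan matches the paper's treatment. The paper gives no argument of its own; it observes that for the $k$-uniform matroid \ref{lp: na-strong} specializes to the $k$-\mssc LP, and that \cite{bansal_improved_2020} prove a per-set bound for a rounding that depends only on $x$, so exponentially many scenarios are no obstacle. Your specialization $x_{<t}(S\setminus T)\ge (k-|T|)(1-u_{S,t})$ for $|T|<k$ is exactly right.

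\textbf{The gap is in Steps 1--2, which reconstruct the wrong algorithm.} You describe geometric epochs, independent sampling with probability $\min(1,\alpha\, x_{e,<t_j})$, forced inclusion of heavy elements, and a per-set argument using the knapsack-cover constraint with $T$ equal to the heavy elements of $S$, followed by concentration. That is essentially the rounding of \cite{bansal2010constant}, the same template as \cref{alg: randomized-rounding}. Its per-set analysis gives only a large constant ($485$ in that paper), not $4.642$. In \cite{bansal_improved_2020} the ``kernel'' is not a set of heavy elements. It is a function used to transform the LP solution before independent randomized rounding. The constant $4.642$ comes from optimizing that kernel together with a new lower-tail inequality for sums of independent Bernoulli variables. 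So your claim that integrating over $\gamma$ and the random offset yields $4.642\sum_t u_{S,t}$ is unsupported for the procedure you wrote down; carried out honestly, your Step 2 gives only some unspecified $O(1)$.

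\textbf{The fix is to drop the reconstruction and invoke their per-hyperedge theorem as a black box, as the paper does.} The check actually needed for that is on the LP, not on their analysis. Their LP, inherited from \cite{bansal2010constant}, is an assignment LP with $\sum_t x_{e,t}=1$ for every element and $\sum_e x_{e,t}=1$ for every time. \ref{lp: na-strong} has neither the first constraint nor equality in \eqref{const: schedule-one-element-at-a-time}. This is harmless, in two steps:
\begin{itemize}
\item Capping every cumulative $x_{e,<t}$ at $1$ preserves all knapsack-cover constraints. Let $C$ be the set of capped elements of $S\setminus T$. If $|T\cup C|<k$, use the constraint for $T\cup C$. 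Otherwise the capped elements alone contribute $|C|\ge k-|T|$.
\item Padding the resulting doubly substochastic matrix up to a doubly stochastic one only increases the left-hand sides, so every $u_{S,t}$ can only decrease.
\end{itemize}
The guarantee for the modified solution therefore implies the claimed bound relative to the original $x$.
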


Along with \cref{lem: na-lp-poly-time-opt} and the observation above, this immediately implies \cref{thm: k-of-n-nonadpative-approximation}.

\subsection{Hardness lower bound}\label{sec: na-log-lower-bound}

We prove the lower bound in \cref{thm: na-log-approximation} in this section by reducing the Hitting Set problem (\texttt{HS}) to non-adaptive \matroidSBFC for partition matroids. In \texttt{HS}, we are given $N$ non-empty subsets $A_1, \ldots, A_N \subseteq U_{\mathtt{HS}}$ of a universe $U_{\mathtt{HS}}$ of $|U_{\mathtt{HS}}| = N$ elements, and seek a subset $S \subseteq U_{\mathtt{HS}}$ of minimum cardinality such that $|S \cap A_i| \ge 1$ for all $i \in [N]$.  \texttt{HS} is equivalent to the Set Cover problem and is NP-hard to approximate within factor $o(\log n)$ \cite{feige_threshold_1998}.\footnote{The number of subsets may not be the same as the number $N$ of elements in general. However, this does not affect hardness and we use this version for clearer exposition.} We prove the following:

\begin{lemma}
    Any $o(\log n)$-approximation algorithm for \PSBFE implies an $o(\log n)$-approximation algorithm for \texttt{HS}.
\end{lemma}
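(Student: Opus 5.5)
We reduce a Hitting Set instance $A_1,\dots,A_N \subseteq U_{\mathtt{HS}}$ to a partition-matroid instance of non-adaptive \matroidSBFC whose optimal non-adaptive cost is governed by the optimal hitting set size $h^*$, up to constant factors and a lower-order additive term. In the construction, the ground set $U$ consists of the elements of $U_{\mathtt{HS}}$ together with a large block $D$ of $M$ ``dummy'' elements, where $M = \mathrm{poly}(N)$ is chosen later. The partition matroid has two parts. The first part is $U_{\mathtt{HS}}$ with capacity $1$, and the second part is $D$ with capacity $1$ (alternatively, $D$ can be made a part of its own). There are $N$ scenarios, taken uniformly at random. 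In scenario $i$, the active set is $S_i = A_i \cup \{d_i\}$ for a distinct dummy $d_i$, or more robustly $S_i = A_i \cup D_i$ for a suitable dummy subset $D_i$. This choice guarantees that $S_i$ is spanning, which requires each $A_i$ to be non-empty.

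The intent is that certification in scenario $i$ requires hitting $A_i$ and also finding an active dummy. We will tune the dummy part so that the dummy cost is essentially forced and identical across orders, while the $U_{\mathtt{HS}}$ part costs at most $h^*$ per scenario. If this tuning fails, we will instead use the cleaner idea of making hitting-set mistakes expensive. We replace each element $e \in U_{\mathtt{HS}}$ by itself and put a huge common dummy block $D$, of size $L \gg N^2$, in a second part of capacity $1$, all active in every scenario except in a way arranged so that the dummy part is certified only at the very end. Then every scenario's cover time is roughly $L$ plus the number of $U_{\mathtt{HS}}$-elements placed before the first hit. The cleanest variant is as follows. The dummy part consists of elements that are active only with certification unavoidable after $U_{\mathtt{HS}}$; the optimal order probes a hitting set $H$ first, so that every scenario is hit within $|H|$ probes.

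For the analysis we argue in two directions. First, given a hitting set $H$ with $|H| = h$, the order that places $H$ first and then $D$ yields every scenario's cost at most $h + L$. Hence $\OPT \le h^* + L$, and $L$ must be small, or better, cost is measured so that $L$ cancels. Second, from any non-adaptive order with expected cost $C$, we extract a hitting set by taking the prefix of $U_{\mathtt{HS}}$-elements up to a time $\tau$. By Markov's inequality, at least half the scenarios are covered within $2C$ probes, so the prefix of length $2C$ hits half the sets. We then iterate this on the unhit sets, as in the standard set-cover-to-min-sum reduction of \cite{feige2004approximating}. This iteration loses a $\log N$ factor, which is unacceptable. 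To avoid it, we need each scenario to be heavily penalized unless all are hit early. We achieve this by making the distribution place the ``all-or-nothing'' structure into the partition matroid: there are $N$ parts, part $i$ equal to a copy of $A_i$ with capacity $1$, and a single scenario in which all elements are active. This fails because copies are distinct elements. Instead, we use a single scenario $S = U_{\mathtt{HS}}$ together with the partition $\{A_i\}$, which is impossible when the sets overlap.

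The main obstacle is therefore the encoding of the covering requirement, namely that all $A_i$ must be hit, into one certification event under a partition matroid, without incurring the $\log$ loss of min-sum. The approach I would pursue first is as follows. Use elements $U_{\mathtt{HS}}$ plus dummies, with scenarios $S_j = (U_{\mathtt{HS}} \setminus A_j) \cup \ldots$ making the elements of $A_j$ inactive, placed with small probability. We also include a high-probability scenario in which all of $U_{\mathtt{HS}}$ is active. The partition matroid would then force, in the main scenario, probing a set meeting every $A_i$. We would verify that the cost then equals the prefix length needed, which is exactly $h^*$ up to constants. Cost equivalence in both directions, together with Feige's hardness \cite{feige_threshold_1998}, gives the claim.
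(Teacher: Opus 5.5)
There is a genuine gap: the proposal never reaches a working reduction, and the construction you finally commit to does not encode the hitting constraint. The parts of a partition matroid are disjoint. So in your high-probability scenario, where all of $U_{\mathtt{HS}}$ is active, certification only needs the required number of probes in each part, regardless of the sets $A_i$. Nothing backs the claim that it ``forces probing a set meeting every $A_i$''. Your scenarios $S_j = (U_{\mathtt{HS}} \setminus A_j)\cup\cdots$ make the elements of $A_j$ \emph{inactive}, so in scenario $j$ probing $A_j$ is useless. That is the complement of a hitting requirement, not a hitting requirement. Your earlier variants, such as scenarios $A_i \cup \{d_i\}$ drawn uniformly, only produce an average-cover-time (min-sum) objective. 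As you note yourself, converting a good min-sum order into a hitting set loses exactly the $\log$ factor you need to preserve. The final ``cost equivalence in both directions'' is asserted, not established.

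The missing idea is to make a \emph{single} scenario require hitting all $N$ sets at once, while the order cannot know which set must be hit where. The paper uses $n = N^2$ elements arranged as $N$ disjoint copies $U_\ell = \{e_\ell : e \in U_{\mathtt{HS}}\}$ of the \emph{whole} ground set. Each copy is a part with requirement $1$. It draws a uniformly random permutation $\pi$ and makes $e_\ell$ active iff $e \in A_{\pi(\ell)}$.

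\begin{itemize}
\item \emph{Upper bound on the optimum.} Probing all $N$ copies of an optimal hitting set gives $\mathbf{E}_p[\mathtt{opt}] \le N \cdot \mathtt{opt}_{\mathtt{HS}}$.
\item \emph{Extracting a hitting set.} Fix an order $\sigma$. Let $\Gamma_\ell$ be the shortest prefix of $\sigma \cap U_\ell$ whose originals form a hitting set, and let $A_{\phi(\ell)}$ be a set missed by every shorter prefix. Put $\tau = \frac{N}{2}\min_\ell |\Gamma_\ell|$. Then at least $N/2$ parts $\ell$ (call this set $T$) have $\Gamma_\ell$ not contained in the first $\tau$ probes.
\item \emph{Probabilistic step.} A Paley--Zygmund computation gives $\Pr(\exists\, \ell \in T : \pi(\ell) = \phi(\ell)) \ge 1/4$. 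In that event the cost exceeds $\tau$.
\end{itemize}

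Hence $\min_\ell |\Gamma_\ell| \le 8\,\mathbf{E}_p[\mathtt{alg}]/N \le 8\alpha\, \mathtt{opt}_{\mathtt{HS}}$. Since $\log n = 2\log N$, the hardness transfers. Your dismissed attempt with $N$ parts had the right skeleton. The fix is to make every part a copy of the entire ground set and randomize the set-to-part assignment, rather than fixing part $i$ to be a copy of $A_i$ with everything active.
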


For each instance of \texttt{HS} with $N$ elements, we construct an instance of \PSBFE with a new universe $U$ of $|U| := n = N^2$ elements, and define a partition matroid with $N$ parts on $U$. Then, we show that any $\alpha$-approximation algorithm for this \PSBFE instance can be efficiently converted to an $O(\alpha)$-approximation for the corresponding \texttt{HS} instance. This implies the result.

\paragraph{Reduction.} The new universe $U$ (for \PSBFE) contains $N$ copies of each $e \in U_{\mathtt{HS}}$, with the $\ell$th copy of $e$ denoted $e_{\ell}$ for $\ell \in [N]$. The $N$ parts in the partition matroid are $U_{\ell} := \{e_{\ell}: e \in U_{\mathtt{HS}}\}$ for $\ell \in [N]$, i.e., each part contains one copy of the ground set $U_{\mathtt{HS}}$. We set the requirement $k_\ell = 1$ for each part $1\leq \ell\leq N$.

We define a probability distribution $p: 2^U \to [0, 1]$ on subsets of $U$ through a sampling procedure, described next:
\begin{enumerate}
    \item Choose a permutation $\pi: [N] \to [N]$ uniformly at random. Intuitively, subset $A_{\pi(\ell)}$ is `assigned' to part $U_{\ell}$.
    \item For part $\ell \in [N]$, element $e_{\ell}$ is active if and only if $e \in A_{\pi(\ell)}$.
\end{enumerate}

Suppose we are given an $\alpha$-approximation algorithm for \PSBFE. Given an instance of Hitting Set with elements $U_{\mathtt{HS}}$:
\begin{enumerate}
    \item Construct the instance of \PSBFE described above.
    \item Obtain a solution $\sigma$ to \PSBFE using the $\alpha$-approximation algorithm. Here, $\sigma$ is an ordering of the elements $U = \{e_\ell: e \in U_{\mathtt{HS}}, \ell \in [N]\}$ of the \PSBFE instance.
    \item For each part $\ell \in [N]$, let $\Gamma_\ell$ be the smallest prefix of $\sigma \cap U_\ell$ such that the corresponding elements in $U_{\mathtt{HS}}$ form a hitting set for $\mathtt{ins}_{\mathtt{HS}}$.
    \item Return the hitting set among these, and denote it $\Gamma_{\mathtt{HS}}$.
\end{enumerate}

Denote the original instance of \texttt{HS} as $\mathtt{ins}_{\mathtt{HS}}$, with (deterministic) optimal value $\OPT_{\mathtt{HS}}$. Denote the corresponding \PSBFE instance as $\mathtt{ins}$, with cost of the optimal algorithm denoted $\OPT$, so that the optimal expected value is $\E_p[\OPT]$. Denote the cost of the $\alpha$-approximate algorithm for \PSBFE as $\ALG$; then $\E_p[\ALG] \le \alpha \cdot \E_p[\OPT]$.

\begin{claim}
    $\E_p[\OPT] \le N \cdot \OPT_{\mathtt{HS}}$, so that $\E_p[\ALG] \le \alpha N \cdot  \OPT_{\mathtt{HS}}$.
\end{claim}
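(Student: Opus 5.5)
We will exhibit a single explicit non-adaptive ordering of $U$ whose cost is at most $N \cdot \OPT_{\mathtt{HS}}$ under every scenario in the support of $p$. Since $\E_p[\OPT]$ is at most the expected cost of any fixed non-adaptive ordering, this bounds $\E_p[\OPT]$. The second inequality then follows immediately from $\E_p[\ALG] \le \alpha \cdot \E_p[\OPT]$.

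Let $H^* \subseteq U_{\mathtt{HS}}$ be an optimal hitting set, so $|H^*| = \OPT_{\mathtt{HS}}$. Consider the set of copies $\{e_\ell : e \in H^*, \ell \in [N]\}$. This set has exactly $N \cdot \OPT_{\mathtt{HS}}$ elements. Define the ordering $\sigma^*$ to probe these copies first, in any order, followed by the remaining elements of $U$ in arbitrary order.

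We now check that under any realization, that is, any permutation $\pi$, the first $N \cdot \OPT_{\mathtt{HS}}$ probes already certify feasibility of the partition matroid.
\begin{itemize}
\item Fix a part $\ell \in [N]$. The active elements of $U_\ell$ are exactly $\{e_\ell : e \in A_{\pi(\ell)}\}$.
\item Since $H^*$ hits $A_{\pi(\ell)}$, there is some $e \in H^* \cap A_{\pi(\ell)}$.
\item The copy $e_\ell$ is then active, and it lies among the probed copies.
\item Hence every part $U_\ell$ contains at least $k_\ell = 1$ probed active element, so the probed active elements contain a basis.
\end{itemize}
Therefore the cost of $\sigma^*$ is at most $N \cdot \OPT_{\mathtt{HS}}$ deterministically. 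Taking expectation over $p$ gives $\E_p[\OPT] \le \E_p[\cost(\sigma^*)] \le N \cdot \OPT_{\mathtt{HS}}$.

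There is no real obstacle here. The only point to be careful about is that the certificate must work simultaneously for all permutations $\pi$. This holds because $H^*$ hits every set $A_i$, so the random assignment of sets to parts does not matter.
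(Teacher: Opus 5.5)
Your proposal is correct and follows the paper's proof: you probe all $N$ copies of an optimal hitting set first, and for every permutation $\pi$ each part $U_\ell$ then contains an active probed copy $e_\ell$ with $e \in H^* \cap A_{\pi(\ell)}$, so the cost is at most $N \cdot \OPT_{\mathtt{HS}}$ deterministically. You also correctly write $A_{\pi(\ell)}$ where the paper's text has the typo $A_\ell$.
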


\begin{proof}
    Let $S^*_{\mathtt{HS}} \subseteq U_{\mathtt{HS}}$ be the optimal hitting set for the \texttt{HS} instance $\mathtt{ins}_{\mathtt{HS}}$, with optimal $|S^*_{\mathtt{HS}}| = \OPT_0$. Consider the following probing sequence for \texttt{ins}: first, for each $e \in S^*_{\mathtt{HS}}$, probe each of the $N$ copies of $e$ in $U$ in any order. Probe any remaining elements in $U$ in an arbitrary order.

    Fix part $\ell \in [N]$. Irrespective of what set $A_{\pi(\ell)}$ is assigned to part $\ell$ under the random permutation $\pi$, there is some $e \in S^*_{\mathtt{HS}} \cap A_{\ell}$ since $S^*_{\mathtt{HS}}$ is a hitting set for $\mathtt{ins}_{\mathtt{HS}}$. By construction, the copy $e_\ell \in U_\ell$ is active, so that this part is always covered by time $N \cdot |S^*_{\mathtt{HS}}| = N \cdot \OPT_{\mathtt{HS}}$. This holds for all parts with probability $1$.
\end{proof}

Finally, we prove that the solution $\Gamma_{\mathtt{HS}}$ constructed using the reduction above has size $|\Gamma_{\mathtt{HS}}| \le 8 \cdot \frac{\E_p[\ALG]}{N}$, which is $\le 8 \alpha \cdot \OPT_{\mathtt{HS}}$ by the above lemma.

Recall that $|\Gamma_{\mathtt{HS}}| = \min_{\ell \in [N]} |\Gamma_\ell|$ by definition. We claim that with probability $\ge \frac{1}{4}$, we must have $\ALG \ge \frac{N}{2} \min_{\ell \in [N]} |\Gamma_\ell| = \frac{N}{8} |\Gamma_{\mathtt
{HS}}|$, thus implying the result.

For each $\ell \in [N]$, $\Gamma_\ell$ is the smallest prefix of $\ALG$'s ordering $\sigma$ that forms a hitting set for $\mathtt{ins}_{\mathtt{HS}}$. Therefore, there must be some subset $A_i \subseteq U_{\mathtt{HS}}$ that does not intersect any smaller prefix of $\sigma \cap U_\ell$. Denote this set as $A_{\phi(\ell)}$. That is, $\phi: [N] \to [N]$ is some (fixed) mapping of parts in $U$ to subsets in $\mathtt{ins}_{\mathtt{HS}}$.

The probability (for a random permutation $\pi$) that $\phi(\ell) = \pi(\ell)$ is exactly $\frac{1}{N}$. This gives a weak bound $\E_p[\ALG] \ge \frac{1}{N} \cdot (|\Gamma_\ell| - 1) \ge \frac{1}{N} \cdot \frac{|\Gamma_\ell|}{2}$. We strengthen this next.

Denote $\tau = \frac{N}{2} \min_{\ell} |\Gamma_\ell|$, and consider the set $S_{\tau}$ of the first $\tau$ elements in sequence $\sigma$. Let $T \subseteq [N]$ be the collection of parts $\ell$ such that $\Gamma_\ell$ does not lie entirely in $S_\tau$, i.e. $\Gamma_\ell \not\subseteq S_\tau$. Consider some permutation $\pi$. If $\phi(\ell) = \pi(\ell)$ for any $\ell \in T$, then by definition, $S_\tau \cap A_{\phi(\ell)} = \emptyset$, and so none of the elements in $S_\tau \cap U_\ell$ are active. Therefore, $\ALG > \tau$ in this case. We show that this happens with probability $\ge \frac{1}{4}$.

To see this, note first that since each $\Gamma_m$ for $m \not\in T$ is a subset of $S_\tau$, and since sets $\Gamma_m \subseteq U_m$ are in different parts of the partition, we have
\[
   \frac{N}{2} \min_{\ell} |\Gamma_\ell| = \tau \ge \sum_{m \not\in T} |\Gamma_m| \ge (N - |T|) \min_{\ell \in [N]} \Gamma_\ell.
\]
Therefore, $|T| \ge \frac{N}{2}$. The result now follows from the following lemma.

\begin{restatable}{lemma}{badElementsRandomPartition}\label{lem: bad-elements-random-partition}
    Let $N$ be a positive integer and $T \subseteq [N]$ be some subset of $[N]$ of size $|T| \ge N/2$. Let $\phi: [N] \to [N]$ be an arbitrary mapping, and $\pi$ be a permutation on $[N]$ chosen uniformly at random. Then,
    \[
        \Pr\left(\exists \ \ell \in T : \phi(\ell) = \pi(\ell) \right) \ge \frac{1}{4}.
    \]
\end{restatable}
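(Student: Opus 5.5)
We bound the probability of the event $E := \{\exists\, \ell \in T : \phi(\ell) = \pi(\ell)\}$ from below by second-moment reasoning, namely the Bonferroni inequality (inclusion--exclusion truncated after two terms). For $\ell \in T$ let $E_\ell$ be the event $\pi(\ell) = \phi(\ell)$. Then
\[
    \Pr(E) \ge \sum_{\ell \in T} \Pr(E_\ell) - \sum_{\{\ell, m\} \subseteq T,\, \ell \ne m} \Pr(E_\ell \cap E_m).
\]

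The first sum is easy: $\pi(\ell)$ is uniform on $[N]$, so $\Pr(E_\ell) = 1/N$, and the sum equals $|T|/N$. For the pairwise terms we need care, because $\phi$ need not be injective. If $\phi(\ell) = \phi(m)$ with $\ell \neq m$, then $E_\ell \cap E_m$ is impossible, since $\pi$ is a bijection, so its probability is $0$. Otherwise $\Pr(E_\ell \cap E_m) = \frac{1}{N(N-1)}$. In every case the pairwise sum is at most $\binom{|T|}{2}\frac{1}{N(N-1)}$. Writing $s = |T|$, this gives
\[
    \Pr(E) \ge \frac{s}{N} - \frac{s(s-1)}{2N(N-1)} = \frac{s}{N}\left(1 - \frac{s-1}{2(N-1)}\right) \ge \frac{s}{N}\cdot\frac12,
\]
where the last step uses $s \le N$.

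This yields $\Pr(E) \ge \frac{s}{2N} \ge \frac14$ whenever $s \ge N/2$, which is exactly the claim.

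The main obstacle is handling $T$ large and a non-injective $\phi$ without losing constants. The bound above handles both uniformly: non-injectivity only helps, because it zeroes out pairwise terms. Monotonicity in $s$ is also not needed, since the inequality holds directly for every $s$ between $N/2$ and $N$. The only remaining check is the edge case $N = 1$, where the pairwise sum is empty and $\Pr(E) = 1$ trivially.
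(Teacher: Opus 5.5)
Your proof is correct. It uses the same pairwise computation as the paper but a different inequality to turn it into a lower bound.

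The paper sets $Z=\sum_{\ell\in T} Z_\ell$ and first shrinks $T$ to size $N/2$, which is allowed because the event is monotone in $T$. It then applies Paley--Zygmund, $\Pr(Z>0)\ge (\E[Z])^2/\E[Z^2]$, with $\E[Z]=1/2$ and $\E[Z^2]<1$. You instead use the degree-two Bonferroni inequality. That inequality needs only an upper bound on the pairwise intersection probabilities: exactly the bound $\Pr(E_\ell\cap E_m)\le \frac{1}{N(N-1)}$ that the paper uses to control $\E[Z^2]$, where non-injectivity of $\phi$ only zeroes out terms in either argument.

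Your route gives a bound that holds directly for every $s\in[N/2,N]$. So you need no reduction to $|T|=N/2$, which is not literally available for odd $N$ and which the paper passes over. It is also slightly more elementary, since it avoids Paley--Zygmund altogether.

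The paper's route is the standard second-moment template. In fact it would also work uniformly in $s$: it gives $\Pr(Z>0)\ge \frac{s/N}{1+\frac{s-1}{N-1}}\ge \frac{s}{2N}$. Neither approach is lossy for this lemma. At $s=N/2$, yours gives at least $3/8$ before your crude step $1-\frac{s-1}{2(N-1)}\ge\frac12$, and the paper's gives at least $1/3$. Both comfortably exceed $1/4$.
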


We prove this using the second-moment method in \cref{sec: omitted-proofs}.

\subsection{$2$-Approximation for $1$-CNA}\label{sec: cna}

In this section, we give an algorithm for CNA random variables for $1$-uniform matroids, and present our strategy of the proof of \cref{thm: cna-2-approximation}. The full analysis is deferred to \cref{sec: updated-greedy-full-proof}.

\paragraph{Updated Greedy Algorithm.} A natural algorithm for the \matroidSBFC problem is the \emph{updated} or \emph{conditional greedy} algorithm: it first probes the element $e_1 \in U$ with the highest marginal probability $\Pr_{p}(e_1 \mathrm{\ is \ active})$ of being active, and stops if $e$ is indeed active. If not, it probes the element $e_2$ with the highest marginal probability $\Pr_{p}(e_2 \mathrm{\ is \ active } | e_1 \mathrm{\ is \ not \ active})$ given that $e_1$ is inactive, and so on. Formally, the updated greedy algorithm probes elements $e_1, \ldots, e_n$, defined as follows: having probed $e_1, \ldots, e_{j - 1}$,
\begin{equation}\label{eqn: updated-greedy-rule}
    e_j := {\arg\max}_{e \in U} {\Pr}_{p}(e \mathrm{\ is \ active} | \mathrm{\ none \ of \ } e_1, \ldots, e_{j - 1} \mathrm{\ are \ active}).
\end{equation}

For brevity, we will think of random variables $X \in \{X_e: e \in U\}$ as boolean variables, so that $\Pr(X = 1)$ is denoted as $\Pr(X)$ and $\Pr(X = 0)$ is denoted $\Pr(\overline{X})$. For any $S, T \subseteq U$, the probability $\Pr(X = 0 \ \forall \ X \in S \ \text{and} \ X = 1 \ \forall \ X \in T)$ is denoted as $\Pr(\overline{S}, T)$. \cite{feige2004approximating} showed that this algorithm is a $4$-approximation for $1$-\MSSC and \cite{kaplan2005learning} noted that the algorithm and its approximation guarantee extend to $1$-\SBFE. We improve this to a $2$-approximation when the variables $\{X_e: e \in U\}$ are conditionally negatively associated (CNA). An important consequence of CNA is the following property:

\paragraph{Conditional Negative Association.} We will assume that the random variables $\{X_e: e \in U\}$ satisfy the following \emph{conditional negative cylinder property} or CNCP: denote by $E$ the event that $X = v_X$ for all $X$ in a given subset $S \subseteq U$ for a given realization $v \in \{0, 1\}^S$. Then, for all disjoint $A, B \subseteq U \setminus S$, we must have
\begin{equation}\label{eqn: conditional-negative-cylinder-property}\tag{CNCP}
    \Pr(A, \overline{B} | E) \ge \Pr(A | E) \times \Pr(\overline{B} | E).
\end{equation}

We omit the formal definition of conditional negative association since we will only use \ref{eqn: conditional-negative-cylinder-property} in this work. 

To analyze the algorithm, first, let us relabel the indices so that $X_1, \ldots, X_n$ denotes the probing order of the updated greedy algorithm. For all $j \in [n]$, denote the joint probability $q_j = \Pr(X_j, \overline{X_1}, \ldots, \overline{X_{j - 1}})$, the conditional probability $c_j = \Pr(X_j | \overline{X_1}, \ldots, \overline{X_{j - 1}})$. The `remaining' probability $r_j = \Pr(\overline{X_1}, \ldots, \overline{X_{j - 1}})$ denotes the total probability mass left uncovered at the start of step $j$. Therefore, we have $q_j = c_j r_j$. The cost of the updated greedy algorithm can be written in several ways:
\begin{lemma}
The cost of the updated greedy algorithm is
\begin{align}
    \ugreedy &= \sum_{j \in [n]} j q_j = \sum_{j \in [n]} r_j = \sum_{j \in [n]} \frac{q_j}{c_j}. \label{eqn: updated-greedy-cost}
\end{align}
\end{lemma}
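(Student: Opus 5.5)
We prove the three expressions for $\ugreedy$ in turn, directly from the definitions of $q_j$, $r_j$, $c_j$. Throughout we use that every scenario in the support has at least one active element, since the $1$-uniform matroid is spanned exactly by non-empty sets. Hence the algorithm stops at a finite step $j \le n$ almost surely.

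\textbf{First expression.} The updated greedy algorithm probes in the fixed order $X_1, X_2, \ldots$ (the order depends only on the failures seen so far, which is the only history possible before stopping). It stops exactly at step $j$ on the event $\{X_j = 1, X_1 = \cdots = X_{j-1} = 0\}$, which has probability $q_j$. These events for $j \in [n]$ are disjoint, and their union has probability $1$ because some element is active. Therefore the expected number of probes is $\sum_{j} j q_j$.

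\textbf{Second expression.} We swap the order of summation: write $j = \sum_{i \le j} 1$, so
\[
\sum_j j q_j = \sum_{i \in [n]} \sum_{j \ge i} q_j .
\]
The inner sum $\sum_{j \ge i} q_j$ is the probability that the algorithm is still running at the start of step $i$. This is the event that $X_1, \ldots, X_{i-1}$ are all inactive, whose probability is $r_i$. Hence $\ugreedy = \sum_i r_i$. Equivalently, the cost is $\sum_i \Pr(\text{step } i \text{ is executed})$.

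\textbf{Third expression.} By the chain rule, $q_j = \Pr(X_j \mid \overline{X_1}, \ldots, \overline{X_{j-1}}) \cdot \Pr(\overline{X_1}, \ldots, \overline{X_{j-1}}) = c_j r_j$. So $r_j = q_j / c_j$ whenever $c_j > 0$.

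The only point needing care, and the one I would flag, is degenerate steps with $r_j = 0$ or $c_j = 0$, where the conditional probability is undefined or the ratio is $0/0$. I would adopt the convention that such terms contribute $0$, since the algorithm has already stopped almost surely, or truncate the sum at the last $j$ with $r_j > 0$. This truncation is consistent: $r_j > 0$ forces $c_j > 0$ by the greedy choice. If $c_j = 0$, then every remaining element has conditional probability $0$ of being active, which contradicts the fact that conditioned on the event of positive probability $r_j$, some unprobed element is active.
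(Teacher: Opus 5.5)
Your proof is correct and follows the paper's argument. The paper also reads off $\sum_j j q_j$ from the distribution of the stopping step, and obtains $\sum_j r_j$ by charging one unit at each step $j$, whose start is reached with probability $r_j$; this is exactly your tail-sum exchange. It then uses $q_j = c_j r_j$ for the third form. Your explicit assumption that every scenario contains an active element is left implicit in the paper, and it is genuinely needed: if the empty scenario had probability $p_\emptyset>0$, the first two expressions would differ by $n p_\emptyset$. Your treatment of the degenerate steps with $r_j = 0$ is likewise a refinement the paper omits.
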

\begin{proof}
    The probability that the algorithm takes exactly $j$ probes is precisely $\Pr(X_j, \overline{X_1}, \ldots, \overline{X_{j - 1}}) = q_j$, and therefore by definition the cost is $\sum_{j \in [n]} j q_j$.
    
    Viewed another way, the algorithm pays a price $1$ after each probe until it finds a nonzero random variable. The price paid at the start of time $j$ is $\Pr(\overline{X_1}, \ldots, \overline{X_{j - 1}}) = r_j$. Therefore, the cost is $\sum_{j \in [n]} r_j$. Finally, $r_j = \frac{q_j}{c_j}$ by definition of conditional expectation, so that the cost can also be written as $\sum_{j \in [n]} \frac{q_j}{c_j}$.
\end{proof}

Let $X_1^*, \ldots, X_n^*$ denote the optimal order. For $t \in [n]$, let $q^*_t = \Pr(X^*_t, \overline{X^*_1}, \ldots, \overline{X^*_{t - 1}})$, $c^*_t = \Pr(X^*_t | \overline{X^*_1}, \ldots, \overline{X^*_{t - 1}})$, and $r^*_t = \Pr(\overline{X^*_1}, \ldots, \overline{X^*_{t - 1}})$. Analogous to \cref{eqn: updated-greedy-cost}, the optimal cost can be written in several ways:
\begin{align}
    \OPT &= \sum_{t \in [n]} t q^*_t = \sum_{t \in [n]} r^*_t = \sum_{t \in [n]} \frac{q^*_t}{c^*_t}. \label{eqn: opt-cost} 
\end{align}

At a high level, the analysis partitions the updated greedy algorithm's probes into several phases of consecutive probes, and maps them to carefully constructed `equivalent' phases in the optimal algorithm. Then, using conditional negative association, we show that the cost incurred by the updated greedy algorithm in each phase does not exceed a \emph{proxy cost} incurred by the optimal in the corresponding phase, using \cref{eqn: opt-cost} above. The proxy cost is designed so that it is always within factor $2$ of the true cost, giving the $2$-approximation. This is formalized in \cref{sec: cna-general-case}.

\section{Adaptive approximation for Graph Probing}\label{sec: graph-probing}

In this section, we consider the Graph Probing (\gp) problem, where we are given a graph $G = (V, E)$, knapsack size $1 \le k \le |E|$, and probabilities $\Pr(Y_v = 1)$ for independent Bernoulli random variables $Y_v$ on each vertex $v \in V$. These induce random variables $X_{uv} = Y_u \lor Y_v$ on each edge $uv \in E$, i.e., $X_{uv} = 1$ if and only if at least one of $Y_u$ and $Y_v$ is $1$. The goal is to (adaptively) probe the edges in some order until $k$ nonzero edges are found, or until all edges have been probed.

We give an $O(\log k)$-approximation for \gp (Theorem \ref{thm: graph-probing}), improving upon the $\Omega(n^{0.16})$ lower bound on approximations for adaptive algorithms for \kCSFE (\cite{jiang2019cost}).
Before proving the main result of the section, we first show that adaptivity is necessary for \gp by giving an instance where an adaptive algorithm can do significantly better than a non-adaptive one, by a polynomial in $n = |V|$ factor.

\AdaptivityGapGraphProbing*

The proof relies on the following idea: for any star graph where the leaf vertices are always inactive and only the central vertex can be active, there are only two outcomes: either all edges are active or all edges are inactive. While an adaptive algorithm can probe a single edge to determine the outcome for all other edges, a non-adaptive algorithm might have to probe all edges. We choose our graph as the union of many star graphs; the detailed proof can be found in \cref{app: graph-probing-omitted-proofs}.

This motivates us to consider adaptive algorithms for \gp. Our main result gives an adaptive algorithm that has approximation $O(\log k)$ with respect to any adaptive algorithm:

\graphProbing*

\subsection{Algorithm sketch}

Our main idea is to reduce this to the problem of probing vertices: we consider a different problem, where instead of probing edges, we are required to probe the vertices. Upon probing vertex $v$, if the realized value of the corresponding random variable $Y_v$ is $1$, we get reward $\deg_v$, and $0$ otherwise. Our goal is to (adaptively) probe the vertices in some order until our total reward is $\ge k$. We call this problem \emph{Vertex Probing}. Vertex Probing is equivalent to the Stochastic min-Knapsack problem \citep{deshpande2016approximation}, where given (unrealized) independent Bernoulli random variables $Y_v, v \in V$ with corresponding sizes $s_v \ge 0$ and some knapsack size $k > 0$, we seek to probe the random variables until we probe set $S \subseteq V$ with $\sum_{v \in S} s_v Y_v \ge k$, or until each random variable has been probed.

Our graph probing algorithm probes edges with some `guidance' from an algorithm for Vertex Probing. Our approach consists of three broad steps, described next. We give the formal algorithm and the full analysis in \cref{app: graph-probing-omitted-proofs}.

\paragraph{Vertex Probing subroutine.} Since the vertex variables $Y_v$ are independent, Vertex Probing is the Stochastic min-Knapsack problem, with the size equal to the degree. There exists an adaptive $3$-approximation algorithm due to \cite{deshpande2016approximation} for Stochastic min-Knapsack, and therefore for Vertex Probing. We treat this algorithm as a subroutine.
    
\paragraph{Graph Probing to Vertex Probing.} We show in \cref{sec: graph-probing-to-vertex-probing} that any algorithm for \gp on graph $G$ with knapsack size $k$ can be converted to an algorithm for Vertex Probing on the same graph $G$ and knapsack size $k$, while at most doubling the cost (number of probes):

\begin{restatable}{lemma}{GraphProbingToVertexProbing}\label{lem: transformation-graph-probing-to-vertex-probing}
    Given a graph $G$, knapsack size $k$, and an adaptive algorithm for \gp with cost $N$, there exists an adaptive algorithm for Vertex Probing with cost $\le 2N$.
\end{restatable}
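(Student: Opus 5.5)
We simulate the given \gp algorithm $\mathcal{A}$ (cost $N$) by a Vertex Probing algorithm $\mathcal{B}$. Whenever $\mathcal{A}$ wants to probe edge $uv$, $\mathcal{B}$ probes $Y_u$ and $Y_v$, skipping any vertex it has already probed. From these values $\mathcal{B}$ computes $X_{uv} = Y_u \lor Y_v$ and passes this answer back to $\mathcal{A}$.

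Because the vertex variables are realized once and shared, the answers $\mathcal{B}$ feeds to $\mathcal{A}$ are exactly the edge values under the same realization $Y$. Hence, pathwise, $\mathcal{A}$ follows the same decision tree branch as it would in the real \gp instance. Each simulated edge probe costs at most $2$ vertex probes, so on every realization $\mathcal{B}$ uses at most twice as many probes as $\mathcal{A}$. Taking expectations gives cost at most $2N$.

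It remains to check that $\mathcal{B}$ is a valid Vertex Probing algorithm, meaning it stops only when the total reward $\sum_{v \text{ probed}} \deg_v Y_v \ge k$ or when all vertices have been probed. Suppose $\mathcal{A}$ stops after finding $k$ active edges $e_1,\dots,e_k$. Each active edge has at least one active endpoint, and $\mathcal{B}$ has probed both endpoints of each $e_i$. Charge each $e_i$ to one of its active endpoints. An active vertex $v$ receives at most $\deg_v$ charges, since all edges charged to it are distinct edges incident to $v$. Therefore
\[
\sum_{v \text{ probed}} \deg_v Y_v \;\ge\; \#\{\text{active edges found}\} \;\ge\; k .
\]

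If instead $\mathcal{A}$ stops because all edges are probed, there are two cases. If at least $k$ edges are active, the charging bound above applies. If fewer than $k$ edges are active, the Vertex Probing instance may not be certifiable by reward alone. In that case $\mathcal{B}$ has probed every non-isolated vertex. Isolated vertices carry reward $\deg_v = 0$, so probing them is irrelevant. We therefore either treat them as removed, or observe that the reward target is unreachable in the vertex instance exactly when it is unreachable in the edge instance, since the total reward is $\sum_v \deg_v Y_v \ge \#\{\text{active edges}\}$.

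The main subtlety is this termination and equivalence issue. The reward $\sum_v \deg_v Y_v$ can exceed the number of active edges, because an edge with both endpoints active is counted twice. So Vertex Probing may be able to stop earlier than Graph Probing, but it can never be forced to continue after $\mathcal{A}$ stops, and that direction is the only one needed for the upper bound. We handle the degenerate case by letting $\mathcal{B}$ probe any remaining (isolated, zero-reward) vertices only if the Vertex Probing definition requires it. Under the convention that zero-size items are omitted, this costs nothing extra. The rest of the argument is the routine pathwise coupling described above.
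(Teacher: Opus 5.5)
Your proposal is correct and follows essentially the paper's proof. It uses the same simulation (probe both endpoints, skip already-probed ones, feed back $Y_u \lor Y_v$), the same pathwise factor-$2$ cost bound, and the same observation that every active probed edge is counted at least once in $\sum_{v} \deg_v Y_v$. You are more careful than the paper about isolated zero-degree vertices, which it glosses over by asserting that all vertices get probed. One aside of yours is wrong but unused: that the vertex target is unreachable \emph{exactly} when the edge target is. Only the direction ``vertex-unreachable implies edge-unreachable'' holds, and your argument never needs the other.
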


In particular, the optimal value for Vertex Probing is at most twice the optimal value for \gp. Our transformation is simple: each time the algorithm for \gp seeks to probe some edge $uv$, the corresponding Vertex Probing algorithm probes both $u$ and $v$ (unless they have already been probed).
    
\paragraph{Vertex Probing to Graph Probing.} Next, in \cref{sec: vertex-probing-to-graph-probing}, we give an algorithm (\cref{alg: vertex-probing-to-graph-probing}) that converts an algorithm for Vertex Probing with knapsack size $k$ to an algorithm for \gp with knapsack size $k/2$, while increasing the number of probes by at most $k$.

The main idea behind our algorithm is as follows: each time the algorithm for Vertex Probing probes a vertex $v$, the corresponding \gp algorithm starts probing edges incident on $v$, until it finds a $0$ edge. But the algorithm cannot stop now and defer to the vertex probing algorithm to find the next vertex to probe since the neighboring vertices are no longer independent in the graph probing instance. To resolve this issue, the algorithm keeps recursively probing all edges at a vertex where all previous probes at incident edges have resulted in $1$ edges. We call this subroutine \textsc{ProbeVertex} and formalize it as \cref{alg: probe-vertex}.

This ensures that, conditioned on the probes done so far, the residual instance of graph probing has independent random variables at the vertices:

\begin{restatable}{lemma}{ResidualInstance}\label{lem: graph-probing-residual-instance-is-independent-main}
    Consider graph $G = (V, E)$ where edges $F \subseteq E$ have been probed and the residual graph is $G' = (V', E')$. At the end of each run of \textsc{ProbeVertex}, the vertex random variables $\{Y_v: v \in V'\}$ are independent of the outcomes of the probed edges $F$.
\end{restatable}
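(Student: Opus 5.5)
We need to propose a proof of the residual independence lemma. Since \textsc{ProbeVertex} has not been formalized above, the plan fixes an interpretation consistent with the sketch. Starting from a vertex $v$, it probes the incident edges of $v$ one at a time until a $0$ edge is found or all of them are exhausted. Every probed edge $vw$ with outcome $1$ does not determine whether the activity comes from $v$ or from $w$. So the procedure recursively probes the edges at $w$, and continues recursively at every vertex all of whose probed incident edges so far returned $1$. A vertex leaves the residual graph $G'$ once its value is determined, or once all its incident edges have been probed.

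The plan is to characterize the conditioning event exactly. The outcome of an edge probe is $X_{uv}=0$, which is equivalent to $Y_u=0\wedge Y_v=0$, or $X_{uv}=1$, which is $Y_u\vee Y_v=1$. The information from all probed edges $F$ is then an event $\mathcal{E}$ in the $\sigma$-algebra of $\{Y_u\}$. Zero-edges contribute product events on vertex values. One-edges contribute OR-clauses. The key invariant to prove by induction over the probes, and over the recursive calls inside a single \textsc{ProbeVertex} run, is the following. At the end of each run, every OR-clause $Y_u\vee Y_v$ coming from a $1$ edge in $F$ has at least one endpoint that either (a) has already been forced to $0$ by a $0$ edge, which forces the other endpoint to $1$, or (b) has had all its incident edges probed and is removed from $G'$. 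In either case each clause involves only vertices outside $V'$, or else it pins a vertex of $V'$ to a determined value. In the latter case we remove that vertex too, consistent with the residual definition.

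Given the invariant, write $\mathcal{E}=\mathcal{E}_{\text{out}}\cap\mathcal{E}_{\text{fix}}$. Here $\mathcal{E}_{\text{out}}$ depends only on $\{Y_u:u\notin V'\}$. The event $\mathcal{E}_{\text{fix}}$ is a conjunction of constraints $Y_u=0$ or $Y_u=1$ on removed vertices; no constraint involves a vertex of $V'$, since vertices of $V'$ only touch probed $0$ edges on the other side, which were removed. Independence of the original $Y$'s then gives
\[
\Pr\bigl(\{Y_v\}_{v\in V'}\in A \mid \mathcal{E}\bigr)=\Pr\bigl(\{Y_v\}_{v\in V'}\in A\bigr)=\prod_{v\in V'}\Pr(Y_v\in A_v)
\]
for product sets $A$. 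This is the required independence, with unchanged marginals $p_v$ for the residual vertices.

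The main obstacle is the invariant. The recursion must terminate only when no ``dangling'' OR-clause connects a residual vertex to another residual vertex, or to a partially explored vertex. I would handle this by defining an explicit frontier set $Q$ of vertices incident to an unresolved $1$ edge. Each recursive call processes a vertex in $Q$ until it either finds a $0$ edge, which resolves the clauses at that vertex, or exhausts its edges, which removes it. In both cases all clauses at that vertex are resolved, and any new unresolved clauses are pushed onto $Q$. Termination follows since each step probes a new edge, and $Q=\emptyset$ at the end gives the invariant. One subtle point to check is that a vertex $w$ reached via a $1$ edge and then found to have a $0$ edge gets $Y_w=0$, which forces its $1$ neighbors to be $1$. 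Those neighbors therefore have determined values and are excluded from $V'$ rather than left correlated.
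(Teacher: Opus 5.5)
You have the same skeleton as the paper: show that the event generated by the probed outcomes on $F$ is measurable with respect to $\{Y_u : u \notin V'\}$, then factor using independence of the original $Y$'s. The paper does this by summing over assignments to $V \setminus V'$ compatible with the outcomes on $F$.

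\textbf{The gap is in the invariant you feed into that step.} Case (b) allows a probed $1$-edge $uv$ where $u$ has exhausted its edges and been removed, while $v$ is still in $V'$. Such a clause neither involves only removed vertices nor pins $Y_v$. It genuinely breaks independence: $\Pr(Y_v = 1 \mid Y_u \vee Y_v = 1) = p_v / (1 - (1-p_u)(1-p_v)) > p_v$ whenever $0 < p_v < 1$ and $p_u < 1$. So ``In either case each clause involves only vertices outside $V'$'' does not follow from (a)/(b). Your frontier set $Q$ hints at the repair, since neighbours of an exhausted vertex get pushed and processed. But you then say $Q = \emptyset$ ``gives the invariant'', meaning the insufficient one. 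Relatedly, your remark about vertices of $V'$ touching probed $0$-edges ``on the other side'' is confused. A $0$-edge pins both endpoints to $0$ and both are removed, so no vertex of $V'$ touches one at all.

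\textbf{The invariant you need is simpler, and it is the one the paper proves:} at the end of each run, no edge of $G$ incident to a vertex of $V'$ has been probed. For the paper's \textsc{ProbeVertex} this is a short case check rather than a frontier construction:
\begin{itemize}
\item Every call \textsc{ProbeVertex}$(v)$ removes $v$ from $V'$, either via \textsc{RemoveVertex} or via the final removal, unless $k'$ reaches $0$ and the algorithm stops. A probed loop at $v$ is an edge whose other endpoint was already removed.
\item A $0$-edge $vu_i$ triggers \textsc{RemoveVertex} on both endpoints.
\item Every $1$-edge $vu_j$ probed at $v$ is followed by a call \textsc{ProbeVertex}$(u_j)$: for all $j$ in the all-ones branch, and for $j < i$ when $vu_i = 0$. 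So its other endpoint is removed too.
\end{itemize}
With this invariant your factorization goes through verbatim.

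Your treatment of neighbours pinned to $1$ also differs from the paper. You exclude them from $V'$ because their value is known; the paper instead calls \textsc{ProbeVertex} on them. Exclusion alone preserves independence, but it leaves unprobed, deterministically-$1$ edges from those vertices into $V'$. Since the lemma concerns the paper's procedure, you should verify the invariant for that procedure directly rather than for your own variant.
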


Following this idea, we get the following algorithm for \gp: given graph $G$ and knapsack size $k$, run \cite{deshpande2016approximation}'s algorithm for Vertex Probing with knapsack size $k$. This is called the first \emph{phase} of the algorithm. While fewer than $k$ non-zero edges have been found, re-run this on the residual instance. By the above observations, the number of probes of this algorithm is guaranteed to be within a constant factor of the number of probes of the optimal adaptive algorithm for \gp.

Next, we use the transformation described in (3) to convert this Vertex Probing algorithm into a \gp algorithm, with the guarantee that at least $k/2$ non-zero edges are found. That is, at least half of the knapsack is filled in this iteration:
\begin{restatable}{lemma}{HalfBagSize}\label{lem: graph-probing-half-bag-size}
    At the end of the first phase, \cref{alg: vertex-probing-to-graph-probing} has either terminated or has $k' \le \frac{k}{2}$.
\end{restatable}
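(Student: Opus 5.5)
The plan is to track the residual requirement $k'$ through the first phase and split into two cases, according to whether the Vertex Probing algorithm actually collects reward $k$. The relevant guarantee is that the Vertex Probing subroutine of \cite{deshpande2016approximation} runs until either the total reward $\sum_{v \text{ probed}} \deg_v Y_v \ge k$ or every vertex has been probed. Here degrees are taken in the residual graph at the start of the phase.

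\emph{The easy case.} Suppose every vertex gets probed without reaching reward $k$. Then \textsc{ProbeVertex} has been invoked at every vertex. Every edge is then probed or certified, so the Graph Probing algorithm has either found $k$ active edges or exhausted all edges, and it terminates.

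\emph{The main case.} Suppose the reward reaches $k$. By \cref{lem: graph-probing-residual-instance-is-independent-main}, each time the Vertex Probing algorithm asks for a vertex $v$, the variable $Y_v$ is still independent of everything probed so far. So the simulation is faithful: the Graph Probing run follows the Vertex Probing run with the same realizations $Y_v$. Let $A$ be the set of vertices $v$ probed in this phase with $Y_v = 1$, so $\sum_{v\in A}\deg_v \ge k$.

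For each $v \in A$, \textsc{ProbeVertex}$(v)$ probes edges at $v$. Since $Y_v = 1$, every such edge is active, so it never finds a $0$ edge at $v$, and it probes all edges at $v$ that remain unprobed. Every edge incident to some vertex of $A$ is therefore probed by the end of the phase (either at that call or earlier), and all of them are active. The number of distinct active edges found is at least $|E(A)| \ge \frac12\sum_{v\in A}\deg_v \ge k/2$, because each edge has at most two endpoints in $A$ and so is counted at most twice in the degree sum.

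Hence at least $k/2$ active edges are found and $k' \le k - k/2 = k/2$, unless the algorithm has already terminated.

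\emph{Main obstacle.} The difficult step is justifying the faithful-simulation claim precisely. I need that revealed vertex values $Y_v$ coincide with the realizations the Vertex Probing algorithm sees, and that edges probed by earlier recursive calls are counted correctly rather than lost.

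I would handle this in two parts. First, invoke \cref{lem: graph-probing-residual-instance-is-independent-main} at the moment each vertex is requested, so the conditional law of $Y_v$ is unchanged. Second, account for edges probed by earlier recursive calls: these were probed before the request for $v$, and they were active because $Y_v = 1$. So every edge incident to $A$ is counted as an active edge regardless of which call probed it. If degrees are instead taken in the shrinking residual graph, the same double-counting argument still gives at least half of the collected reward in distinct active edges.
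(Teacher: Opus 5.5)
Your main case rests on a false claim: the simulation is not faithful, and \cref{lem: graph-probing-residual-instance-is-independent-main} cannot make it so. $\ALG^v$ never receives $Y_v$; it receives the reported value $b_v$. \textsc{ProbeVertex} sets $b_v=1$ whenever every edge at $v$ turns out active. That also happens when $Y_v=0$ but all neighbours of $v$ are active, and edge probes cannot distinguish the two cases; the paper deliberately accepts this misreporting. In addition, if $\ALG^v$ requests a vertex already removed from $V'$ by an earlier recursive call, it is simply handed the stored $b_v$. The independence lemma only says the conditional law of $Y_v$ is unchanged for $v\in V'$; it says nothing about the value reported back.

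Hence the stopping rule of $\ALG^v$ certifies $\sum_v b_v\deg_v\ge k$, not $\sum_{v\in A}\deg_v\ge k$ for your set $A=\{v \text{ requested}: Y_v=1\}$, and the latter can fail. Let $G$ contain a star with centre $v$, $Y_v=0$, and $d$ active leaves $u_1,\ldots,u_d$ of degree $1$. Take $k=2d$, and suppose $\ALG^v$ requests $v$ and then $u_1,\ldots,u_d$. Every request returns $b=1$, so $\ALG^v$ stops with reported reward $d+d=k$. Yet $\sum_{v\in A}\deg_v=d<k$. Only $d$ active edges have been found, so $k'=d>0$, and if $G$ has further edges the algorithm has not terminated. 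The lemma's conclusion survives here ($k'=k/2$), but your derivation of it does not.

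The repair is to run your double-counting on the reported set $A':=\{v \text{ requested in the phase}: b_v=1\}$ instead. By construction of \textsc{ProbeVertex}, $b_v=1$ is set only after every edge of $G$ at $v$ has been probed and found active; an inactive edge at $v$ would have triggered \textsc{RemoveVertex} and $b_v=0$. So every edge incident to $A'$ lies in the probed set $F$ and is active, and your counting gives $2\sum_{e\in F}X_e\ge\sum_{v\in A'}\deg_v=\sum_v b_v\deg_v\ge k$. This is precisely the paper's proof. It needs neither faithfulness nor independence; it uses only what $b_v=1$ certifies about the edges at $v$. So the ``main obstacle'' you identified should be sidestepped rather than overcome.
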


Further, in this phase, \cref{alg: vertex-probing-to-graph-probing} uses fewer probes than the optimal uses overall:

\begin{lemma}\label{lem: graph-probing-first-phase-simple}
    The number of probes in the first phase of Algorithm~\ref{alg: vertex-probing-to-graph-probing} is at most $8 \ \OPT^e(G, k)$, where $\OPT^e(G, k)$ is the cost of the optimal algorithm for the graph probing instance.
\end{lemma}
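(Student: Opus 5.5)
The plan is to split the first-phase probes into two parts. The first part consists of probes made ``on behalf of'' the Vertex Probing subroutine, i.e., the first edge probed at each vertex that the subroutine selects. The second part consists of the extra edge probes performed inside \textsc{ProbeVertex}. The first part will be charged to the optimal Vertex Probing cost, and the second to the number of non-zero edges found, which is at most about $k$.

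\textbf{Step 1: the subroutine's cost.} By \cref{lem: transformation-graph-probing-to-vertex-probing}, the optimal adaptive Vertex Probing cost on $(G,k)$ is at most $2\,\OPT^e(G,k)$. The algorithm of \cite{deshpande2016approximation} is a $3$-approximation for Stochastic min-Knapsack. Vertex Probing is exactly this problem, because by \cref{lem: graph-probing-residual-instance-is-independent-main} the vertex variables remain independent throughout the simulation. So the expected number of vertices selected by the subroutine in the first phase is at most $6\,\OPT^e(G,k)$.

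For this to hold, the simulated run must be distributed as a genuine Vertex Probing run. I would argue this via \cref{lem: graph-probing-residual-instance-is-independent-main}: each vertex the subroutine asks for has $Y_v$ distributed as its prior given the history. Its value is then revealed by \textsc{ProbeVertex}: $Y_v=1$ iff all incident edges are $1$ …

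\textbf{Step 2: the overhead of \textsc{ProbeVertex}.} This is the delicate point, since a vertex query may be answered implicitly. I will bound the extra probes by charging. Every edge probe inside \textsc{ProbeVertex} that returns $1$ counts toward the knapsack. The algorithm stops once $k$ non-zero edges are found, and an edge probed at a vertex whose earlier incident probes were all $1$ either returns $1$ or terminates that branch. Each branch termination is a $0$ edge, and there is at most one per recursive call. Each recursive call is triggered either by the subroutine (already counted in Step 1) or by a previous $1$-edge (at most $k$ of these).

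So the total number of probes in the phase is at most (number of $1$-edges) $+$ (number of $0$-edges) $\le k + (k + \#\text{subroutine calls})$. That is roughly $2k + 6\,\OPT^e$ in expectation.

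\textbf{Step 3: bounding $k$ by $\OPT^e$.} The term $k$ must also be absorbed, using $\OPT^e(G,k)\ge k$ whenever the instance requires $k$ active edges. If fewer than $k$ edges are active with positive probability, the optimal algorithm still probes every edge, so $\OPT^e \ge \min(k,|E|)$ in all cases. Combining gives at most $(2+6)\OPT^e = 8\,\OPT^e(G,k)$.

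\textbf{Main obstacle.} The hard part is Step 2 together with the coupling in Step 1. I must verify carefully that a $0$ edge is produced at most once per \textsc{ProbeVertex} invocation branch, rather than once per recursion level. I also need the recursion on a $1$-edge to charge only to that $1$-edge. If the counting turns out to be $2$ per $1$-edge instead, the constant shifts, and I would tighten it by charging the terminating $0$-edge of a recursive child to the $1$-edge that spawned it.
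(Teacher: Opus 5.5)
Your decomposition matches the paper's proof: at most $6\,\OPT^e(G,k)$ top-level calls to \textsc{ProbeVertex} (the $3$-approximation combined with $\OPT^v \le 2\,\OPT^e$), then $M_0 \le M_1 + (\text{number of top-level calls})$ from the recursion structure, then $M_1 \le k \le \OPT^e(G,k)$. Your Step 2 counting is correct as stated. Each invocation probes at most one $0$-edge, and each non-root invocation is spawned by a distinct $1$-edge; this is exactly \cref{lem: probe-vertex-sees-more-ones-than-zeroes}. So the worry in your last paragraph is unfounded, and Step 3 is also fine.

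The gap is in Step 1. Your claim that $Y_v = 1$ iff all edges incident on $v$ are $1$ is false. If $Y_v = 0$ but every neighbor of $v$ is active, every incident edge is $1$, and \textsc{ProbeVertex} reports $b_v = 1$. So the subroutine is not run on a genuine Vertex Probing realization. It receives reports $b_v \ge Y_v$ with one-sided errors. For vertices already removed during recursion, it also receives values fixed by earlier probes, since \cref{lem: graph-probing-residual-instance-is-independent-main} only concerns vertices still in $V'$. Consequently the $3$-approximation guarantee of \cite{deshpande2016approximation} does not directly bound the number of queries.

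The missing ingredient is the paper's monotonicity step, \cref{lem: graph-probing-misreporting-only-helps}. One may assume w.l.o.g.\ that the cost of $\ALG^v$ on $(V,k)$ is at least its cost on $(V\setminus\{v\}, k-\deg_v)$. The paper justifies this by a simulation argument: otherwise, replace the latter run by the former, using a fresh coin for $Y_v$. A false positive makes $\ALG^v$ subtract $\deg_v$ from its residual knapsack, so it can only decrease the number of queries relative to the run on the true, independent $Y$'s. False negatives never occur, since $b_v = 0$ forces $Y_v = 0$. Together these give the bound of $2\alpha\,\OPT^e(G,k)$ calls (\cref{lem: graph-probing-vertex-actual-vs-reported-outcome}) that your Step 1 needs.
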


We repeat this process on the remaining instance until $k$ nonzero edges are found (or we have probed all edges). Since we halve the required number of edges to fill the knapsack in each iteration, this algorithm takes at most $1 + \log_2 k$ iterations. Since our cost is within a constant factor of the cost of the optimal \gp algorithm at each iteration, our total cost is within factor $O(\log_2 k)$ of the optimal.

These above lemmas together imply the theorem:

\begin{proof}[Proof of Theorem \ref{thm: graph-probing}]
    We will show that Algorithm \textsc{VertexProbingToGraphProbing} on input graph $G$ and knapsack size $k$ has cost at most $(8 + 8 \log_2 k) \ \OPT^e(G, k)$.
    Let $F$ be the (random) set of edges probed in phase 1. Let $G' = (V', E')$ be the graph remaining at the end of the phase, and $k'$ be the remaining knapsack size, i.e., there were exactly $k - k'$ nonzero edges in $F$. 
    
    If $k' = 0$, then by Lemma \ref{lem: graph-probing-first-phase-simple}, we have $\ALG^e(G, k) \le 8 \ \OPT^e(G, k)$. If $k' \neq 0$ and $G'$ is empty, then the graph has fewer than $k$ nonzero edges. Therefore, $\ALG^e(G, k) = \OPT^e(G, k) = |E|$.
    Otherwise, by \Cref{lem: graph-probing-half-bag-size}, we have $0 < k' \le \frac{k}{2}$. 
    Moreover, by \Cref{lem: graph-probing-residual-instance-is-independent-main}, the residual instance with graph $G' = (V', E')$ and knapsack size $k'$ is another instance of \gp. Then, by induction on the number of phases, we have
    \(\ALG^e(G', k') \le (8 + 8 \log_2 k') \ \OPT^e(G', k').\)
    Further, since there are exactly $k - k'$ nonzero edges in $F$, $\OPT^e(G, k)$ must find at least $k'$ nonzero edges in the residual instance, so that
    \(\OPT^e(G, k) \ge k' + \OPT^e(G', k') \ge \OPT^e(G', k').\)
    Therefore, by Lemma \ref{lem: graph-probing-first-phase-simple} and induction on the number of phases, we have
    \begin{align*}
        \ALG^e(G, k) \,\, &\le \,\, 8 \ \OPT^e(G, k) + \ALG^e(G', k') \,\, \le \,\, 8 \ \OPT^e(G, k) + (8 + 8 \log_2 k') \ \OPT^e(G', k') \\
        &\le \,\, 8 \ \OPT^e(G, k) + (8 + 8 \log_2 k') \ \OPT^e(G, k) \,\, \le \,\, 8(1 + \log_2 k) \ \OPT^e(G, k),
    \end{align*}
    where the final inequality holds since $k' \le k/2$.
\end{proof}

\subsection{Hypergraph Probing}\label{sec: hypergraph-probing}

In this section, we extend the adaptive algorithm for Graph Probing (Section~\ref{sec: graph-probing}) to the Hypergraph Probing (\hgp) problem. Recall that in \hgp, we have independent latent Bernoulli variables $L = \{Y_1, \ldots, Y_m\}$ with $\Pr(Y_j = 1) = p_j$, observed variables $R = \{X_1, \ldots, X_n\}$ with $X_i = \bigvee_{j \in N(i)} Y_j$, and rank $\rho := \max_i |N(i)|$. The goal is to probe observed variables until $k$ active ones are found, minimizing the expected number of probes. 
Note that Graph Probing is the special case $\rho = 2$.

The algorithm and analysis are direct extensions of the ideas from Graph Probing. The main structural difference is that each observed variable depends on up to $\rho$ latent variables (instead of $2$), which affects the analysis in two places: the branching factor in the recursion tree and the rate at which the residual knapsack size decreases per phase.

\paragraph{Overview of the algorithm.} The algorithm follows the same overall structure with the following natural generalizations:
\begin{itemize}
    \item \textbf{Latent Variable Probing (\textsc{LVP}).} We define a \emph{Latent Variable Probing} problem: probe latent variables $Y_j$ adaptively until $\sum_{j \in S} w_j Y_j \ge k$, where $w_j := |\{i \in [n] : j \in N(i)\}|$ is the degree of $Y_j$ in the hypergraph. Since the latent variables are independent, this is also an instance of Stochastic min-Knapsack, admitting a $3$-approximation~\citep{deshpande2016approximation}.

    \item \textbf{From \hgp to \textsc{LVP}.} Next, we show that any \hgp algorithm with cost $N$ can be converted to an \textsc{LVP} algorithm with cost at most $\rho \cdot N$: when the \hgp algorithm probes $X_i$, we probe all latent variables in $N(i)$; this generalizes Lemma~\ref{lem: transformation-graph-probing-to-vertex-probing}.

    \item \textbf{From \textsc{LVP} to \hgp.} Finally, we generalize the vertex probing algorithm. 
    When the \textsc{LVP} algorithm directs us to probe $Y_j$, we probe edges incident to $j$ one by one. 
    If $X_i = 0$, all latent variables in $N(i)$ are $0$. If $X_i = 1$, we recursively call the algorithm on each affected latent variable in $N(i) \setminus \{j\}$ to restore independence. 
    This generalizes Algorithm~\ref{alg: probe-vertex}; the key difference is that each $1$-probe can affect up to $\rho - 1$ latent variables. We will refer to this algorithm as \textsc{ProbeLatent}.

    \item \textbf{Phases.} The algorithm proceeds in phases, re-initializing the $3$-approximation for \textsc{LVP} on the residual instance in each phase.
\end{itemize}

\paragraph{Analysis.}
The analysis parallels the analysis for Graph Probing in \cref{sec: graph-probing}. We state the key lemmas, highlighting where the rank $\rho$ enters.

\begin{lemma}\label{lem: hgp-independence-invariant}
At the end of each run of \textsc{ProbeLatent}, the latent variables $\{Y_j : j \in L'\}$ in the residual hypergraph $H' = (L' \cup R', F')$ are independent of the outcomes of all probed observed variables.
\end{lemma}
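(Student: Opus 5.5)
The approach is to prove a stronger invariant by induction over the probes made during \textsc{ProbeLatent}. Probing observed variables reveals $0$-outcomes and $1$-outcomes. A $0$-outcome $X_i=0$ is a conjunction of events $\{Y_j=0\}$ for $j\in N(i)$; these variables are then fixed and leave the residual instance. A $1$-outcome $X_i=1$ is the event $\bigvee_{j\in N(i)} Y_j=1$. This is the only kind of event that couples latent variables. The plan is to show that when \textsc{ProbeLatent} ends, every $1$-outcome observed so far is \emph{explained} by the fixed variables: it is either implied by a latent variable already determined to be $1$, or by an event involving only latent variables outside $L'$. The global conditioning event then factors as a product of an event on $L\setminus L'$ and trivial constraints on $L'$. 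Since the $Y_j$ are a priori independent (a product measure), conditioning on an event of the form $\{(Y_j)_{j\notin L'} \in A\}$ leaves $\{Y_j: j\in L'\}$ independent and with their original marginals.

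Concretely, I would define $L'$ as the latent variables not yet \emph{resolved}. A latent variable $j$ is resolved if it lies in some hyperedge with a $0$-outcome (then $Y_j=0$), or if \textsc{ProbeLatent} was called on $j$ and finished processing it. The recursion on $j$ probes all unprobed hyperedges incident to $j$ until it sees a $0$, and it recurses on the other members of each $1$-hyperedge. I then define the residual hypergraph $H'$ as the observed variables not yet probed, with their hyperedges restricted to $L'$. The key structural claim is that every probed observed variable $i$ has $N(i)\cap L'=\emptyset$ at the end of the run. For $0$-outcomes this is immediate. For a $1$-outcome at $X_i$ discovered while processing $j$, the algorithm calls itself on each $j'\in N(i)\setminus\{j\}$, so every element of $N(i)$ ends up resolved. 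This is exactly the point where the hypergraph version differs from the graph version: in Algorithm~\ref{alg: probe-vertex} there is only one other endpoint, while here there are up to $\rho-1$.

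Given the claim, the full history of outcomes is a Boolean event $\mathcal{E}$ that depends only on $(Y_j)_{j\in L\setminus L'}$, because every probed $X_i$ is a function of $Y_{N(i)}$ and $N(i)\subseteq L\setminus L'$. One subtlety is that the \emph{sequence} of probes is adaptive. Still, the event "the algorithm followed this transcript" is determined by the outcomes, which are functions of the resolved variables. By the claim, the set of resolved variables is itself determined by the transcript. Fixing the transcript, the conditioning event therefore has the form $\{(Y_j)_{j\notin L'}\in A\}$. By independence of the product measure, $(Y_j)_{j\in L'}$ is independent of this event and retains its marginals $p_j$. The global statement then follows by induction over successive calls to \textsc{ProbeLatent}, one per \textsc{LVP} step across phases, since the same argument applies to the residual instance.

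The main obstacle is making the termination and coverage claim rigorous for the recursion. We must verify that a recursive call never stops while some $1$-hyperedge still has an unresolved member. The subtle case is a recursive call on $j'$ that stops early on a $0$-edge. That event sets $Y_{j'}$ to neither $0$ nor $1$ definitively. However, $j'$ is resolved as "processed," so we need all remaining unprobed hyperedges of $j'$ to be placed in the residual instance with $j'$ removed. I would check this by conditioning: after a $0$ at a hyperedge containing $j'$, we know $Y_{j'}=0$. Indeed, since \textsc{ProbeLatent} stops at the first $0$ incident to the variable, such a $0$ certifies $Y_{j'}=0$, and the remaining hyperedges of $j'$ simply lose $j'$. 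I would structure the argument as a lemma by induction on the recursion depth, stating that each returned call leaves all its touched hyperedges fully resolved.
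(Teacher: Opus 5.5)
Your proposal is correct and follows essentially the same route as the paper: the paper also reduces the claim to the structural fact that at the end of \textsc{ProbeLatent} no probed observed variable $X_i$ has $N(i)\cap L'\neq\emptyset$, so the observed history is a function of $(Y_j)_{j\notin L'}$ and independence follows from the product structure of the latent variables; your transcript argument just makes the adaptivity explicit. One small correction to your last paragraph: a call on $j'$ that stops at a $0$ does force $Y_{j'}=0$, whereas $Y_{j'}$ is left undetermined only when every hyperedge at $j'$ returns $1$, and your structural claim already covers that case because all those hyperedges are then probed and their co-members are recursed on.
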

The proof is identical in structure to Lemma~\ref{lem: graph-probing-residual-instance-is-independent-main}: \textsc{ProbeLatent} ensures that for any $j \in L'$, no observed variable $X_i$ with $j \in N(i)$ has been probed, and independence follows since the latent variables are originally independent.

\begin{lemma}\label{lem: hgp-zeros-vs-ones}
    Let $m_1$ and $m_0$ denote the number of non-zero and zero observed variables probed during a single run of \textsc{ProbeLatent}. Then $m_0 \le (\rho - 1) m_1 + 1$.
\end{lemma}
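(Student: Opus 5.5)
Our plan is a charging argument on the recursion tree of a single run of \textsc{ProbeLatent}. We model the run as a rooted tree. The root is the initial call on the latent variable $j_0$ chosen by the \textsc{LVP} subroutine. Each node is a call $\textsc{ProbeLatent}(j)$ on a latent variable $j$. The children of a node are the recursive calls spawned by its $1$-probes: whenever a probe of $X_i$ incident to $j$ returns $1$, each latent variable in $N(i)\setminus\{j\}$ that is still unresolved spawns at most one child. Every probed observed variable belongs to exactly one node, namely the call during which it was probed. So $m_0$ and $m_1$ are sums over the nodes.

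We first establish the per-node structure of a call. A call on $j$ probes the unprobed observed variables incident to $j$ one at a time and stops at the first $0$. The reason is that $X_i=0$ forces $Y_j=0$, and then every other incident variable depends only on already-conditioned or independent latent variables, so the call ends. If the call never sees a $0$, it exhausts the incident edges. In either case each node contributes at most one zero probe. Hence $m_0$ is at most the number of nodes of the recursion tree.

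Next we count the nodes. Each non-root node is created by a $1$-probe of some $X_i$ in its parent. A single $1$-probe of $X_i$ creates at most $|N(i)|-1\le \rho-1$ children, one for each other latent variable in the hyperedge. Once a latent variable has been called, or has been resolved to $0$ by a zero probe, it is not called again; this is the bookkeeping that makes the tree well defined and avoids double counting. Summing over the $m_1$ one-probes, the number of non-root nodes is at most $(\rho-1)m_1$. Adding the root gives $m_0 \le \#\text{nodes} \le (\rho-1)m_1+1$.

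The main obstacle is justifying rigorously that each call contributes at most one zero. This requires fixing the precise stopping rule of \textsc{ProbeLatent}, whose pseudocode is given only in the appendix. If the rule instead lets a call continue past a $0$ in some cases, we will charge that zero to the node that would otherwise have been spawned. In the graph case this is exactly the argument that $m_0\le m_1+1$, and the hypergraph version should carry over with branching factor $\rho-1$.
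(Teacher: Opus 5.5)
Your proposal is correct and follows the paper's own argument. You build the recursion tree of \textsc{ProbeLatent}, observe that each node contributes at most one zero probe (so $m_0 \le |V_\tau|$), and note that each one-probe spawns at most $\rho-1$ children, giving $|V_\tau| = |A_\tau| + 1 \le (\rho-1)m_1 + 1$. The fallback charging in your last paragraph is unnecessary: \textsc{ProbeLatent}, like \textsc{ProbeVertex}, stops probing directly at a latent variable as soon as it sees its first zero.
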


\noindent \textit{Proof Sketch.} Consider the recursion tree $\tau$ rooted at the initial call \textsc{ProbeLatent}$(j)$, where each node corresponds to a latent variable on which \textsc{ProbeLatent} is called. Each $1$-probe at a node $j'$ creates at most $\rho - 1$ children (one per affected latent variable in $N(i) \setminus \{j'\}$), while each $0$-probe creates none and terminates the direct probing at $j'$. Since each node contributes at most one $0$-probe, $m_0 \le |V_\tau|$. The number of edges satisfies $|A_\tau| \le (\rho-1) m_1$, and since $\tau$ is a tree, $m_0 \le |V_\tau| = |A_\tau| + 1 \le (\rho-1) m_1 + 1$. \qed 

The next two lemmas bound the cost of each phase with respect to the total cost of an optimal algorithm. Their proofs parallel the corresponding Graph Probing lemmas, with the factor-$2$ blowup replaced by $\rho$ throughout; we omit the details.

\begin{lemma}\label{lem: hgp-probelatent-calls}
    The number of calls to \textsc{ProbeLatent} in the first phase is at most $3\rho \cdot \OPT(H, k)$.
\end{lemma}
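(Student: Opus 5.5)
The plan follows the Graph Probing argument, with the factor $2$ replaced by $\rho$. The calls to \textsc{ProbeLatent} in the first phase are in one-to-one correspondence with the probes made by the $3$-approximate adaptive algorithm $\mathcal{A}$ of \cite{deshpande2016approximation}, run on the Latent Variable Probing instance $(H,k)$ with weights $w_j$. Each time $\mathcal{A}$ asks to probe $Y_j$, we call \textsc{ProbeLatent}$(j)$. Latent variables touched only recursively inside such a call are not counted as calls. I will assume this is how the algorithm is set up, as in \cref{alg: vertex-probing-to-graph-probing}. Under that assumption, the number of calls in the first phase is at most the number of probes $\mathcal{A}$ makes before the phase ends.

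\textbf{Step 1 (simulation is faithful).} By \cref{lem: hgp-independence-invariant}, at the start of each call the residual latent variables are independent of everything observed so far. The value of $Y_j$ revealed to $\mathcal{A}$ is determined by the call: it is $1$ if some incident $X_i$ was $1$ and all others determined, and $0$ if a $0$-probe occurred. So $\mathcal{A}$ sees a run with exactly the distribution of its run on the LVP instance. Hence the phase runs no longer than $\mathcal{A}$ does on LVP, because the phase stops once $\mathcal{A}$ stops or the knapsack is filled. This gives $\E[\#\text{calls}] \le \E[\text{cost of } \mathcal{A}] \le 3\,\OPT^{\textsc{LVP}}(H,k)$.

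\textbf{Step 2 (the reduction from \hgp to \textsc{LVP}).} Take an optimal adaptive \hgp policy. Whenever it probes $X_i$, we probe all not-yet-probed latent variables in $N(i)$, at most $\rho$ new probes each time. From these the policy's observed outcome $X_i=\bigvee_{j\in N(i)}Y_j$ can be computed, so the simulation is well defined. When the \hgp policy has certified $k$ active observed variables, the probed active latent variables cover all of them. Each active $X_i$ has some active $j\in N(i)$ that has been probed, and each such $j$ contributes weight $w_j$, which is at least the number of $X_i$ containing $j$. So the total weight is at least $k$, the LVP goal is reached, and $\OPT^{\textsc{LVP}}(H,k)\le \rho\,\OPT(H,k)$. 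If fewer than $k$ observed variables are ever active, both processes exhaust everything, and the same bound holds since every latent variable appears in some hyperedge probed by the policy.

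Combining the two steps gives $3\rho\,\OPT(H,k)$. The main obstacle is Step 1: checking that the recursive calls never change the conditional law of what $\mathcal{A}$ sees. In particular, the latent variables resolved inside recursions must be treated as probed at no extra charge to $\mathcal{A}$, or else the correspondence must be shown to only help $\mathcal{A}$. This is resolved by the independence invariant together with monotonicity: extra free information, with more weight gained, can only stop $\mathcal{A}$'s knapsack earlier.
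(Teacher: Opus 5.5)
Your Step 2 is correct. It is exactly the paper's reduction, the analogue of \cref{lem: transformation-graph-probing-to-vertex-probing} with factor $\rho$. Counting only the top-level calls issued on behalf of $\mathcal{A}$ is also the right reading of the lemma.

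The gap is in Step 1: \textsc{ProbeLatent}$(j)$ does not reveal $Y_j$. An active $X_i$ with $j \in N(i)$ tells you nothing definitive about $Y_j$, so your phrase ``it is $1$ if some incident $X_i$ was $1$'' is false. The only certificate the procedure can produce is $Y_j=0$, via a $0$-probe. If every observed variable containing $j$ comes back active, $Y_j$ stays undetermined and the algorithm reports $b_j=1$. This report is wrong whenever $Y_j=0$ and each of those hyperedges has some other active latent variable.

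Concretely, take a star with an active center ($\rho=2$). Every leaf edge is $1$, the recursion removes each leaf $u$ with $b_u=1$, and yet typically $Y_u=0$. If $\mathcal{A}$ later asks for $Y_u$, it is told $1$ and credited $w_u$. So the values fed to $\mathcal{A}$ do not have the law of an LVP run: they dominate the true $Y$ coordinatewise and are correlated across latent variables. The independence invariant of \cref{lem: hgp-independence-invariant} concerns only the residual variables in $L'$. On its own it cannot give $\E[\#\text{calls}]\le\E[\text{cost of }\mathcal{A}]$.

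The missing ingredient is the one the paper uses in the graph case, in two parts. First, $b_j\ge Y_j$, with strict inequality only when all observed variables containing $j$ are active. Second, such a false positive (removing $j$ and crediting $w_j$ toward the knapsack) never increases the number of probes $\mathcal{A}$ makes.

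The second part is not a property of an arbitrary adaptive policy, since a policy facing a smaller residual requirement can behave worse. Your closing appeal to ``monotonicity'' is therefore not enough, and it is aimed at variables resolved inside recursions rather than at misreported values. The paper secures this property without loss of generality in \cref{lem: graph-probing-misreporting-only-helps}. Whenever the policy's estimated cost on $(L\setminus\{j\},k-w_j)$ exceeds its cost on $(L,k)$, replace the former by the latter, simulating $Y_j$ from its prior.

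Replace your faithfulness claim with this domination-plus-monotonicity argument. Your Steps 1 and 2 then combine to the bound $3\rho\,\OPT(H,k)$ as you intended.
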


\begin{lemma}\label{lem: hgp-first-phase}
    The number of probes in the first phase is at most $O(\rho) \cdot \OPT(H, k)$.
\end{lemma}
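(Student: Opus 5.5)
The plan is to bound the first phase's probes by the number of \textsc{ProbeLatent} calls plus the number of nonzero probes, mirroring the graph case. Write the number of probes in the first phase as $m_1 + m_0$, where $m_1$ and $m_0$ are the total numbers of nonzero and zero observed variables probed, summed over all runs of \textsc{ProbeLatent} in the phase. Summing \cref{lem: hgp-zeros-vs-ones} over the runs gives
\[
    m_0 \le (\rho - 1) m_1 + C,
\]
where $C$ is the number of calls to \textsc{ProbeLatent} in the phase. By \cref{lem: hgp-probelatent-calls}, $C \le 3\rho \cdot \OPT(H, k)$. It therefore suffices to show $m_1 \le O(1) \cdot \OPT(H,k)$ with no dependence on $\rho$. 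This would give total probes at most $\rho m_1 + C = O(\rho)\,\OPT(H,k)$.

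To bound $m_1$, I would split the nonzero probes of the phase into two groups. The first group consists of nonzero probes made before the phase's requirement $k$ is met. Since the phase stops once $k$ active observed variables are found, there are at most $k$ of these, and the last probe can add at most one. So this group contributes at most $k$ probes. Any algorithm must probe at least $k$ observed variables whenever at least $k$ are active, so $k \le \OPT(H,k)$. The edge case where fewer than $k$ are active is handled by noting that all nonzero probes are then bounded by the number of active variables, each of which $\OPT$ must also probe. Hence $m_1 \le k + 1 \le 2\,\OPT(H,k)$, giving roughly $m_1 + m_0 \le \rho(k+1) + 3\rho\,\OPT \le 5\rho\,\OPT$.

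The main obstacle is making "the phase stops when $k$ nonzero observed variables are found" rigorous. \textsc{ProbeLatent} is recursive and may overshoot in the middle of a run. One must check the stopping rule: either the algorithm terminates as soon as the counter hits $k$ (so $m_1 \le k$ exactly), or a run is finished to preserve \cref{lem: hgp-independence-invariant}. In the latter case I would bound the overshoot within the final run separately. This should be possible because the recursion only continues from latent variables already shown to lie in an active hyperedge, and each such continuation is charged to a fresh \textsc{ProbeLatent} call counted in $C$. If the overshoot cannot be bounded this way, I would instead charge nonzero probes in the final run against the $3\rho\,\OPT$ bound on calls, accepting a constant-factor loss.
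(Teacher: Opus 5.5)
Your argument is correct and is essentially the paper's (omitted) proof, which mirrors \cref{lem: graph-probing-first-phase}. Summing \cref{lem: hgp-zeros-vs-ones} over the top-level runs gives $M_0 \le (\rho-1)M_1 + 3\rho\,\OPT(H,k)$, and combining this with $M_1 \le k \le \OPT(H,k)$ yields at most $4\rho\,\OPT(H,k)$ probes. Your overshoot worry is moot: as in \textsc{ProbeVertex}, the algorithm returns as soon as $k'=0$, so $M_1 \le k$ exactly. Your fallback would not have worked anyway, because the $3\rho\,\OPT(H,k)$ bound of \cref{lem: hgp-probelatent-calls} counts only the top-level calls issued by the latent-variable probing algorithm, not the recursive ones, so recursive nonzero probes cannot be charged to it.
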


Moreover, we can guarantee that at least $k/\rho$ non-zero edges are found in the first phase. 

\begin{lemma}\label{lem: hgp-knapsack-reduction}
    At the end of the first phase, the algorithm has either terminated or $k' \le k\left(1 - \frac{1}{\rho}\right)$.
\end{lemma}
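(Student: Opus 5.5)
The plan is to argue about the state at the moment the first phase ends, following the proof of \cref{lem: graph-probing-half-bag-size} with the constant $2$ replaced by $\rho$. A phase runs the Stochastic min-Knapsack algorithm of \cite{deshpande2016approximation} for \textsc{LVP}, with knapsack size $k$ and weights $w_j$, on the residual hypergraph. Each time that algorithm asks to probe a latent variable $Y_j$, we call \textsc{ProbeLatent}$(j)$. By \cref{lem: hgp-independence-invariant}, after each call the remaining latent variables are independent of everything observed. So the \textsc{LVP} subroutine always faces a valid instance and can be fed the outcome of $Y_j$, which \textsc{ProbeLatent} determines. The phase ends in one of two ways. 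Either the \textsc{LVP} algorithm certifies that the collected weight $\sum_{j \in A} w_j Y_j$ is at least $k$, where $A$ is the set of latent variables it has probed. Or it exhausts the latent variables, in which case everything has been probed and the algorithm terminates. So assume the first case and that the \hgp algorithm has not already terminated.

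The key step is a counting argument relating the \textsc{LVP} reward to the number of active observed variables found. Let $A_1 = \{j \in A : Y_j = 1\}$. Whenever \textsc{ProbeLatent}$(j)$ is run on a $j$ with $Y_j = 1$, it probes every observed $X_i$ with $j \in N(i)$ that is still unprobed. It cannot stop at a $0$-probe, since all these $X_i$ equal $1$. Observed variables incident to $j$ that were probed earlier are all $1$ as well. Hence every one of the $w_j$ observed variables incident to $j$ has been probed and found active by the end of the phase. The only exception is those probed in earlier phases, which are already accounted for in the residual degree. Now let $F_1$ be the set of active observed variables found. Each $X_i \in F_1$ is counted at most $|N(i)| \le \rho$ times in $\sum_{j \in A_1} w_j$. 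Therefore
\[
\rho\,|F_1| \;\ge\; \sum_{j \in A_1} w_j \;\ge\; k,
\]
so $|F_1| \ge k/\rho$ and $k' = k - |F_1| \le k(1 - 1/\rho)$.

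The main obstacle is making the degree bookkeeping consistent. The weights $w_j$ used by \textsc{LVP} must be the degrees in the residual hypergraph at the start of the phase. Within a phase, \textsc{ProbeLatent} calls on other latent variables may already have probed some hyperedges at $j$ before $Y_j$ is queried. I will handle this by fixing the \textsc{LVP} instance at the phase start, as in the graph case. Every hyperedge counted in $w_j$ is then either probed during the phase (and active, if $Y_j = 1$) or was already counted, so the double-counting bound via $|N(i)| \le \rho$ still applies. A second subtlety arises if the \hgp algorithm terminates mid-phase after finding $k$ active variables. That case is covered by the ``terminated'' alternative in the statement.
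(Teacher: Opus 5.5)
\textbf{There is a gap.} Your double-counting step is the right one; it is the paper's proof of \cref{lem: graph-probing-half-bag-size} with $2$ replaced by $\rho$. But you apply it to the wrong quantity, so your key inequality $\sum_{j \in A_1} w_j \ge k$ is not justified.

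You claim that \textsc{ProbeLatent}$(j)$ determines $Y_j$, and that the \textsc{LVP} subroutine is fed $Y_j$, so its stopping rule certifies $\sum_{j \in A} w_j Y_j \ge k$. That is false. If every observed $X_i$ with $j \in N(i)$ turns out to be active, the probes do not reveal $Y_j$: one can have $Y_j = 0$ with each such hyperedge covered by another active latent variable. In that case the algorithm reports $b_j = 1$, exactly as \textsc{ProbeVertex} does in the graph case. Feeding the true $Y_j$ is simply not implementable. So the subroutine actually stops when $\sum_{j \in A} w_j b_j \ge k$. Since $b_j \ge Y_j$, with strict inequality possible, this gives no lower bound on $\sum_{j \in A_1} w_j$ for your set $A_1 = \{j \in A : Y_j = 1\}$. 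Your observation about what \textsc{ProbeLatent} does when $Y_j = 1$ is true, but it does not connect to the subroutine's stopping condition.

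\textbf{The repair.} Run the same count over $A_1' := \{j \in A : b_j = 1\}$, using the property that holds by construction: $b_j = 1$ is reported only if every observed $X_i$ with $j \in N(i)$ has been probed and found active. Two cases cover this:
\begin{itemize}
\item A $0$ at any such $X_i$ zeroes out all of $N(i)$ and sets $b_j = 0$.
\item If such an $X_i$ is first probed as $1$ from a different latent variable, \textsc{ProbeLatent}$(j)$ is invoked recursively and probes the rest.
\end{itemize}
Hence only $i \in F_1$ can contain a $j$ with $b_j = 1$, and
\[
k \;\le\; \sum_{j \in A} w_j b_j \;=\; \sum_{j \in A_1'} w_j \;\le\; \sum_{i \in F_1} \bigl|\{j \in N(i) : b_j = 1\}\bigr| \;\le\; \rho\,|F_1| .
\]
This gives $|F_1| \ge k/\rho$ and $k' \le k(1 - 1/\rho)$, which is exactly the paper's argument. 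The degree-bookkeeping issue you flag as the main obstacle is minor by comparison; the real subtlety is the difference between the reported outcome $b_j$ and the true $Y_j$.
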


Combining these lemmas proves Theorem~\ref{thm: hypergraph-probing}: each phase costs $O(\rho) \cdot \OPT(H, k)$ (Lemma~\ref{lem: hgp-first-phase}), the residual instance remains a valid \hgp instance by Lemma~\ref{lem: hgp-independence-invariant}, and the knapsack size decreases by a $(1 - 1/\rho)$ factor per phase (Lemma~\ref{lem: hgp-knapsack-reduction}), so $O(\rho \log k)$ phases suffice. The total cost is $O(\rho) \cdot O(\rho \log k) \cdot \OPT(H, k) = O(\rho^2 \log k) \cdot \OPT(H, k)$.

\bibliographystyle{alpha}

\begin{thebibliography}{CGMT21}

\bibitem[AHKU17]{AllenHK+15}
S.~Allen, L.~Hellerstein, D.~Kletenik, and T.~\"{U}nl\"{u}yurt.
\newblock Evaluation of monotone {DNF} formulas.
\newblock {\em Algorithmica}, 77:661--685, 2017.

\bibitem[AS16]{alon2016probabilistic}
N.~Alon and J.H.~Spencer.
\newblock {\em The Probabilistic Method}.
\newblock 2016.

\bibitem[BBFT20]{bansal_improved_2020}
N.~Bansal, J.~Batra, M.~Farhadi, and P.~Tetali.
\newblock Improved approximations for min sum vertex cover and generalized min
  sum set cover.
\newblock arXiv:2007.09172, 2020.

\bibitem[BD81]{Ben-Dov81}
Y.~Ben-Dov.
\newblock Optimal testing procedures for special structures of coherent
  systems.
\newblock {\em Management Science}, 27(12):1410--1420, 1981.

\bibitem[BDH20]{BehnezhadDH20}
S.~Behnezhad, M.~Derakhshan, and M.T.~Hajiaghayi.
\newblock Stochastic matching with few queries: $(1-\epsilon)$ approximation.
\newblock Symposium on Theory of Computing ({STOC}), 2020.

\bibitem[BGK10]{bansal2010constant}
N.~Bansal, A.~Gupta, and R.~Krishnaswamy.
\newblock A constant factor approximation algorithm for generalized min-sum set
  cover.
\newblock Symposium on Discrete Algorithms ({SODA}), 2010.

\bibitem[BGK11]{BGK11}
A.~Bhalgat, A.~Goel, and S.~Khanna.
\newblock Improved approximation results for stochastic knapsack problems.
\newblock Symposium on Discrete Algorithms ({SODA}), 2011.

\bibitem[BGL{\etalchar{+}}12]{BGLMNR12}
N.~Bansal, A.~Gupta, J.~Li, J.~Mestre, V.~Nagarajan, and A.~Rudra.
\newblock When {LP} is the cure for your matching woes: Improved bounds for
  stochastic matchings.
\newblock {\em Algorithmica}, 63(4):733--762, 2012.

\bibitem[Bub15]{bubeck2015convex}
S.~Bubeck.
\newblock Convex optimization: Algorithms and complexity.
\newblock {\em Foundations and Trends in Machine Learning}, 8(3-4):231--357,
  2015.

\bibitem[But72]{B72}
R.~Butterworth.
\newblock Some reliability fault-testing models.
\newblock {\em Operations Research}, 20(2):335--343, 1972.

\bibitem[CCPV07]{calinescu_maximizing_2007}
G.~Calinescu, C.~Chekuri, M.~P\'{a}l, and J.~Vondr\'{a}k.
\newblock Maximizing a submodular set function subject to a matroid constraint.
\newblock Integer Programming and Combinatorial Optimization ({IPCO}), 2007.

\bibitem[CGMT21]{chawla2021approximating}
S.~Chawla, E.~Gergatsouli, J.~McMahan, and C.~Tzamos.
\newblock Approximating {P}andora's box with correlations.
\newblock arXiv:2108.12976, 2021.

\bibitem[Che52]{chernoff1952measure}
H.~Chernoff.
\newblock A measure of asymptotic efficiency for tests of a hypothesis based on
  the sum of observations.
\newblock {\em Annals of Mathematical Statistics}, pages 493--507, 1952.

\bibitem[DGV08]{dean2008approximating}
B.C.~Dean, M.X.~Goemans, and J.~Vondr\'{a}k.
\newblock Approximating the stochastic knapsack problem: The benefit of
  adaptivity.
\newblock {\em Mathematics of Operations Research}, 33(4):945--964, 2008.

\bibitem[DHK16]{deshpande2016approximation}
A.~Deshpande, L.~Hellerstein, and D.~Kletenik.
\newblock Approximation algorithms for stochastic submodular set cover with
  applications to boolean function evaluation and min-knapsack.
\newblock {\em ACM Transactions on Algorithms}, 12(3):1--28, 2016.

\bibitem[Fei98]{feige_threshold_1998}
U.~Feige.
\newblock A threshold of $\ln n$ for approximating set cover.
\newblock {\em Journal of the ACM}, 45(4):634--652, 1998.

\bibitem[FLT04]{feige2004approximating}
U.~Feige, L.~Lov\'{a}sz, and P.~Tetali.
\newblock Approximating min sum set cover.
\newblock {\em Algorithmica}, 40:219--234, 2004.

\bibitem[GGHK18]{gkenosis2018stochastic}
D.~Gkenosis, N.~Grammel, L.~Hellerstein, and D.~Kletenik.
\newblock The stochastic score classification problem.
\newblock European Symposium on Algorithms ({ESA}), 2018.

\bibitem[GGHK22]{gkenosis2022stochastic}
D.~Gkenosis, N.~Grammel, L.~Hellerstein, and D.~Kletenik.
\newblock The stochastic boolean function evaluation problem for symmetric
  boolean functions.
\newblock {\em Discrete Applied Mathematics}, 309:269--277, 2022.

\bibitem[GGN24]{ghuge2024nonadaptive}
R.~Ghuge, A.~Gupta, and V.~Nagarajan.
\newblock Nonadaptive stochastic score classification and explainable
  half-space evaluation.
\newblock {\em Operations Research}, 2024.

\bibitem[GHKL22]{grammel2022algorithms}
N.~Grammel, L.~Hellerstein, D.~Kletenik, and N.~Liu.
\newblock Algorithms for the unit-cost stochastic score classification
  problem.
\newblock {\em Algorithmica}, 84(10):3054--3074, 2022.

\bibitem[GK17]{GolovinK-arxiv}
D.~Golovin and A.~Krause.
\newblock Adaptive submodularity: A new approach to active learning and
  stochastic optimization.
\newblock arXiv:1003.3967, 2017.

\bibitem[GKX{\etalchar{+}}12]{garnett2012bayesian}
R.~Garnett, Y.~Krishnamurthy, X.~Xiong, J.~Schneider, and R.~Mann.
\newblock Bayesian optimal active search and surveying.
\newblock International Conference on Machine Learning ({ICML}), 2012.

\bibitem[GN13]{GN13}
A.~Gupta and V.~Nagarajan.
\newblock A stochastic probing problem with applications.
\newblock Integer Programming and Combinatorial Optimization ({IPCO}), 2013.

\bibitem[GNS17]{GuptaNS17}
A.~Gupta, V.~Nagarajan, and S.~Singla.
\newblock Adaptivity gaps for stochastic probing: Submodular and {XOS}
  functions.
\newblock Symposium on Discrete Algorithms ({SODA}), 2017.

\bibitem[HPS26]{hellerstein2026approximating}
L.~Hellerstein, B.~Plank, and K.~Schewior.
\newblock Approximating matroid basis testing on partition matroids using
  budget-in-expectation.
\newblock Symposium on Discrete Algorithms ({SODA}), 2026.

\bibitem[IFF01]{iwata2001combinatorial}
S.~Iwata, L.~Fleischer, and S.~Fujishige.
\newblock A combinatorial strongly polynomial algorithm for minimizing
  submodular functions.
\newblock {\em Journal of the ACM}, 48(4):761--777, 2001.

\bibitem[INvdZ16]{INZ12}
S.~Im, V.~Nagarajan, and R.~van~der~Zwaan.
\newblock Minimum latency submodular cover.
\newblock {\em ACM Transactions on Algorithms}, 13(1):13:1--13:28, 2016.

\bibitem[JGM19]{jiang2019cost}
S.~Jiang, R.~Garnett, and B.~Moseley.
\newblock Cost effective active search.
\newblock {\em Advances in Neural Information Processing Systems}, 32, 2019.

\bibitem[JLLS20]{JiangLL+20}
H.~Jiang, J.~Li, D.~Liu, and S.~Singla.
\newblock Algorithms and adaptivity gaps for stochastic $k$-{TSP}.
\newblock Innovations in Theoretical Computer Science ({ITCS}), 2020.

\bibitem[JMC{\etalchar{+}}17]{jiang2017efficient}
S.~Jiang, G.~Malkomes, G.~Converse, A.~Shofner, B.~Moseley, and R.~Garnett.
\newblock Efficient nonmyopic active search.
\newblock International Conference on Machine Learning ({ICML}), 2017.

\bibitem[KKM05]{kaplan2005learning}
H.~Kaplan, E.~Kushilevitz, and Y.~Mansour.
\newblock Learning with attribute costs.
\newblock Symposium on Theory of Computing ({STOC}), 2005.

\bibitem[KW09]{kainkaryam2009pooling}
R.M.~Kainkaryam and P.J.~Woolf.
\newblock Pooling in high-throughput drug screening.
\newblock {\em Current Opinion in Drug Discovery \& Development}, 12(3):339,
  2009.

\bibitem[Liu22]{liu2022two}
N.~Liu.
\newblock Two 6-approximation algorithms for the stochastic score
  classification problem.
\newblock arXiv:2212.02370, 2022.

\bibitem[Mor82]{Moret82}
B.M.E.~Moret.
\newblock Decision trees and diagrams.
\newblock {\em ACM Computing Surveys}, 14(4):593--623, 1982.

\bibitem[NRS25]{nielsen2025non}
M.A.~Nielsen, L.~Rohwedder, and K.~Schewior.
\newblock Non-adaptive evaluation of $k$-of-$n$ functions: Tight gap and a
  unit-cost {PTAS}.
\newblock arXiv:2507.05877, 2025.

\bibitem[PS24]{plank2024simple}
B.M.~Plank and K.~Schewior.
\newblock Simple algorithms for stochastic score classification with small
  approximation ratios.
\newblock {\em SIAM Journal on Discrete Mathematics}, 38(3):2069--2088, 2024.

\bibitem[PZ30]{paley1930some}
R.E.A.C.~Paley and A.~Zygmund.
\newblock On some series of functions (1).
\newblock {\em Mathematical Proceedings of the Cambridge Philosophical
  Society}, 26:337--357, 1930.

\bibitem[SS12]{swamy2012sampling}
C.~Swamy and D.B.~Shmoys.
\newblock Sampling-based approximation algorithms for multistage stochastic
  optimization.
\newblock {\em SIAM Journal on Computing}, 41(4):975--1004, 2012.

\bibitem[{\"{U}}n25]{UNLUYURT2025356}
T.~\"{U}nl\"{u}yurt.
\newblock Sequential testing problem: A follow-up review.
\newblock {\em Discrete Applied Mathematics}, 377:356--369, 2025.

\end{thebibliography}

\newcommand{\etalchar}[1]{$^{#1}$}

\newpage
\appendix

\section{Equivalence of sampling and joint probability oracles}\label{sec: oracle-equivalence}

Here, we show that the sampling and joint probability oracles are equivalent in terms of number of samples, up to a polynomial factor.
Recall that given subsets $U_0, U_1 \subseteq U$, a \emph{joint probability oracle} returns the probability $\sum_{S \supseteq U_1, S \cap U_0 = \emptyset} p(S) = \Pr_p(X_e = 1 \ \forall \ e \in U_1, X_e = 0 \ \forall \ e \in U_0)$ returns the probability that all elements in $U_1$ are active and all elements in $U_0$ are inactive. A sampling oracle, on the other hand, samples scenarios $S$ from distribution $p$.

Suppose first that we are given a joint probability oracle. To obtain a sampling oracle, order ground set elements $e_1, \ldots, e_n$, and sample the sequence of sets $\emptyset = S_0 \subseteq S_1 \subseteq \ldots \subseteq S_n = S$, where $S_t$ is obtained by adding $e_t$ to $S_{t - 1}$ with probability $\Pr_p(X_{e_t} = 1 | X_e = 1 \ \forall \ e \in S_{t - 1}, X_e = 0 \ \forall e \in \{e_1, \ldots, e_{t - 1}\} - S_{t - 1})$. It is easy to see that this generates independent samples from distribution $p$.
Suppose now that we are given a sampling oracle. Then, for any $\varepsilon > 0$, we can get an estimate for $\sum_{S \supseteq U_1, S \cap U_0 = \emptyset} p(S) = \Pr_p(X_e = 1 \ \forall \ e \in U_1, X_e = 0 \ \forall \ e \in U_0)$ by sampling $\mathrm{poly}(n, 1/\varepsilon)$ independent samples, using standard concentration inequalities.

\section{Omitted proofs from \cref{sec: na}}\label{sec: omitted-proofs}

We provide proofs omitted from \cref{sec: na} here.

\subsection{Proof of \cref{lem: matroid-randomized-rounding}}

We prove the matroid randomized rounding lemma next:

\matroidRandomizedRounding*

\begin{proof}
    Denote $q_e := \min(1, \beta y_e)$ for $e \in S$. Since $r(A_\beta)$ is non-decreasing in $y$ and $y$ is in the up-hull of the base polymatroid $\mathcal{B}(\mathcal{M}_S)$, it is sufficient to prove the lemma when $y \in \mathcal{B}(\mathcal{M}_S)$. We show that the proof of Lemma 5 in \cite{calinescu_maximizing_2007} generalizes to our setting, who prove this for $\beta = 1$.
    
    To make the dependence on $y$ explicit, let us denote $R(y) := \E[r(A_\beta)]$. By linearity of expectation and independence,
    \[
        R(y) = \E[r(A_\beta)] = \sum_{T \subseteq S} \bigg(\prod_{e \in T} q_e \prod_{e \in S \setminus T} (1 - q_e) \bigg) r(T).
    \]
    For $e \in S$ and $T \subseteq S$, denote $r_e(T) := r(T \cup \{e\}) - r(T)$. Now define function $s: \R^S \to \R$ as
    \[
        s(y) := \min_{T \subseteq S} \Big( r(T) + \sum_{e \in S \setminus T} y_e r_e(T) \Big).
    \]
    We will show that $R(y) \ge \left(1 - e^{-\beta}\right) s(y) \ge \left(1 - e^{-\beta}\right) r(S)$.
    
    Since $s$ is a minimum of linear functions, it is concave in $y$. In particular, since $\mathcal{P}(\mathcal{M}_S)$ is a polytope, $s$ attains its minimum over $\mathcal{P}(\mathcal{M}_S)$ at some vertex. Since vertices of a base polymatroid are indicator vectors of its bases, we get that $s(y) \ge \min_{y' \in \mathcal{P}(\mathcal{M}_S)} s(y') = r(S)$.

    Therefore, it remains to prove that $R(y) \ge \left(1 - e^{-\beta}\right) s(y)$ for all $y$. As in \cite{calinescu_maximizing_2007}, define a Poisson clock $\mathcal{C}_e$ of rate $y_e$ for each $e \in S$ that starts at time $t = 0$, and emits a signal in any infinitesimal interval $dt$ with probability $y_j dt$, independently of other intervals. We maintain a set $B_t$ and include $e \in B_t$ as soon as a signal it emitted by $\mathcal{C}_e$. We stop the clocks at $t = \beta$, instead of $t = 1$ as in their case.

    The random event $\{e \in B_\beta\}$ is equivalent to sampling $e$ independently with probability $1 - \exp(- \beta y_e) \le \min(1, \beta y_e) = q_e$. Therefore, $\E[r(B_\beta)] \le \E[r(A_\beta)]$. The rest of the analysis shows that $\E[r(B_\beta)] \ge (1 - \exp(-\beta)) s(y)$ is essentially the same, and we omit it here.
\end{proof}

\subsection{Proof of \cref{lem: na-not-too-many-elements-are-probed}}

We prove \cref{lem: na-not-too-many-elements-are-probed} here:

\NotTooManyProbes*

\begin{proof}
    For element $e \in U$, consider random variable $Y_e$ that indicates whether $e$ is probed in epoch $\sigma$. Then $\Pr(Y_e = 1) \le \alpha \cdot \sum_{t \le 2^\sigma} x_{e, t}$, and therefore the number $Y := \sum_{e \in Y} Y_e$ of elements probed in epoch $\sigma$ is $\le \alpha \sum_{e \in U} \sum_{t \le 2^\sigma} x_{e, t} = \alpha \sum_{t \le 2^\sigma} x_t(U) \le \alpha 2^\sigma$ (in expectation).

    Since elements are probed independently, we can apply Chernoff bound (\cite{chernoff1952measure}) to $Y = \sum_{e \in U} Y_e$, which says that for any upper bound $\mu \ge \E Y$, we have
    \[
        \Pr(Y \ge (1 + \delta) \ \mu) \le \exp\left(-\frac{\delta^2 \ \mu}{2 + \delta}\right) 
    \]
    Choose $\delta = 2$ and $\mu = 3 \log n + \E[Y]$, so that
    \begin{align*}
        \Pr\left(Y > 9 \log n + 3 \ \E [Y] \right) &\le \exp\left(-\frac{4}{3} (3 \log n + \E [Y])\right) \le \frac{1}{n^3}.
    \end{align*}
    Therefore, $Y \le 9 \log n + 3 \ \E [Y] \le (9 \log n) \cdot 2^\sigma$ with probability $\ge 1 - \frac{1}{n^3}$. Taking a union bound over epochs $\tau = 1, \ldots, \sigma$ gives that the number of elements probed by the end of epoch $\sigma$ is at most $(18 \log n) \cdot 2^\sigma$ with probability $\ge 1 - \frac{1}{n^2}$.
\end{proof}

\subsection{Proof of \cref{lem: na-lp-poly-time-opt}}\label{sec: solving-na-lp}

In this section, we show that \ref{lp: na-strong} can be solved efficiently and prove \cref{lem: na-lp-poly-time-opt}.

Recall that we reformulate the LP as the following optimization problem in \cref{eqn: lp-strong-reformulated}:
\begin{align*}
    \min f(x) = \min \sum_{t \in [n]} \sum_{S \in \mathcal{S}} p_S u_{S, t}(x) \quad \text{s.t.} \ x \in \mathcal{K}.
\end{align*}
Here, $u_{S, t}(x) := \max\bigg(0, \max_{\substack{{T \subseteq S}: \\ r(T) < r(S)}} \left[1 - \frac{x_{< t}(S) - x_{< t}(T)}{r(S) - r(T)}\right]\bigg)$, and the convex set $\mathcal{K} := \{x \in [0, 1]^{U \times n}: x_{t}(U) \le 1 \ \forall \ t \in [n]\}$.

For any $S \in \mathcal{S}$ and any $t$, $u_{S, t}(x)$ is a maximum of linear functions in $S$, and therefore convex. Therefore the objective function $f(x) := \sum_{t \in [n]} \sum_{S \in \mathcal{S}} p_S u_{S, t}(x)$ is convex.

Unfortunately, $f$ has exponentially many terms, and we cannot efficiently compute either the objective or its gradient exactly. It is unclear how even to efficiently compute even a single term $u_{S, t}(x)$. We give an efficient and unbiased estimator for the gradient by generalizing the separation oracle described in \cite{bansal2010constant} to matroids using parameterized submodular function minimization (\cref{lem: compute-subgradient-for-scenario}). With these simplifications, we can use stochastic gradient descent to obtain the optimal solution:

\begin{lemma}[\cite{bubeck2015convex}, Theorem 6.1]\label{lem: sgd}
    Let $f$ be a convex function on a convex body $\convexbody \subseteq \mathbb{R}^d$ with diameter
    $D = \max_{x,y \in \convexbody} \|x - y\|_2$. Let $G$ be an unbiased estimator for a sub-gradient of $f$ that satisfies $\|G(x)\|_2 \le B$ with probability $1$ for all $x \in \convexbody$. For any $\varepsilon > 0$, projected SGD converges to $x \in \convexbody$ with
    \[
      \E [f(x)] - \min_{x' \in K} f(x') \le \varepsilon
    \]
    using at most $T =  \frac{2 D^2 B^2}{\varepsilon^2}$ oracle calls to $G$ and projections on $\convexbody$.
\end{lemma}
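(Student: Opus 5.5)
This is the standard convergence guarantee for projected stochastic subgradient descent, and I would prove it with the usual potential argument on the squared distance to a minimizer.

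\emph{Setup.} Fix $x^* \in \arg\min_{x' \in \convexbody} f(x')$, which exists since $\convexbody$ is compact and $f$ is convex and hence continuous on the relevant region. Start from an arbitrary $x_1 \in \convexbody$. For $s = 1, \ldots, T$, draw $g_s := G(x_s)$ using fresh randomness, and set $x_{s+1} := \Pi_{\convexbody}(x_s - \eta g_s)$, where $\Pi_{\convexbody}$ is Euclidean projection and $\eta > 0$ is a step size chosen below. The algorithm outputs the average $\bar x := \frac{1}{T}\sum_{s \le T} x_s$, which lies in $\convexbody$ by convexity. Each iteration uses one oracle call and one projection, so there are $T$ of each.

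\emph{One-step inequality.} Projection onto a convex set is non-expansive toward points of the set, so
\[
\|x_{s+1} - x^*\|_2^2 \le \|x_s - \eta g_s - x^*\|_2^2 = \|x_s - x^*\|_2^2 - 2\eta\, g_s^\top (x_s - x^*) + \eta^2 \|g_s\|_2^2 .
\]
By hypothesis $\|g_s\|_2^2 \le B^2$. Let $\mathcal{F}_s$ be the history up to $x_s$. Unbiasedness gives that $\E[g_s \mid \mathcal{F}_s]$ is a subgradient of $f$ at $x_s$. The subgradient inequality then yields
\[
f(x_s) - f(x^*) \le \E[g_s \mid \mathcal{F}_s]^\top (x_s - x^*).
\]
Taking total expectation and rearranging,
\[
\E[f(x_s)] - f(x^*) \le \frac{\E\|x_s - x^*\|_2^2 - \E\|x_{s+1} - x^*\|_2^2}{2\eta} + \frac{\eta B^2}{2}.
\]

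\emph{Telescoping and choice of $\eta$.} Summing over $s = 1, \ldots, T$, the distance terms telescope. Since $\|x_1 - x^*\|_2 \le D$, we get
\[
\sum_{s} \big(\E f(x_s) - f(x^*)\big) \le \frac{D^2}{2\eta} + \frac{\eta B^2 T}{2}.
\]
Dividing by $T$ and using Jensen's inequality, $f(\bar x) \le \frac{1}{T}\sum_s f(x_s)$, so
\[
\E f(\bar x) - f(x^*) \le \frac{D^2}{2\eta T} + \frac{\eta B^2}{2}.
\]
With $\eta = D/(B\sqrt{T})$ the right-hand side equals $DB/\sqrt{T}$. This is at most $\varepsilon$ once $T \ge D^2B^2/\varepsilon^2$, so $T = 2D^2B^2/\varepsilon^2$ more than suffices.

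\emph{Main obstacle.} The only delicate step is the conditional-expectation step. The subgradient inequality must be applied to $\E[g_s \mid \mathcal{F}_s]$, not to $g_s$ itself. This is legitimate because $x_s$ is $\mathcal{F}_s$-measurable and the oracle's randomness at step $s$ is fresh, so unbiasedness holds conditionally on the past. The almost-sure bound $\|G(x)\|_2 \le B$ is what controls the second-order term $\eta^2\|g_s\|_2^2$; a bound on $\E\|g_s\|_2^2$ alone would also suffice.
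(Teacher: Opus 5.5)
Your proof is correct and is the standard argument. The paper does not prove this lemma itself; it cites Bubeck's Theorem 6.1, whose proof (stochastic mirror descent specialized to the Euclidean mirror map) is exactly your telescoping of $\|x_s - x^*\|_2^2$, using conditional unbiasedness, averaged iterates and step size $\eta \propto D/(B\sqrt{T})$. One small nitpick: convexity alone does not give continuity up to the boundary of $\convexbody$, so take existence of $x^*$ from the statement's own use of $\min$ (or use a near-minimizer); your bound $DB/\sqrt{T} \le \varepsilon/\sqrt{2}$ then holds as written.
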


The diameter $D$ of $\mathcal{K}$ is at most
\[
    D \le \sqrt{\sum_{t \in [n]} \sum_{e \in U} x^2_{e, t}} \le \sqrt{\sum_{t \in [n]} \sum_{e \in U} x_{e, t}} \le \sqrt{n \times 1} = \sqrt{n}.
\]
Further, it is easy to see that projection on $\mathcal{K}$ is efficient.

We will use \cref{lem: sgd} to minimize $f$ over $\mathcal{K}$. To complete the proof of \cref{lem: na-lp-poly-time-opt}, it is therefore sufficient to give an unbiased estimator $G$ for a sub-gradient $\partial f(x)$ that (1) can be computed in a polynomial number of oracle calls to the sampling oracle for distribution $p$, and (2) has norm $\|G(x)\|_2$ upper bounded by $\mathrm{poly}(n)$ for all $x \in \convexbody$ with probability $1$. We describe $G$ explicitly next, and show that both claims follow.

\paragraph{Computing sub-gradient $\partial u_{S, t}(x)$ for given scenario $S \in \mathcal{S}$ and point $x \in \convexbody$.} Given some point $x \in \convexbody$, scenario $S \in \mathcal{S}$, and time $t \in [n]$, we show that $\partial u_{S, t}(x)$ can be computed efficiently, using a technique similar to the separation oracle constructed in \cite{bansal2010constant}. Since $u_{S, t}(x)$ is a piecewise linear function in $x$, the sub-gradient is not uniquely defined at all points. However, as is standard, we can use the gradient of any of the linear pieces at the given point for the sub-gradient at the point.

\begin{lemma}\label{lem: compute-subgradient-for-scenario}
    Let $S \in \mathcal{S}$, $t \in [n]$, and $x \in \convexbody$. Define set $\hat{T}_{S, t}(x) = \hat{T}$ as follows:
    \[
        \hat{T} := {\arg\max}_{T \subseteq S: r(T) < r(S)} \left[1 - \frac{x(S) - x(T)}{r(S) - r(T)}\right].
    \]
    Then, (1) $\hat{T}$ can be computed efficiently, and (2) for $e \in U$ and $t' \in [n]$,
    \[
        \frac{\partial u_{S, t}(x)}{\partial x_{e, t'}} = \begin{cases} -\frac{1}{r(S) - r(\hat{T})} & \text{if} \ u_{S, t}(x) \neq 0, t' < t, \ \mathrm{and} \ e \in S \setminus \hat{T}\\
        0 & \text{otherwise}.
        \end{cases}
    \]
\end{lemma}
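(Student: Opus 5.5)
Part (2) is immediate once $\hat{T}$ is in hand, so I will settle it first and then spend most of the effort on part (1). Write $a_e := x_{e,<t} = \sum_{t'<t} x_{e,t'}$ for $e \in S$, and for each $T \subseteq S$ with $r(T) < r(S)$ let
\[
\ell_T(x) := 1 - \frac{x_{<t}(S) - x_{<t}(T)}{r(S) - r(T)},
\]
which is linear in $x$. Then $u_{S,t}(x) = \max(0, \max_T \ell_T(x))$ is a maximum of linear functions. The gradient of any piece active at $x$ is a valid subgradient. If $u_{S,t}(x) = 0$, the zero piece is active, and $0$ is a subgradient. Otherwise the active piece is $\ell_{\hat T}$. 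Its partial derivative in $x_{e,t'}$ is $-1/(r(S)-r(\hat T))$ exactly when $e \in S \setminus \hat T$ and $t'<t$, and $0$ otherwise. This gives the stated formula.

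For part (1), maximizing $\ell_T$ is the same as minimizing the ratio $\frac{a(S\setminus T)}{r(S)-r(T)}$ over $T$ with $r(T) < r(S)$. I would use the standard Dinkelbach/parametric reduction. For $\lambda \ge 0$, define
\[
g_\lambda(T) := a(S\setminus T) - \lambda\,(r(S) - r(T)).
\]
The first term is modular in $T$ and $\lambda r(T)$ is submodular, so $g_\lambda$ is submodular. The minimum ratio is at most $\lambda$ if and only if $\min_{T: r(T)<r(S)} g_\lambda(T) \le 0$.

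The constraint $r(T) < r(S)$ needs care, since $T = S$ gives $g_\lambda(S) = 0$ trivially. Every $T$ with $r(T) < r(S)$ misses some element $f$ of a fixed basis $B \subseteq S$ whose removal drops the rank, i.e. $f \notin \mathrm{span}(T)$. So I would handle the constraint as follows. For each $f \in S$, minimize $g_\lambda$ over the lattice of sets $T \subseteq S \setminus \{f\}$, which is again submodular minimization. When the optimum has rank equal to $r(S)$, I would check that the value is nonnegative and hence harmless. Alternatively, and more cleanly, restrict to flats not containing $f$: a minimizer can always be taken closed, because adding span elements raises $r(T)$ by nothing and only decreases $a(S\setminus T)$. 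Enforcing $f \notin \mathrm{span}(T)$ is the delicate part, and this is where I expect the main obstacle. To resolve it, I would instead minimize $g_\lambda(T)$ over $T \subseteq S \setminus \{f\}$ with $f$ forced out, and argue that if such a minimizer spans $f$, then adding $f$ to $T$ does not change the rank term but removes $a_f \ge 0$. That would contradict the assumption $f \notin T$ only in a weak sense, so some care is needed with ties and zero entries. A small perturbation of $a$, or lexicographic tie-breaking, should handle these.

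Finally, the optimal $\lambda$ has the form $a(S\setminus T)/(r(S)-r(T))$, and the denominator is an integer in $[1,n]$. I would locate it either by binary search to polynomial precision or exactly by Dinkelbach iterations, which terminate in polynomially many steps. This is because the breakpoints of the piecewise-linear function $\lambda \mapsto \min_T g_\lambda(T)$ number at most $n$, one per rank value. Equivalently, I would use the parametric submodular minimization of Iwata--Fleischer--Fujishige cited in the paper. Each step is one submodular minimization with a rank oracle, so $\hat T$ is computed in polynomial time.
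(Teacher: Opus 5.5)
Your proposal is correct and follows essentially the same route as the paper: the same parametric function $g_\lambda = h_\lambda$ (modular plus $\lambda r$, hence submodular), a search over $\lambda$ using submodular minimization, and the subgradient taken as the gradient of the active linear piece $\ell_{\hat T}$. The constraint $r(T)<r(S)$, which the paper glosses over, is easier than you fear. Any $T$ with $r(T)=r(S)$ has $g_\lambda(T)=a(S\setminus T)\ge 0$, so a set with $g_\lambda(T)<0$ is automatically feasible; hence unconstrained minimization over $2^S$ inside Dinkelbach's iteration (whose integer denominators $r(S)-r(T_k)$ strictly decrease, giving $O(r(S))$ rounds) already suffices, and the detour through flats and perturbation is unnecessary.
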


\begin{proof}
    (1) Fix $S, t, x$, and denote $y_e := x_{e, < t}$. Then $\hat{T} = {\arg\min}_{T \subseteq S} \frac{y(S) - y(T)}{r(S) - r(T)}$. For $\lambda \ge 0$, define set function $h_\lambda: 2^S \to \R$ as
    \[
        h_\lambda(T) := \left(y(S) - y(T)\right) - \lambda \left(r(S) - r(T)\right) = \left(\lambda r(T) - y(T)\right) + \left(y(S) - \lambda r(S)\right).
    \]
    Then ${\min}_{T \subseteq S} \frac{y(S) - x(T)}{r(S) - r(T)} \ge \lambda$ if and only if $h_\lambda(T) \ge 0$. Since $h_\lambda$ is a submodular function for all $\lambda$, we can find the minimizer $\hat{T}$ be searching over $\lambda$ and minimizing $h_\lambda$ over subsets of $S$ \cite{iwata2001combinatorial}. This proves the first part.

    (2) If $u_{S, t}(x) = 0$, then the sub-gradient $\partial u_{S, t}(x) = 0$ since $u_{S, t}$ is a piecewise linear function.
    
    Otherwise, suppose $u_{S, t}(x) \neq 0$. Then $u_{S, t}(x) = \left(1 - \frac{y(S) - y(\hat{T})}{r(S) - r(\hat{T})}\right)$, and for given $e \in U$ and $t' \in [n]$,
    \[
        \frac{\partial u_{S, t}(x)}{\partial x_{e, t'}} = \frac{\partial u_{S, t}}{\partial y_{e}} \cdot \frac{\partial y_e}{\partial x_{e, t'}}.
    \]
    The first factor $\frac{\partial u_{S, t}}{\partial y_{e}} = - \frac{1}{r(S) - r(\hat{T})}$ if $e \in S \setminus \hat{T}$, and $0$ otherwise. The second factor $\frac{\partial y_e}{\partial x_{e, t'}} = 1$ if $t' < t$ and $0$ otherwise. The result follows.
\end{proof}

The above lemma allows us to construct an unbiased gradient estimator $G(x) = \partial f(x)$ efficiently, as follows:
\begin{enumerate}
    \item Sample scenario $S \sim p$ from distribution $p$ on scenarios $\mathcal{S}$.
    \item For each $t \in [n]$, compute $\hat{T}_{S, t}(x)$ using the above lemma.
    \item Return estimate $G(x) = \sum_{t \in [n]} \partial u_{S, t}(x)$ using the above lemma.
\end{enumerate}
Then, using linearity of expectation and sub-gradients, we get
\[
    \E_{S \sim p}[G(x)] = \sum_{S} p_S \sum_{t \in [n]} \partial u_{S, t}(x) = \partial \left(\sum_{t} \sum_{S} p_S u_{S, t}(x)\right) = \partial f(x).
\]

\paragraph{Norm bound on gradient estimate.} Finally, we upper bound the norm of $G(x)$, completing the proof.
\begin{lemma}\label{lem: gradient-estimate-bound}
    $\|G(x)\|_2 \le n^2$ for all $x \in \convexbody$, with probability $1$.
\end{lemma}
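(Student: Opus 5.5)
We bound the norm of the estimator directly from the explicit formula in \cref{lem: compute-subgradient-for-scenario}, with no probabilistic argument needed: since $G(x) = \sum_{t \in [n]} \partial u_{S,t}(x)$ for the sampled scenario $S$, it suffices to bound each coordinate of $G(x)$ deterministically, uniformly over every possible sample $S$.

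First we bound the entries of a single sub-gradient. By \cref{lem: compute-subgradient-for-scenario}, every coordinate of $\partial u_{S,t}(x)$ is either $0$ or $-\frac{1}{r(S) - r(\hat{T})}$. Here $\hat{T}$ ranges only over sets with $r(\hat{T}) < r(S)$. Because the rank function is integer-valued, $r(S) - r(\hat{T}) \ge 1$, so every coordinate of $\partial u_{S,t}(x)$ has absolute value at most $1$. This integrality observation is the only real content of the proof. The one subtle point is that $\hat{T}$ must be chosen among sets of strictly smaller rank, which the definition guarantees.

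Next we sum over $t$. For a fixed coordinate $(e,t')$, the quantity $G(x)_{e,t'}$ is a sum of at most $n$ terms, one per $t \in [n]$, each of absolute value at most $1$. Hence $|G(x)_{e,t'}| \le n$.

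Finally we take the Euclidean norm. There are $|U| \cdot n = n^2$ coordinates, so
\[
\|G(x)\|_2 \le \sqrt{n^2 \cdot n^2} = n^2 .
\]
This holds for every $S$ in the support of $p$ and every $x \in \convexbody$, hence with probability $1$.

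A slightly tighter count is available, since only $e \in S \setminus \hat{T}$ and $t' < t$ contribute. We do not need it.

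With $B = n^2$ and $D \le \sqrt{n}$, \cref{lem: sgd} gives $\mathrm{poly}(n, 1/\varepsilon)$ iterations. Each gradient estimate costs $n$ parametric submodular minimizations and one sample from $p$. This completes the proof of \cref{lem: na-lp-poly-time-opt}.
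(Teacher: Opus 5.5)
Your proof is correct and follows the paper's argument exactly: an entrywise bound of $1$ on each $\partial u_{S,t}(x)$, a factor of $n$ from summing over $t$, and a factor $\sqrt{|U|\cdot n} = n$ from the number of coordinates. You also make explicit the integrality step $r(S) - r(\hat{T}) \ge 1$, which the paper uses without stating it.
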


\begin{proof}
    Fix scenario $S \in \mathcal{S}$, time $t \in [n]$, and $x \in \convexbody$. By the previous lemma, for all $e \in U$ and $t' \in [n]$, $\left|\frac{\partial u_{S, t}(x)}{\partial x_{e, t'
    }}\right| \le 1$. Summing across time $t \in [n]$,
    \[
        \left| \frac{\partial \left(\sum_{t} u_{S, t}(x)\right)}{\partial x_{e, t'}}\right| \le n,
    \]
    and therefore, in this scenario, $\|G(x)\|_2 \le n \cdot \sqrt{|U| \times n} = n^2$.
\end{proof}

\subsection{Proof of \cref{lem: bad-elements-random-partition}}

We prove the following lemma:

\badElementsRandomPartition*

\begin{proof}
Assume without loss of generality that $|T| = N/2$, since increasing the size can only increase the probability of the event stated in the lemma.

For each $\ell \in T$, define the indicator random variable
\[
    Z_\ell = \mathbf{1}\!\bigl[\phi(\ell) = \pi(\ell)\bigr],
\]
and let $Z = \sum_{\ell \in T} Z_\ell$. Since $\pi$ is a uniformly random permutation, $\pi(\ell)$ is uniform over $[N]$ for each $\ell$, so $\E[Z_\ell] = \frac{1}{N}$, and so $\E[Z] = 1/2$.

We now compute the second moment of $Z$ and apply the Paley-Zygmund inequality (\cite{paley1930some}; see also \cite{alon2016probabilistic}) to $Z$. 
\[
    \E[Z^2] = \sum_{\ell \in T} \E[Z_\ell] + \sum_{\substack{\ell, m \in T: \\ \ell \neq m}} \E[Z_\ell Z_m].
\]
For $\ell \neq m$, the event $\{Z_\ell\, Z_m = 1\}$ requires $\pi(\ell) = \phi(\ell)$ and $\pi(m) = \phi(m)$ simultaneously. Since $\pi$ is a permutation, $\pi(\ell) \neq \pi(m)$. If $\phi(\ell) = \phi(m)$, then $Z_\ell Z_m = 0$. Otherwise, $\E[Z_\ell Z_m] = \Pr(Z_\ell Z_m = 1) = \frac{1}{N(N - 1)}$. Therefore, $\E[Z^2] \le |T| \cdot \frac{1}{N} + |T|(|T| - 1) \cdot \frac{1}{N(N - 1)} < 1$.

Since $Z$ is an integer random variable, the Paley-Zygmund inequality gives
\[
    \Pr(Z \ge 1) = \Pr(Z > 0)
    \ge \frac{\bigl(\E[Z]\bigr)^2}{\E[Z^2]} > \frac{1}{4}. \qedhere
\]
\end{proof}

\section{Proof of \cref{thm: cna-2-approximation}}\label{sec: updated-greedy-full-proof}

Here, we prove \cref{thm: cna-2-approximation}. It might be helpful to go through Section \cref{sec: motivating-example-updated-greedy} first where we prove the theorem for a special case, highlighting the main steps in the proof. However, \cref{sec: cna-general-case} can be read independently of \cref{sec: motivating-example-updated-greedy}.

\subsection{A motivating example}\label{sec: motivating-example-updated-greedy}

First, we analyze the updated greedy algorithm for a special case (stated next), making several assumptions that we remove later. Let $\OPT$ cover everything by time $K \in [n]$, that is, remaining mass $r^*_{K + 1} = \Pr(\overline{X^*_1}, \ldots, \overline{X^*_K}) = 0$. Let $j_1, \ldots, j_K$ denote the indices such that $X_{j_i} = X^*_{i}$ for $i \in [K]$. Denote $j_0 = 0$. Suppose $X^*_1, \ldots, X^*_K$ is a subsequence of $X_1, \ldots, X_n$, that is, $j_1 < j_2 < \ldots < j_K$. Intuitively, the updated greedy algorithm follows the optimal order but uses some extra random variables. \emph{The key idea is to (roughly) charge the cost $\sum_{j = j_{i - 1} + 1}^{j_i} r_j$ incurred by the updated greedy algorithm between times $j_{i - 1}$ and $j_{i}$ to the cost $r^*_i$ incurred by the optimal at time $i$.}

Since $X^*_1, \ldots, X^*_K$ cover everything, and since $X^*_i = X_{j_i}$, the updated greedy algorithm covers everything by time $j_K$, that is, $q_{j_{K} + 1} = \ldots = q_{n} = 0$. Therefore, from eqn. (\ref{eqn: updated-greedy-cost}),
\[
    \ugreedy = \sum_{j \in [n]} \frac{q_j}{c_j} = \sum_{j \in [j_K]} \frac{q_j}{c_j} = \sum_{i = 1}^K \sum_{j = j_{i - 1} + 1}^{j_i} \frac{q_j}{c_j}.
\]
As we will show in our key Lemma \ref{lem: conditional-probability-relation-greedy-and-opt-2}, for $j_{i - 1} < j \le j_{i}$, the conditional probability $c_j \ge c^*_{i}$. Intuitively, the `rate' at which updated greedy makes progress between $j_{i - 1}$ and $j_{i}$ is at least $c^*_i$. Therefore, $\ugreedy$ is upper bounded by:
\[
    \ugreedy = \sum_{i = 1}^K \sum_{j = j_{i - 1} + 1}^{j_i} \frac{q_j}{c_j} \le \sum_{i = 1}^K \sum_{j = j_{i - 1} + 1}^{j_i} \frac{q_j}{c^*_i} = \sum_{i = 1}^K \frac{1}{c^*_i} \sum_{j = j_{i - 1} + 1}^{j_i} q_j.
\]
Further, since $r_j$ denotes the probability mass uncovered at the start of step $j$, we have $\sum_{j = j_{i - 1} + 1}^{j_i} q_j = \sum_{j = j_{i - 1} + 1}^n q_j - \sum_{j = j_{i} + 1}^n q_j = r_{j_{i - 1} + 1} - r_{j_{i} + 1}$. Therefore,
\begin{equation}\label{eqn: updated-greedy-upper-bound-1}
    \ugreedy \le \sum_{i = 1}^K \frac{1}{c^*_i} (r_{j_{i - 1} + 1} - r_{j_i + 1}) = \sum_{i = 1}^K r_{j_{i - 1} + 1} \left(\frac{1}{c^*_i} - \frac{1}{c^*_{i - 1}} \right).
\end{equation}

Next, since $\{X^*_1, \ldots, X^*_{i - 1}\} \subseteq \{X_1, \ldots, X_{j_{i -1}}\}$, we have $r_{j_{i - 1} + 1} \le r^*_{i}$. This would help us get an upper bound on $\ugreedy$ if we had $\frac{1}{c^*_i} - \frac{1}{c^*_{i - 1}} \ge 0$, i.e., if $c^*_i$ was monotone decreasing in $i$. Unfortunately, this is false in general: for example, whenever $K > 1$, we have $c^*_K = 1 > q_1^* = c^*_1$. However, as we show later in Lemma \ref{lem: monotone-conditional-probability}, we can replace conditional probabilities $c^*_i$ with `proxy' values $d^*_i \le c^*_i$ that satisfy $d^*_{1} \ge \ldots \ge d_n^*$, and with the proxy cost within factor $2$ of the optimal cost $\OPT$:
\[
    \OPT_{\mathrm{proxy}} := \sum_{i \in [K]} \frac{q^*_i}{d^*_i} \le 2 \sum_{i \in [K]} \frac{q^*_i}{c^*_i} = 2 \ \OPT.
\]
Let's go back to the upper bound (\ref{eqn: updated-greedy-upper-bound-1}) on $\ugreedy$ and replace $c^*_{i}$ with the proxy $d^*_i$. Then, let's use our earlier observation $r_{j_{i - 1} + 1} \le r^*_{i}$ and monotonicity $d^*_{i - 1} \ge d^*_{i}$, giving us the following:
\begin{align*}
    \ugreedy &\le \sum_{i = 1}^K \frac{1}{c^*_i} (r_{j_{i - 1} + 1} - r_{j_{i} + 1}) & (\mathrm{\cref{eqn: updated-greedy-upper-bound-1}}), \\
    &\le \sum_{i = 1}^K \frac{1}{d_i^*} (r_{j_{i - 1} + 1} - r_{j_{i} + 1}) & (d_i^* \le c_i^*), \\
    &= \sum_{i = 1}^K r_{j_{i - 1} + 1} \left(\frac{1}{d_{i}^*} - \frac{1}{d_{i - 1}^*}\right) & (\text{rearranging the sum}), \\
    &\le \sum_{i = 1}^K r^*_{i} \left(\frac{1}{d_{i}^*} - \frac{1}{d_{i - 1}^*}\right) & (r_{j_{i - 1} + 1} \le r^*_i), \\
    &= \sum_{i = 1}^K \frac{1}{d^*_i}(r^*_{i} - r^*_{i + 1}) & (\text{rearranging the sum}), \\
    &= \sum_{i = 1}^K \frac{q^*_{i}}{d_i^*} = \OPT_{\text{proxy}} \le 2 \ \OPT.
\end{align*}

\subsection{The general case}\label{sec: cna-general-case}

Next, we will remove the assumption that $X^*_1, \ldots, X^*_K$ is a subsequence of $X_1, \ldots, X_n$. Recall that $X_1, \ldots, X_n$ is the probing order of the updated greedy algorithm, and $X^*_1, \ldots, X^*_n$ is the optimal probing order. The outline of the proof is as follows: we will define indices $j_0 < j_1 < \ldots < j_K$ for the updated greedy algorithm and indices $t_0 < t_1 < \ldots < t_K < t_{K + 1}$ for the optimal. The goal is to (roughly) charge the cost $\sum_{j = j_{i - 1} + 1}^{j_i} r_j$ incurred by the updated greedy algorithm between $j_{i - 1}$ and $j_{i}$ to the cost $\sum_{t = t_{i}}^{t_{i + 1} - 1} r^*_t$ incurred by the optimal between $t_{i}$ and $t_{i + 1}$.

We begin by defining \emph{thresholds}, which map each index $j \in [n]$ in the updated greedy algorithm to a `threshold index' $T(j) \in [n]$ in the optimal:

\begin{definition}[Threshold index]\label{def: thereshold-index}
    Given $j \in [n]$, define $T(j)$ as the smallest index $t \in [n]$ such that $X^*_t \not\in \{X_1, \ldots, X_{j - 1}\}$. $T(j)$ is called the threshold index for $j$.
\end{definition}

Intuitively, threshold indices allow us to lower bound the progress made by the updated greedy algorithm: our next lemma relates the conditional probability $c_j = \Pr(X_j | \overline{X_1}, \ldots, \overline{X_{j - 1}})$ with the conditional probability $c^*_{T(j)} = \Pr(X^*_{T(j)} | \overline{X^*_1}, \ldots, \overline{X^*_{T(j) - 1}})$. \emph{This lemma is the only place where we use conditional negative association.}

\begin{lemma}\label{lem: conditional-probability-relation-greedy-and-opt-1}
    Suppose random variables $X_1, \ldots, X_n$ satisfy the conditional negative cylinder property \ref{eqn: conditional-negative-cylinder-property}. Then, for all $j \in [n]$, $c^*_{T(j)} \le c_j$.
\end{lemma}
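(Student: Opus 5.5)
Let $t := T(j)$ and write $Y := X^*_t$, $P := \{X^*_1, \ldots, X^*_{t-1}\}$ and $G := \{X_1, \ldots, X_{j-1}\}$. By the definition of the threshold index, $P \subseteq G$ and $Y \notin G$. The plan is to combine the greedy rule~\eqref{eqn: updated-greedy-rule} with the negative cylinder property in two steps.

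\emph{Step 1 (greedy).} The greedy algorithm picks $X_j$ as the element maximizing $\Pr(\cdot \mid \overline{G})$. Since $Y \notin G$, $Y$ was available at step $j$, so
\[
  c_j = \Pr(X_j \mid \overline{G}) \ge \Pr(Y \mid \overline{G}).
\]
It therefore suffices to show $\Pr(Y \mid \overline{G}) \ge \Pr(Y \mid \overline{P}) = c^*_t$. In words, conditioning on the additional elements $G \setminus P$ being inactive does not decrease the probability that $Y$ is active.

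\emph{Step 2 (CNCP).} Let $E$ be the event that all elements of $P$ are $0$, and let $B := G \setminus P$; then $Y \notin P \cup B$. Apply \ref{eqn: conditional-negative-cylinder-property} with conditioning set $P$ (all values $0$), $A = \{Y\}$, and this $B$:
\[
  \Pr(Y, \overline{B} \mid E) \ge \Pr(Y \mid E)\,\Pr(\overline{B} \mid E).
\]
Divide both sides by $\Pr(\overline{B} \mid E)$. The left side becomes $\Pr(Y \mid \overline{P}, \overline{B}) = \Pr(Y \mid \overline{G})$, and the right side becomes $\Pr(Y \mid \overline{P}) = c^*_t$. This gives the required inequality, and chaining it with Step 1 yields $c_j \ge c^*_{T(j)}$.

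\emph{Degenerate cases.} These are the main thing to check. If $\Pr(\overline{G}) = 0$, then $c_j$ (and $r_j = 0$) is undefined or irrelevant; I would adopt the convention that such terms contribute nothing to the cost, or that $c_j$ is defined as $1$. If $\Pr(\overline{P}) = 0$, then $c^*_t$ is likewise undefined; but $P \subseteq G$ forces $\Pr(\overline{G}) = 0$ as well, so this reduces to the previous case. I would also note that if $G \setminus P = \emptyset$, the inequality is trivial. The only substantive step is the correct instantiation of CNCP: the conditioning set must be exactly the optimal prefix $P$, and the inactive-cylinder set must be the extra greedy elements $B = G \setminus P$.
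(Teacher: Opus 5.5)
Your proposal is correct and follows essentially the same argument as the paper: the greedy rule gives $c_j \ge \Pr(X^*_{T(j)} \mid \overline{G})$, and then CNCP is applied with the optimal prefix $P$ conditioned inactive and the extra greedy elements $G \setminus P$ as the inactive cylinder, which yields $\Pr(X^*_{T(j)} \mid \overline{G}) \ge c^*_{T(j)}$. Your explicit treatment of the zero-probability conditioning cases, using that $P \subseteq G$ forces $\Pr(\overline{G}) \le \Pr(\overline{P})$, is a small addition that the paper's proof leaves implicit.
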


\begin{proof}
    Denote the set $A = \{X_1, \ldots, X_{t - 1}\}$. By definition, $X^*_{T(j)} \not\in A$ and $X^*_t \in A$ for all $t < T(j)$. Denote $B = \{X^*_1, \ldots, X^*_{T(j) - 1}\}$, and $C = A \setminus B$.

    By the greedy algorithm update rule (eqn. (\ref{eqn: updated-greedy-rule})),
    \begin{align*}
        c_j &= \Pr(X_j | \overline{X_1}, \ldots, \overline{X_{j - 1}})  = \Pr(X_j | \overline{A}) \\
        &\ge \Pr(X^*_{T(j)} | \overline{A}) \\
        &= \Pr(X^*_{T(j)}, \overline{A}) \times \frac{1}{\Pr(\overline{A})} \\
        &= \Pr(X^*_{T(j)}, \overline{B}, \overline{C}) \times \frac{1}{\Pr(\overline{A})} \\
        &= \Pr(X^*_{T(j)}, \overline{C} | \overline{B}) \times \Pr(\overline{B}) \times \frac{1}{\Pr(\overline{A})}.
    \end{align*}
    This sets us up to use \ref{eqn: conditional-negative-cylinder-property}: since $X^*_{T(j)} \not\in B$ and $X^*_{T(j)} \not\in C$, we get
    \begin{align*}
        c_j &\ge {\Pr(X^*_{T(j)}, \overline{C} | \overline{B})} \times \Pr(\overline{B}) \times \frac{1}{\Pr(\overline{A})} \\
        &{\ge \Pr(X^*_{T(j)} | \overline{B}) \times \Pr(\overline{C} | \overline{B})} \times \Pr(\overline{B}) \times \frac{1}{\Pr(\overline{A})} & {(\ref{eqn: conditional-negative-cylinder-property})},\\
        &= \Pr(X^*_{T(j)} | \overline{B}) & (\text{since }\Pr(\overline{B}, \overline{C}) = \Pr(\overline{A})), \\
        &= c^*_{T(j)}.
    \end{align*}
\end{proof}

Next, we define indices $j_0, j_1, \ldots$ and $t_0, t_1, \ldots$, starting with $t_0 = j_0 = 0$ and $t_1 = 1$. Having defined $t_i$, define $j_i$ as the index $j$ such that $X^*_{t_i} = X_j$, and then define $t_{i + 1} = T(j_i + 1)$. Stop when $j_{i + 1} = n$ to get the sequence $j_0, j_1, \ldots, j_K = n$. It follows from the definitions that $j_0 < j_1 < \ldots < j_K$ and $t_0 < t_1 < \ldots < t_K$. $t_{K + 1}$ is defined as $n + 1$ for brevity.

We illustrate this with an example: suppose $n = 8$, with $X^*_1 = X_4$, $X_2^* = X_7$, $X_3^* = X_5$, and $X_4^* = X_8$. Then, $t_1 = 1$, and $j_1 = 4$. $t_2 = T(j_1 + 1) = T(5) = 2$, and $j_2 = 7$ since $X^*_{t_2} = X^*_2 = X_7$. $t_3 = T(j_2 + 1) = T(8) = 4$, and $j_3 = 8 = n$. We stop since $j_4 = n$, with $K = 4$ and $t_5$ defined as $n + 1 = 9$.

Note that for all $j \in [j_{i - 1} + 1, j_{i}]$, we have $T(j) = t_{i}$. The following lemma then follows from Lemma \ref{lem: conditional-probability-relation-greedy-and-opt-1}:

\begin{lemma}\label{lem: conditional-probability-relation-greedy-and-opt-2}
    For $j_{i - 1} < j \le j_{i}$, the conditional probability $c_j \ge c^*_{t_{i}}$.
\end{lemma}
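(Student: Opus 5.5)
The plan is to reduce the statement to \cref{lem: conditional-probability-relation-greedy-and-opt-1}. That lemma already gives $c_j \ge c^*_{T(j)}$ for every $j$, so it suffices to show that $T(j) = t_i$ for every $j$ with $j_{i-1} < j \le j_i$. Substituting this into \cref{lem: conditional-probability-relation-greedy-and-opt-1} yields $c_j \ge c^*_{t_i}$ directly.

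To prove $T(j) = t_i$ on this range, I would first show that $T$ is non-decreasing in $j$. Enlarging the prefix $\{X_1,\ldots,X_{j-1}\}$ can only delay the first optimal index $t$ with $X^*_t$ outside the prefix. Next I evaluate the endpoints.

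\emph{Left endpoint.} By construction $t_i = T(j_{i-1}+1)$ for $i \ge 2$. For $i = 1$ we have $j_0 + 1 = 1$, and $T(1) = 1 = t_1$ because the prefix is empty.

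\emph{Right endpoint.} At $j = j_i$, the prefix $\{X_1,\ldots,X_{j_i-1}\}$ excludes $X_{j_i} = X^*_{t_i}$. Hence the first optimal index missing from this prefix is at most $t_i$, so $T(j_i) \le t_i$.

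Monotonicity then sandwiches the range: $t_i = T(j_{i-1}+1) \le T(j) \le T(j_i) \le t_i$, which forces $T(j) = t_i$ throughout.

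The main obstacle is purely bookkeeping: checking that the recursive definition of the indices is well posed. Concretely, $X^*_{t_i}$ must not lie in $\{X_1,\ldots,X_{j_{i-1}}\}$, since this is exactly what gives $j_i > j_{i-1}$ and makes the interval $(j_{i-1}, j_i]$ nonempty and consistent. This follows from the definition of $T(j_{i-1}+1)$ as the first optimal index outside that prefix.

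I would state that fact explicitly and then apply \cref{lem: conditional-probability-relation-greedy-and-opt-1} pointwise on the interval. No further use of CNA is needed beyond what that lemma already uses.
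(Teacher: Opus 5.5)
Your proposal is correct and follows the paper's route: the paper simply notes that $T(j) = t_i$ for all $j \in [j_{i-1}+1, j_i]$ and then invokes \cref{lem: conditional-probability-relation-greedy-and-opt-1}. Your sandwich $t_i = T(j_{i-1}+1) \le T(j) \le T(j_i) \le t_i$ supplies the justification for that observation, which the paper states without proof.
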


This helps us bound the cost of the updated greedy algorithm:
\begin{align*}
    \ugreedy &= \sum_{j \in [n]} \frac{q_j}{c_j} = \sum_{j \in [j_K]} \frac{q_j}{c_j} = \sum_{i = 1}^K \sum_{j = j_{i - 1} + 1}^{j_i} \frac{q_j}{c_j} \\
    &\le \sum_{i = 1}^K \sum_{j = j_{i - 1} + 1}^{j_i} \frac{q_j}{c^*_{t_{i}}} & (\text{Lemma}\ \ref{lem: conditional-probability-relation-greedy-and-opt-2}) \\
    &= \sum_{i = 1}^K \frac{1}{c^*_{t_{i}}} \sum_{j = j_{i - 1} + 1}^{j_i} q_j.
\end{align*}

Further, since $r_j$ denotes the mass uncovered by the updated greedy algorithm at the start of step $j$, we have $\sum_{j = j_{i - 1} + 1}^{j_i} q_j = \sum_{j = j_{i - 1} + 1}^n q_j - \sum_{j = j_{i} + 1}^n q_j = r_{j_{i - 1} + 1} - r_{j_{i} + 1}$. Therefore,
\begin{equation}\label{eqn: updated-greedy-upper-bound-2}
    \ugreedy \le \sum_{i = 1}^K \frac{1}{c^*_{t_{i}}} (r_{j_{i - 1} + 1} - r_{j_i + 1}) = \sum_{i = 1}^K r_{j_{i - 1} + 1} \left(\frac{1}{c^*_{t_i}} - \frac{1}{c^*_{t_{i - 1}}} \right).
\end{equation}

The next lemma compares the progress made by optimal with the progress made by the greedy algorithm:

\begin{lemma}\label{lem: updated-greedy-measuring-progress}
    For all $i \ge 1$, we have $r^*_{t_i} \ge r_{j_{i - 1} + 1}$.
\end{lemma}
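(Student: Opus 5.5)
The plan is to compare the two ``remaining mass'' quantities as probabilities of the events $\{\text{all variables in a set are } 0\}$. We use that these events are monotone under inclusion: if $P \subseteq Q$, then $\Pr(\overline{Q}) \le \Pr(\overline{P})$. We have $r^*_{t_i} = \Pr(\overline{X^*_1}, \ldots, \overline{X^*_{t_i - 1}})$ and $r_{j_{i-1}+1} = \Pr(\overline{X_1}, \ldots, \overline{X_{j_{i-1}}})$. It therefore suffices to show the set inclusion
\[
\{X^*_1, \ldots, X^*_{t_i - 1}\} \subseteq \{X_1, \ldots, X_{j_{i-1}}\}.
\]

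For $i = 1$ this is trivial: $t_1 = 1$ and $j_0 = 0$, both sets are empty, and both sides equal $1$. For $i \ge 2$, recall that by construction $t_i = T(j_{i-1} + 1)$. By \cref{def: thereshold-index}, $T(j)$ is the smallest index $t$ with $X^*_t \notin \{X_1, \ldots, X_{j-1}\}$. Hence every $t < T(j)$ satisfies $X^*_t \in \{X_1, \ldots, X_{j-1}\}$. Apply this with $j = j_{i-1} + 1$. Then every $X^*_t$ with $t < t_i$ lies in $\{X_1, \ldots, X_{j_{i-1}}\}$, which is exactly the required inclusion. Monotonicity then gives $r_{j_{i-1}+1} \le r^*_{t_i}$.

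This argument uses no correlation assumption, only the minimality in the definition of the threshold index. The only point needing care is an edge case: $T(j)$ might not exist if every $X^*_t$ already appears among $X_1, \ldots, X_{j-1}$. This happens only when $j - 1 = n$, which is excluded because the sequence stops at $j_K = n$. I therefore expect no real obstacle beyond checking that the indexing conventions ($t_0 = j_0 = 0$, $t_1 = 1$) make the base case consistent.
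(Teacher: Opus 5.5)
Your proposal is correct and matches the paper's proof: minimality of the threshold index $t_i = T(j_{i-1}+1)$ gives the inclusion $\{X^*_1,\ldots,X^*_{t_i-1}\} \subseteq \{X_1,\ldots,X_{j_{i-1}}\}$, and monotonicity of the all-zero event under set inclusion then yields $r^*_{t_i} \ge r_{j_{i-1}+1}$. Your separate check of $i=1$ and your remark on when $T(j)$ is defined are extra care that the paper leaves implicit.
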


\begin{proof}
    By definition, $t_i = T(j_{i - 1} + 1)$ is the smallest index $t$ such that $X^*_t \not\in \{X_1, \ldots, X_{j_{i - 1}}\}$. Therefore, for all $t < t_i$, $X^*_t \in \{X_1, \ldots, X_{j_{i - 1}}\}$. Consequently, $r^*_{t_i} = \Pr(\overline{X^*_1}, \ldots, \overline{X^*_{t_{i} - 1}}) \ge \Pr(\overline{X_1}, \ldots, \overline{X_{j_{i - 1}}}) \allowbreak = r_{j_{i - 1} + 1}$.
\end{proof}

Therefore, if we had $c^*_{t_{i - 1}} \ge c^*_{t_{i}}$ for all $i$, then (\cref{eqn: updated-greedy-upper-bound-2}) would give the upper bound $\sum_{i = 1}^K r^*_{t_{i}} \left(\frac{1}{c^*_{t_i}} - \frac{1}{c^*_{t_{i - 1}}}\right) \allowbreak = \sum_{i = 1}^K \frac{r^*_{t_i} - r^*_{t_{i + 1}}}{c^*_{t_i}}$ on $\ugreedy$. However, as we previously noted, this is false: for example, $c^*_n = 1$ but $c^*_{t_i} < 1$ for all $i \le K$. Our final lemma shows that we can replace conditional probabilities $c^*_{t}$ with `proxy' values $d^*_t \le c^*_t$ such that $d^*_1 \ge d^*_2 \ge \ldots \ge d^*_n > 0$, with the proxy optimal cost $\OPT_{\mathrm{proxy}} := \sum_{t \in [n]} \frac{q_t^*}{d_t^*}$ within a factor $2$ of $\OPT$:

\begin{lemma}\label{lem: monotone-conditional-probability}
    There exist numbers $1 \ge d_1^* \ge \ldots \ge d^*_n > 0$ such that $d^*_t \le c^*_t$ for all $t \in [n]$, and such that the proxy optimal cost $\OPT_{\mathrm{proxy}} := \sum_{t \in [n]} \frac{q_t^*}{d_t^*}$ satisfies
    \[
        \OPT = \sum_{t \in [n]} \frac{q^*_t}{c^*_t} \ge \frac{1}{2} \sum_{t \in [n]} \frac{q_t^*}{d_t^*} = \frac{1}{2} \ \OPT_{\mathrm{proxy}}.
    \]
\end{lemma}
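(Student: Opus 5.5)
The plan is to build the proxy values $d^*_t$ explicitly as a monotone ``envelope'' of the $c^*_t$, and then bound $\OPT_{\mathrm{proxy}}$ block by block against the representation $\OPT=\sum_t r^*_t$ from \cref{eqn: opt-cost}. Steps with $q^*_t=0$ contribute nothing to either side. We first discard them, and if needed also every step after $\OPT$ has covered everything, where $r^*_t=0$. After this every remaining $c^*_t>0$, and we may interpolate $d^*$ freely on the discarded indices while keeping monotonicity.

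\textbf{Why a prefix minimum alone fails.} The naive choice $d^*_t=\min_{s\le t}c^*_s$ is monotone and satisfies $d^*_t\le c^*_t$, but it can overcharge badly. If $c^*_1$ is tiny and $c^*_2=1$, the term $q^*_2/d^*_2=r^*_2/c^*_1$ can be arbitrarily larger than $r^*_2$. So the envelope must also exploit how $r^*_t$ decays.

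\textbf{Block construction.} Using $q^*_t=c^*_t r^*_t$ and $r^*_{t+1}=(1-c^*_t)r^*_t$, I would partition $[n]$ into maximal consecutive blocks $B=[a,b]$ by a pool-adjacent-violators procedure. Scanning left to right, a new block is started at $t$ whenever $c^*_t$ is at most the current block value; otherwise $t$ is merged into the current block. Within a block, all $c^*_t$ with $t>a$ exceed $c^*_a$. Hence $r^*$ decays at least geometrically with ratio $(1-c^*_a)$ inside $B$, so $\sum_{t\in B} q^*_t=r^*_a-r^*_{b+1}$. I would then set $d^*_t:=\lambda_B$ for all $t\in B$, where $\lambda_B$ is chosen so that
\[
\sum_{t\in B}\frac{q^*_t}{\lambda_B}=\frac{r^*_a-r^*_{b+1}}{\lambda_B}\le 2\sum_{t\in B}r^*_t .
\]
Summing over blocks then gives $\OPT_{\mathrm{proxy}}\le 2\,\OPT$.

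\textbf{Choice of $\lambda_B$ and the main obstacle.} The natural candidate is $\lambda_B$ equal to half the $r^*$-weighted average of $c^*_t$ over $B$, namely $\tfrac12\,(r^*_a-r^*_{b+1})/\sum_{t\in B}r^*_t$. With this choice the displayed inequality holds with equality. What remains is to check two things: that blocks can be cut so that the $\lambda_B$ are non-increasing across blocks, and, more importantly, that $\lambda_B\le c^*_t$ for every $t\in B$.

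The second requirement is the crux. A weighted average can exceed the smallest $c^*_t$ in the block, and the factor $\tfrac12$ must absorb this. My first attempt would be to cut a block at $a$ and its successor as soon as the running average would exceed $2c^*_a$; then $c^*_a\ge$ average$/2=\lambda_B$. The later $c^*_t$ in the block are at least $c^*_a$, so they also dominate $\lambda_B$. Monotonicity across blocks should follow from the cutting rule, since a new block opens only at an index whose $c^*$ is below the previous block's level.

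If this cutting rule breaks monotonicity, the fallback is to merge violating adjacent blocks, as in standard isotonic regression. In the merged block I would re-verify $\lambda\le\min c^*$ using the geometric decay of $r^*$. The bookkeeping of this merge step is where I expect most of the work.
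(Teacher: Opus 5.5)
There is a genuine gap: your argument never uses that $X^*_1,\dots,X^*_n$ is an \emph{optimal} order, and without optimality the statement is false. Your own example already shows this. If $c^*_1=\varepsilon$ and $c^*_2=1$, then the constraints $d^*_1\le c^*_1$ and $d^*_2\le d^*_1$ force $d^*_2\le\varepsilon$ for \emph{every} admissible proxy. Hence $\OPT_{\mathrm{proxy}}\ge 1/\varepsilon$, while $\sum_t r^*_t=2-\varepsilon$. More generally, monotonicity together with $d^*_s\le c^*_s$ forces $d^*_t\le\min_{s\le t}c^*_s$. So the prefix minimum you discarded is the pointwise \emph{largest} feasible proxy, and it minimizes $\OPT_{\mathrm{proxy}}$ over all feasible choices. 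No block or pool-adjacent-violators construction can succeed where it fails. Concretely, on this example cutting gives $\lambda_{[1]}=\varepsilon/2<\lambda_{[2]}=1/2$, which is not monotone. Merging gives $\lambda_{[1,2]}=\tfrac12\cdot\tfrac{1}{2-\varepsilon}>c^*_1$, which violates $\lambda\le c^*_1$. So the merge bookkeeping you defer cannot be completed.

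The missing ingredient is an exchange argument: in an optimal order, $q^*_1\ge q^*_2\ge\cdots\ge q^*_n$. Suppose $q^*_t<q^*_{t+1}$, and swap $X^*_t$ and $X^*_{t+1}$. This keeps $q^*_t+q^*_{t+1}$ fixed, while the new $t$-th term is $\Pr(X^*_{t+1},\overline{X^*_1},\dots,\overline{X^*_{t-1}})\ge q^*_{t+1}>q^*_t$. That strictly lowers $\sum_t tq^*_t$, contradicting optimality. Your example violates this, since $q^*_2=1-\varepsilon>\varepsilon=q^*_1$.

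With monotone $q^*$, write $q^*_s=r^*_s-r^*_{s+1}=q^*_s/c^*_s-q^*_{s+1}/c^*_{s+1}$ and use $q^*_{s+1}\le q^*_s$. This gives $1/c^*_s\le 1/c^*_{s+1}+1$ whenever $q^*_s>0$, hence $1/c^*_s\le 1/c^*_t+(t-s)$ for $s\le t$. Now take $d^*_t=\min_{s\le t}c^*_s$. Then $q^*_t/d^*_t\le q^*_t/c^*_t+t\,q^*_t$. Summing, and using both expressions $\OPT=\sum_t q^*_t/c^*_t=\sum_t t\,q^*_t$ from \cref{eqn: opt-cost}, yields $\OPT_{\mathrm{proxy}}\le 2\,\OPT$. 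The prefix minimum is the right proxy; what controls its overcharge is the optimality of the order, not a cleverer envelope.
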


\begin{proof}[Proof of Lemma \ref{lem: monotone-conditional-probability}]
    Define $d^*_t = \min_{s \le t} c^*_t$, so that $d^*_t \le c^*_t$ for all $t$ and $d^*_1 \ge \ldots \ge d^*_n = \min_{t \in [n]} c^*_t > 0$.

    To bound the cost $\OPT_{\mathrm{proxy}}$, we first prove the following claim: for all $s \le t$, $\frac{1}{c^*_s} \le \frac{1}{c^*_t} + (t - s)$.

    We prove the claim by induction on $t - s$. First, note that $q_1^* \ge \ldots \ge q_n^*$, since otherwise, there exists some $t$ with $q^*_t < q^*_{t + 1}$ and we can improve the objective value $\OPT = \sum_{t} t q^*_t$ by swapping $X^*_t$ and $X^*_{t + 1}$.

    Now, noting that $r_s^* = \frac{q_s^*}{c_s^*}$ and $r_{s + 1}^* = \frac{q_{s + 1}^*}{c_{s + 1}^*}$, we have
    \[
        q_s^* = r_s^* - r_{s + 1}^* = \frac{q_s^*}{c_s^*} - \frac{q_{s + 1}^*}{c_{s + 1}^*} \ge q_s^* \left(\frac{1}{c_s^*} - \frac{1}{c_{s + 1}^*}\right),
    \]
    where the inequality follows since $q_s^* \ge q_{s + 1}^*$. Rearranging, $\frac{1}{c_s^*} \le \frac{1}{c_{s + 1}^*} + 1$. The rest of the claim follows by the induction hypothesis.

    Given the claim, we can bound $\OPT_{\mathrm{proxy}}$ as follows: for any $t \in [n]$, suppose $\overline{s} = {\arg\min}_{s \le t}c^*_s$, so that $d^*_{t} = c^*_{\overline{s}}$, and by the claim,
    \[
        \frac{q^*_t}{d_t^*} = \frac{q^*_t}{c_{\overline{s}}^*} \le q_t^* \left(\frac{1}{c^*_t} + (t - \overline{s})\right) \le \frac{q_t^*}{c_t^*} + tq_t^*.
    \]
    Therefore,
    \[
        \OPT_{\mathrm{proxy}} = \sum_{t \in [n]} \frac{q_t^*}{d_t^*} \le \sum_{t \in [n]} \frac{q_t^*}{c_t^*} + \sum_{t \in [n]} tq_t^* = \OPT + \OPT.
    \]
\end{proof}

With this lemma in hand, let us go back to the bound (\ref{eqn: updated-greedy-upper-bound-2}): we have
\begin{align*}
    \ugreedy &\le \sum_{i = 1}^K \frac{1}{c^*_{t_{i}}} (r_{j_{i - 1} + 1} - r_{j_i + 1}) \le \sum_{i = 1}^K \frac{1}{d^*_{t_{i}}} (r_{j_{i - 1} + 1} - r_{j_i + 1}) & (d^*_{t_i} \le c^*_{t_i}), \\
    &= \sum_{i = 1}^K r_{j_{i - 1} + 1} \left(\frac{1}{d^*_{t_i}} - \frac{1}{d^*_{t_{i - 1}}} \right) & (\text{rearranging the sum}), \\
    &\le \sum_{i = 1}^K r^*_{t_{i}} \left(\frac{1}{d^*_{t_i}} - \frac{1}{d^*_{t_{i - 1}}} \right) & (\text{Lemma } \ref{lem: updated-greedy-measuring-progress}), \\
    &= \sum_{i = 1}^K \frac{1}{d^*_{t_{i}}} (r^*_{t_{i}} - r^*_{t_{i + 1}}) & (\text{rearranging the sum}), \\
    &= \sum_{i = 1}^K \sum_{t = t_{i}}^{t_{i + 1} - 1} \frac{q^*_{t}}{d^*_{t_i}} \\
    &\le \sum_{i = 1}^K \sum_{t = t_{i}}^{t_{i + 1} - 1} \frac{q^*_{t}}{d^*_{t}} & (t \ge t_i), \\
    &
    \le \OPT_{\mathrm{proxy}} \le 2 \ \OPT & (\text{Lemma} \ \ref{lem: monotone-conditional-probability}).
\end{align*}

This proves \cref{thm: cna-2-approximation}.

\section{Omitted details from \cref{sec: graph-probing}}\label{app: graph-probing-omitted-proofs}

We provide algorithms and proofs omitted from \cref{sec: graph-probing} here. 

\subsection{Adaptivity gap}

First, we prove \cref{lem: adaptivity-gap-graph-probing}:

\AdaptivityGapGraphProbing*

\begin{proof}
    Our graph $G = (V, E)$ is a union of $\frac{n}{k + 1}$ stars where we set $k = \sqrt{n}$. For each of the $\frac{n}{k + 1}$ stars, the central vertex $v$ is associated with the Bernoulli random variable $Y_{v}$ with $\Pr(Y_{v} = 1) = p := \frac{\log n}{\sqrt{n}}$. For each vertex $v$ that is a leaf vertex of some star, $\Pr(Y_v = 1) = 0$. That is, for each edge $e \in E$ in the graph, $\Pr(X_e = 1) = p$. Further, all edges in a given star are simultaneously either $1$ or $0$.

    Consider the following adaptive algorithm: go through the stars one-by-one and probe any edge in the star, until a non-zero edge is found. If a non-zero edge is found in a star, probe all other edges of that star. If no nonzero edge is found in any star, probe every other edge in the graph. We claim that the cost of this algorithm is $O\left(\sqrt{\frac{n}{\log n}}\right)$. If there is a nonzero edge in some star, it takes at most $\frac{n}{k + 1}$ iterations to find it. The algorithm therefore terminates in $\frac{n}{k + 1} + (k - 1) = O(\sqrt{n})$ probes. Further, the probability that there is no star that has a nonzero edge is $(1 - p)^{\frac{n}{k + 1}} \le \exp\left(\frac{pn}{k + 1}\right) \le \exp\left(-\log n\right) = \frac{1}{n}$. Since there are $n - \frac{n}{k + 1} = \frac{nk}{k + 1}$ edges in the graph, this algorithm has expected cost $\le (1 - 1/n) \cdot O(\sqrt{n}) + (1/n) \cdot \frac{nk}{k + 1} = O(\sqrt{n})$.

    Now we show that any nonadaptive algorithm must use $\Omega\left(\frac{n}{\log n}\right)$ probes on average. A nonadaptive algorithm must fix a (possibly randomized) sequence of probes in advance. Denote $T = \frac{k}{2p} = \frac{n}{2 \log n}$. For any fixed (deterministic) sequence of edge probes, the expected number of nonzero edges in the first $T$ edges of the sequence is $Tp = \frac{k}{2}$. Therefore, by Markov's inequality, the probability that there are $\ge k$ nonzero edges among the first $T$ edges is $\le \frac{(k/2)}{k} = \frac{1}{2}$. That is, any fixed sequence of probes has cost $\ge T$ with probability $\frac{1}{2}$, i.e., must have cost $\ge \frac{T}{2} = \Omega\left(\frac{n}{\log n}\right)$ on average. Finally, the average cost of a (randomized) sequence is simply the a weighted average of the costs of some deterministic sequence, and therefore must be $\Omega\left(\frac{n}{\log n}\right)$ as well.

    Therefore, the adaptivity gap is $\Omega\left(\frac{1}{\sqrt{n}} \times \frac{n}{\log n}\right) = \Omega\left(\frac{\sqrt{n}}{\log n}\right)$.
\end{proof}

\subsection{From Graph Probing to Vertex Probing}\label{sec: graph-probing-to-vertex-probing}

Here, we convert adaptive \gp algorithms to adaptive Vertex Probing algorithms.

\GraphProbingToVertexProbing*

Our transformation is as follows: each time the \gp algorithm decides to probe edge $uv$, the corresponding Vertex Probing algorithm probes both vertex $u$ and vertex $v$ (in any order), unless they have already been probed. Then, the Vertex Probing algorithm feeds back the outcome $X_{uv} = Y_u \lor Y_v$ for the edge $uv$ back to the \gp algorithm, so that it can adapt. This is formalized in Algorithm \ref{alg: graph-probing-to-vertex-probing}.

\begin{algorithm}[t]
    \caption{\textsc{GraphProbingToVertexProbing$(\ALG^e, G, k)$}}\label{alg: graph-probing-to-vertex-probing}
    \begin{algorithmic}[1]
        \Statex \textbf{input}: algorithm $\ALG^e$ for \gp, graph $G$, and knapsack size $k$
        \State initialize $\ALG^e$ with input graph $G$ and knapsack size $k$
        \State $S \gets \emptyset$ \Comment{Probed vertices}
        \While{$\sum_{v \in S} \deg_v Y_v < k$}
            \State let $uv$ be the edge that $\ALG^e$ probes next
            \State if $u \not\in S$, probe $u$, and add $u$ to $S$ \Comment{Determine $Y_u$}
            \State if $v \not\in S$, probe $v$, and add $v$ to $S$ \Comment{Determine $Y_v$}
            \State report the outcome $X_{uv} = Y_u \lor Y_v$ of edge $uv$ to $\ALG^e$
            \State if $\ALG^e$ terminates, \textbf{return} $S$
        \EndWhile
    \State \textbf{return} $S$
    \end{algorithmic}
\end{algorithm}

\begin{proof}[Proof of Lemma \ref{lem: transformation-graph-probing-to-vertex-probing}]
    We need to prove the following:
    \begin{enumerate}
        \item (Correctness) If $\sum_{v \in V} \deg_v Y_v \ge k$, then the Vertex Probing algorithm (denoted $\ALG^v$) probes vertex set $S$ such that $\sum_{v \in S} \deg_v Y_v \ge k$.
        \item (Cost) If the \gp algorithm $\ALG^e$ probes $N$ edges, then the Vertex Probing algorithm probes $\le 2N$ vertices.
    \end{enumerate}

    (Correctness) $\ALG^v$ only terminates if either $\sum_{v \in S} \deg_v Y_v \ge k$, or if $\ALG^e$ terminates. The \gp algorithm $\ALG^e$ only terminates if it has probed some edge set $F \subseteq E$ such that $\sum_{e \in F} X_e \ge k$, or if it has probed all edges.
    
    Suppose $\sum_{e \in F} X_e \ge k$ for the edge set $F$ that $\ALG^e$ probes, and denote $F_1 = \{e \in F: X_e = 1\}$ to be the set of non-zero edges. $\ALG^v$ probes each endpoint of each edge $e \in F$, that is, $S = \{v \in V: v \text{ is an endpoint of some } e \in F\}$. Denote $S_1 = \{v \in S: Y_v = 1\}$ to be the set of non-zero vertices. Note that for each edge $uv \in F_1$, since $1 = X_{uv} = Y_{u} \lor Y_v$, we must have that $Y_u = 1$ or $Y_v = 1$, or that $u \in S_1$ or $v \in S_1$. Therefore, $uv$ is counted at least once in the sum $\sum_{v \in S} \deg_v Y_v = \sum_{v \in S_1} \deg_v$. Consequently, $\sum_{v \in S} \deg_v Y_v \ge |F_1| = \sum_{e \in F} X_e$.
    
    Now suppose that $\ALG^e$ has probed all edges and terminates. Then, $\ALG^v$ probes all vertices, and can therefore terminate.
    
    (Cost) Clearly, if $\ALG^e$ probes $N$ edges, then $\ALG^v$ probes $\le 2N$ vertices since it only probes the endpoints of edge edge that $\ALG^e$ probes. This proves Lemma \ref{lem: transformation-graph-probing-to-vertex-probing}.
\end{proof}

\subsection{From Vertex Probing to Graph Probing}\label{sec: vertex-probing-to-graph-probing}

Next, we show that any Vertex Probing algorithm can be converted into a \gp algorithm with bounded cost (see Algorithm \ref{alg: vertex-probing-to-graph-probing} for a formal description). 
Consequently, using the $3$-approximation for Vertex Probing from \cite{deshpande2016approximation} yields Theorem \ref{thm: graph-probing}.

\begin{algorithm}[t]
    \caption{\textsc{VertexProbingToGraphProbing}$(\ALG^v, G, k)$}\label{alg: vertex-probing-to-graph-probing}
    \begin{algorithmic}[1]
        \Statex \textbf{input}: algorithm $\ALG^v$ for Vertex Probing, graph $G$, and knapsack size $k$
        \State initialize $F \gets \emptyset$ \Comment{Probed edges}
        \State initialize $k' \gets k$ \Comment{Number of non-zero edges to be found}
        \State initialize $V' \gets V$, $E' \gets E$, $G' = (V', E')$ \Comment{$G'$ is modified by the algorithm}
        \State $b = (\ast, \ldots, \ast) \in \R^V$ \Comment{Outcomes for each vertex $v \in V$; $*$ indicates unknown}
        \While{not all edges have been probed}
            \State initialize $\ALG^v$ with input graph $G'$ and knapsack size $k'$
            \While{$\ALG^v$ does not terminate}
                \State let $v \in V$ be the vertex that $\ALG^v$ probes next
                \State if $v \in V'$, run \textsc{ProbeVertex}$(G', v, k', b)$ \Comment{probes some edges and updates $G', k', b$}
                 \State if $k' = 0$, \textbf{return} \Comment{$k$ non-zero edges found}
                \State report the outcome $b_v$ for vertex $v$ to $\ALG^v$ 
            \EndWhile
        \EndWhile
        \State probe any edges $e \in E$ that have not been probed
        \State \textbf{return}
    \end{algorithmic}
\end{algorithm}

\subsubsection{Algorithm}

In Algorithm \ref{alg: vertex-probing-to-graph-probing}, we initialize the graph $G' = (V', E')$ as the given graph $G = (V, E)$, and $k'$ (number of nonzero edges remaining to be found) as $k$. As the algorithm probes edges, $k'$ is updated and appropriate edges and vertices are removed from $G'$. Graph $G'$ is allowed to have multiple self-loops at a vertex; this is to simplify our algorithm and analysis.

The algorithm is divided into several \emph{phases}. In each phase, a Vertex Probing algorithm $\ALG^v$ is initialized on $G'$ and $k'$, and we show that either (1) $k'$ is reduced by factor $2$ by the end of each phase, or (2) the algorithm terminates (Lemma \ref{lem: graph-probing-half-bag-size}). Further, we maintain the invariant that vertex random variables $\{Y_v, v \in V'\}$ in the \emph{residual graph} $G' = (V', E')$ are independent of each other and of the outcomes of any edges probed so far (Lemma \ref{lem: graph-probing-residual-instance-is-independent-main}). Further, $\deg_v(G) = \deg_{v}(G')$ for all remaining vertices $v \in V'$.

In a given phase, $\ALG^v$ directs us to probe some vertex $v \in V'$. We cannot probe a vertex directly, so we simply probe the edges incident on $v$ one by one, until a $0$ edge is found or until all edges have been probed and were all non-zero (Algorithm \ref{alg: probe-vertex}, \textsc{ProbeVertex}). In either case, if we stop here and do not probe more edges, then the remaining vertices are no longer independent. Therefore, we recursively call \textsc{ProbeVertex} on those neighbors of $v$ for which the corresponding edge was $1$. 

Once \textsc{ProbeVertex} on $v$ is finished, we need to report back the outcome $b_v$ of random variable $Y_v$ so that $\ALG^v$ can adapt. If some edge $uv$ incident on $v$ is $0$, then clearly $Y_v \le Y_u \lor Y_v = X_{uv} = 0$, i.e., we can report back $b_v = 0$. However, if all edges incident on $v$ are $1$, then we cannot be sure of the outcome for $Y_v$, since it is possible that $Y_v = 1$, or that $Y_v = 0$ and $Y_u = 1$ for each neighbor $u$ of $v$. In this case, we report $b_v = 1$. This could potentially mislead the Vertex Probing algorithm $\ALG^v$, but we show in Lemma \ref{lem: probe-vertex-sees-more-ones-than-zeroes} that this can only reduce the expected number of probes by $\ALG^v$.

The final component of our main algorithm is the subroutine \textsc{RemoveVertex}, which rearranges the residual graph $G'$ when $X_{uv} = 0$ for some edge $uv \in E'$. Essentially, \textsc{RemoveVertex}$(v)$ removes the vertex $v$ from $V'$ and converts any other (unprobed) edge incident on $v$ into a loop at its other endpoint.

\begin{algorithm}[t]
    \caption{\textsc{ProbeVertex}$(G', v, k', b)$}\label{alg: probe-vertex}
    \begin{algorithmic}[1]
        \Statex \textbf{input}: graph $G' = (V', E')$ (possibly with loops), vertex $v \in V'$, and integer $k' > 0$
        \For{each loop $l$ at $v$}
            \State probe random variable $X_{l}$ and remove $l$ from $E'$
            \If{$X_l = 0$}
                \State \textsc{RemoveVertex}$(G', v)$
                \State update $b_v \gets 0$ and \textbf{return}
            \Else
                \State $k' \gets k' - 1$; \textbf{return} if $k' = 0$
            \EndIf
        \EndFor
        \State let $u_1, \ldots, u_N$ be the neighbors of $v$ in $V'$
        \For{$i = 1$ to $N$}
            \State probe edge $vu_i$ and remove $vu_i$ from $E'$
            \If{$X_{vu_i} = 0$}
                \State \textsc{RemoveVertex}$(G', v)$ and \textsc{RemoveVertex}$(G', u_i)$
                \For{$j < i$}
                    \State \textsc{ProbeVertex}$(G', u_j, k', b)$
                \EndFor
                \State update $b_v \gets 0$, $b_{u_i} \gets 0$ and \textbf{return}
            \Else
                \State $k' \gets k' - 1$; \textbf{return} if $k' = 0$
            \EndIf
        \EndFor
        \State remove $v$ from $V'$ and update $b_v \gets 1$
        \For{$i = 1$ to $N$}
            \State \textsc{ProbeVertex}$(G', u_i, k', b)$
        \EndFor
        \State \textbf{return}
    \end{algorithmic}
    \noindent\rule{\linewidth}{0.4pt}
    \textsc{RemoveVertex}$(G', v)$
    \begin{algorithmic}[1]
        \Statex \textbf{input}: graph $G' = (V', E')$ (possibly with loops), and vertex $v \in V'$
        \State remove each loop $l$ at $v$ from $E'$
        \For{each neighbor $u$ of $v$ in $G'$}
            \State replace edge $uv$ in $E'$ with a loop at $u$
        \EndFor
        \State remove $v$ from $V'$
        \State \textbf{return}
    \end{algorithmic}
\end{algorithm}

\subsubsection{Analysis}

We present the analysis of the algorithm now. All omitted proofs are deferred to Appendix \ref{app: graph-probing-omitted-proofs}. Throughout, we denote $G = (V, E)$ as the original graph and $G' = (V', E')$ as the residual graph. $G$ stays fixed while $G'$ is updated during the algorithm. Similarly, $k$ is the initial knapsack size (fixed), while $k'$ is the (remaining) knapsack size at any instant in the algorithm. Further, recall that we divide the algorithm into different phases (a phase corresponding to one iteration of lines 6 to 11 in the outer while loop in Algorithm \ref{alg: vertex-probing-to-graph-probing}), and initialize $\ALG^v$ once in each phase.

Our first lemma establishes the independence of the vertex random variables $\{Y_v: v \in V'\}$ in the residual graph from the outcomes of probed edges after each run of \textsc{ProbeVertex} (and in particular after each phase of the algorithm).

\ResidualInstance*

To prove this lemma, we first prove in the following lemma that random variables $\{Y_v: v \in V'\}$ are independent of the outcome of probed edges $F \subseteq E$ if no edge $e \in F$ is incident on any vertex $v \in V'$.

\begin{proof}
    Let $F \subseteq E$ be the set of edges probed with outcomes $x_e \in \{0, 1\}$ for each edge $e \in F$. We need to show that for all outcomes $b' \in \{0, 1\}^{V'}$ for the vertices $V'$,
    \[
        \Pr(Y_v = b'_v \ \forall \ v \in V' \ \text{and} \ X_e = x_e \ \forall \ e \in F) = \Pr(X_e = x_e \ \forall \ e \in F) \times \prod_{v \in V'} \Pr(Y_v = b'_v).
    \]
    The outcome for any edge only depends on its endpoints, and therefore knowing the outcome of edges not incident on vertices in $V'$ does not affect the distribution of those vertices.

    Formally, let $B = \{0, 1\}^{V \setminus V'}$ denote the set of all outcomes for vertices other than $V'$. Then fixing outcome $b \in B$ for the vertices $V \setminus V'$ completely determines the outcomes for edges $F$. Let $\overline{B} \subseteq B$ denote the outcomes for vertices compatible with the outcomes for $F$, i.e., for which $X_e = x_e$ for each $e \in F$. Then, we have
    \begin{align*}
        &\Pr(Y_v = b'_v \, \forall \, v \in V', X_e = x_e \, \forall \, e \in F) 
        \,\, = \,\, \sum_{b \in \overline{B}} \Pr(Y_v = b'_v \, \forall \, v \in V', Y_u = b_u \, \forall \, u \in V \setminus V') \\
        & \qquad = \sum_{b \in \overline{B}} \left(\prod_{v \in V'}\Pr(Y_v = b'_v)\right)  \Pr(Y_u = b_u \ \forall \ u \in V \setminus V') \\
         & \qquad = \,\, \left(\prod_{v \in V'}\Pr(Y_v = b'_v) \right) \times \Pr(X_e = x_e \ \forall \ e \in F),
    \end{align*}
    as desired. Above, the second equality follows since the $Y_v$ variables are independent.
\end{proof}

The following lemma states that for each probed edge $e \in F$, neither of its endpoints is in the vertex set $V'$ of the residual graph. This proves Lemma \ref{lem: graph-probing-residual-instance-is-independent-main}.

\begin{lemma}\label{lem: residual-graph-is-not-adjacent-to-probed-edges}
    Consider graph $G' = (V', E')$ at the end of each run of \textsc{ProbeVertex} in \textsc{VertexProbingToGraphProbing}. Then, for any vertex $v \in V'$, none of the edges incident on $v$ in the original graph $G = (V, E)$ have been probed.
\end{lemma}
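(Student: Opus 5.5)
The plan is to prove a stronger invariant by induction over every elementary operation of the algorithm, not only at the end of each run of \textsc{ProbeVertex}. The induction runs over edge probes, calls to \textsc{RemoveVertex}, and the step "remove $v$ from $V'$". The invariant (I) says: for every $v \in V'$, no edge of $G$ incident on $v$ has been probed. It is trivially true at the start, since $F=\emptyset$. The difficulty is that (I) is violated temporarily in the middle of \textsc{ProbeVertex}. I will therefore carry a weaker invariant during a call and show that it is restored by the time the call returns.

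\textbf{Weaker invariant during a call.} While a call is active, the only vertices of $V'$ allowed to have a probed incident edge are the following:
\begin{itemize}
\item the vertex $v$ currently being processed;
\item the neighbors $u_j$ of $v$ whose edge $vu_j$ has already been probed with outcome $1$;
\item the vertices that appear as pending arguments of outstanding recursive calls higher in the recursion stack.
\end{itemize}
I will check the invariant separately for each branch of Algorithm~\ref{alg: probe-vertex}.

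1. \emph{A loop at $v$ is probed.} A loop at $v$ originates from an edge $vw$ with $w$ already removed. So probing it touches no vertex of $V'$ other than $v$. If $X_l=0$, then $v$ is removed via \textsc{RemoveVertex}. If $X_l=1$, the loop is simply deleted.

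2. \emph{Edge $vu_i$ is probed with outcome $1$.} Now $u_i$ has a probed incident edge, so $u_i$ is recorded as pending. This is the only new violation.

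3. \emph{Edge $vu_i$ is probed with outcome $0$.} Both $v$ and $u_i$ are removed, and the remaining unprobed edges at them become loops at their other endpoints. The pending vertices $u_j$ with $j<i$ then each get a recursive call \textsc{ProbeVertex}$(G',u_j,k',b)$.

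4. \emph{All neighbors give outcome $1$.} Then $v$ is removed and recursive calls are made on all $u_1,\dots,u_N$.

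The main obstacle is showing that each recursive call on a pending vertex $u$ leaves $u\notin V'$ when it returns. Every call \textsc{ProbeVertex}$(G',u,\cdot)$ ends with $u$ removed from $V'$: it exits either through \textsc{RemoveVertex}$(G',u)$ or through the line "remove $v$ from $V'$". The exception is an early return when $k'=0$, and in that case the whole algorithm stops, so the claim is vacuous. A call on a vertex that was already removed by a deeper recursion is a no-op, consistent with the guard "if $v\in V'$".

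Two further facts complete the argument:
\begin{itemize}
\item \emph{No new violations escape the call.} Every edge probed inside the recursive call touches only $u$ (if it is a loop) or $u$ and one of its neighbors. That neighbor in turn becomes pending within the nested call, so the nested call handles it.
\item \emph{The procedure terminates.} Each recursive call removes at least one vertex from $V'$, so the recursion is finite.
\end{itemize}

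By induction on recursion depth, every pending vertex has been removed from $V'$ before the outermost call returns, and so (I) holds at that point. A last detail is that a vertex can lose its last probed edge only by removal. Since $F$ only grows, no vertex ever re-enters $V'$, so (I) cannot be broken afterward. Finally, Lemma~\ref{lem: graph-probing-residual-instance-is-independent-main} follows from (I) together with the independence computation given above.
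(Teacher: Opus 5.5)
Your proof is correct and rests on the same three facts as the paper's argument: a call \textsc{ProbeVertex}$(v)$ ends with $v$ removed from $V'$; an edge probed with outcome $1$ from one endpoint forces a later call on the other endpoint; and an outcome-$0$ edge removes both endpoints. The paper phrases this as a short contradiction on a single probed edge $uv$ with $v \in V'$, whereas your forward ``pending set'' invariant also makes explicit several points the paper leaves implicit: the statement concerns the end of top-level calls (it can fail at the end of a nested call), and you handle loops, termination, and the $k'=0$ early exit (there the statement can genuinely fail rather than being vacuous, which is harmless because the algorithm stops).
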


\begin{proof}
    Suppose, to the contrary, that some edge $uv \in E$ has been probed.
    $uv$ could have been probed only if \textsc{ProbeVertex} is run on one of its endpoints. If \textsc{ProbeVertex}$(v)$ is run, then $v$ is removed from $V'$, which is a contradiction. If $uv$ was probed during \textsc{ProbeVertex}$(u)$ and $X_{uv} = 1$, then \textsc{ProbeVertex}$(v)$ must also have been run, again a contradiction. Therefore, we must have $X_{uv} = 0$. But in this case, both $u$ and $v$ are removed from $V'$, which again yields a contradiction. Therefore, no edge $uv \in E$ is probed if $v \in V'$.
\end{proof}

\begin{lemma}\label{lem: probe-vertex-sees-more-ones-than-zeroes}
    Let $m_1$ and $m_0$ denote the number of non-zero and zero edges probed during a run of \textsc{ProbeVertex}. Then $m_0 \le m_1 + 1$.
\end{lemma}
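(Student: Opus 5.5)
We argue via the recursion tree of \textsc{ProbeVertex}, mirroring the proof sketch given for \cref{lem: hgp-zeros-vs-ones} in the hypergraph case with $\rho = 2$. Fix a top-level call \textsc{ProbeVertex}$(G', v, k', b)$ and build a rooted tree $\tau$. Its nodes are the vertices on which \textsc{ProbeVertex} is invoked during this run, counting recursive invocations. The root is $v$, and a node $w$ is a child of $u$ if \textsc{ProbeVertex}$(w)$ is called directly from within \textsc{ProbeVertex}$(u)$. This is a tree because each invocation is made from exactly one parent invocation. One point needs checking: a vertex cannot be invoked twice, since an invocation either calls \textsc{RemoveVertex} on it or removes it from $V'$ explicitly.

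\emph{Step 1: each node contributes at most one zero probe.} Inside \textsc{ProbeVertex}$(u)$, the direct probes stop at the first zero: in the loop over loops, a zero leads to a return, and in the loop over neighbors, a zero ends that loop. Probes made inside recursive calls are attributed to the child nodes. Hence $m_0 \le |V_\tau|$.

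\emph{Step 2: each edge of $\tau$ is charged to a distinct one-probe.} A child $w$ of $u$ is created only at two points.

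\begin{enumerate}
\item Inside the zero-branch, on some $u_j$ with $j < i$. Here the edge $uu_j$ was probed earlier in this invocation and returned $1$.
\item After all neighbor edges returned $1$, on each $u_i$. Here $uu_i$ was probed and returned $1$.
\end{enumerate}

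In both cases we charge the tree edge $(u, w)$ to the one-probe of the graph edge $uw$, which was probed directly by invocation $u$. This map is injective, because distinct children of $u$ correspond to distinct neighbors, and each probed edge is probed by only one invocation (it is removed from $E'$ once probed). Children in different invocations correspond to probes made by different invocations. Therefore $|A_\tau| \le m_1$.

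\emph{Step 3: combine.} Since $\tau$ is a tree, $|V_\tau| = |A_\tau| + 1$. So $m_0 \le |V_\tau| = |A_\tau| + 1 \le m_1 + 1$.

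The main obstacle is the zero-branch in Step 2. There, the zero edge $vu_i$ removes both $v$ and $u_i$, and \textsc{RemoveVertex} turns the remaining edges at these vertices into loops. We must make sure this does not create extra zero probes with no matching node, and does not let a loop probe be double counted. Loops at $u_i$ are never probed inside this invocation, since $u_i$ is no longer in $V'$. Later they are probed only if some other vertex's invocation probes them, in which case they are attributed to that node, which has its own cap of one zero. Early returns triggered by $k' = 0$ only reduce the number of probes and children, so the bounds still hold.
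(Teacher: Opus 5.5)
Your proof is correct and follows the same route as the paper: the recursion tree of \textsc{ProbeVertex}, at most one zero probe made directly by each node so that $m_0 \le |V_\tau|$, tree edges charged to one-probes, and $|V_\tau| = |A_\tau| + 1$. Your injective charging, which gives $|A_\tau| \le m_1$, is slightly more careful than the paper's stated equality $m_1 = |A|$. That equality overlooks one-probes on loops, and early returns at $k' = 0$, neither of which creates a child.
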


\begin{proof}
    Suppose \textsc{ProbeVertex} is run on graph $G' = (V', E')$ and vertex $v \in V'$. Each run of \textsc{ProbeVertex}$(v)$ can recursively call \textsc{ProbeVertex} on other vertices adjacent to $v$. Consider the (rooted, directed) recursion tree $\tau = (V'', A)$ defined as follows: $V'' \subseteq V'$ is the subset of vertices on which \textsc{ProbeVertex} is called, and there is a directed edge from $u \in V''$ to $w \in V''$ in $A$ if \textsc{ProbeVertex}$(w)$ is called during \textsc{ProbeVertex}$(u)$.

    If \textsc{ProbeVertex}$(u)$ probes edge $uw \in E$ and $X_{uw} = 1$, then $u$ is the parent of $w$ in $\tau$ (i.e., $(u, w) \in A$). That is, $m_1 = |A|$. Further, as soon as some edge incident on a vertex $u \in V''$ is probed and turns out to be $0$, no other edges incident on $u$ are probed. Therefore, $m_0 \le |V''|$. Since $\tau$ is a tree, $|A| = |V''| - 1$. That is, $m_1 \ge m_0 - 1$.
\end{proof}

The next lemma shows that the algorithm makes at most $O(\OPT^e(G, k))$ calls (in expectation) to \textsc{ProbeVertex} in the first of the various phases of the algorithm:
\begin{lemma}\label{lem: graph-probing-vertex-actual-vs-reported-outcome}
    Consider algorithm $\ALG^v$ that is an $\alpha$-approximation for Vertex Probing, for some $\alpha \ge 1$. Then the number of calls to \textsc{ProbeVertex} made in the first phase of \textsc{VertexProbingToGraphProbing} is at most $2\alpha \ \OPT^e(G, k)$.
\end{lemma}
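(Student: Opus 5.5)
Each call to \textsc{ProbeVertex} in the first phase is triggered either directly by $\ALG^v$ (line 9 of \cref{alg: vertex-probing-to-graph-probing}) or recursively. I will bound only the number of vertices that $\ALG^v$ itself asks to probe, and then argue that recursive calls are accounted for. The plan is to view the first phase as a run of $\ALG^v$ on a Vertex Probing instance on $G$ with knapsack $k$. Then I compare its cost with $\OPT^v(G,k)$ and invoke \cref{lem: transformation-graph-probing-to-vertex-probing} to get $\OPT^v(G,k)\le 2\,\OPT^e(G,k)$. Together with the $\alpha$-approximation, this gives the bound $2\alpha\,\OPT^e(G,k)$.

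\textbf{Step 1 (simulation).} Consider the sequence of reported outcomes $b_v$ that $\ALG^v$ receives during the first phase. I will argue this is a valid ``execution'' of $\ALG^v$ on a Vertex Probing instance that is at least as favorable as the true one. There are two cases for a reported value.
\begin{itemize}
\item Whenever $b_v=0$ is reported, some incident edge is $0$, so truly $Y_v=0$.
\item Whenever $b_v=1$ is reported, all edges at $v$ were $1$, so $Y_v=1$ or all neighbours are active.
\end{itemize}
I would couple via the independence invariant (\cref{lem: graph-probing-residual-instance-is-independent-main}). At the moment $\ALG^v$ asks for $v\in V'$, $Y_v$ is independent of everything seen, with its original marginal. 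The reported $b_v$ stochastically dominates $Y_v$: it equals $1$ whenever $Y_v=1$, and possibly also otherwise. Vertices asked about but already removed from $V'$ have $b_v$ already determined by earlier probes.

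\textbf{Step 2 (monotonicity).} I need that replacing $Y$ with a coordinatewise larger realization cannot increase the number of probes $\ALG^v$ makes. This is the main obstacle, since an arbitrary adaptive algorithm need not be monotone. I would sidestep it by counting probes until the knapsack is filled. The reported outcomes fill the simulated knapsack by weight at least as fast as the true ones, so $\ALG^v$ stops no later than the first time the true reward $\sum \deg_v Y_v$ of the queried set reaches $k$. I would then compare directly with the optimal Vertex Probing strategy, or instead take $\ALG^v$ to be the specific \cite{deshpande2016approximation} algorithm and use its analysis relative to $\OPT^v$ on the (dominating) reported instance. If needed, I will also observe that the optimal value of a dominating product instance is no larger. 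Either way, the expected number of vertices queried by $\ALG^v$ in the phase is at most $\alpha\,\OPT^v(G,k)\le 2\alpha\,\OPT^e(G,k)$.

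\textbf{Step 3 (recursive calls).} Finally, I need to handle recursive \textsc{ProbeVertex} calls. These could be absorbed by treating every vertex on which \textsc{ProbeVertex} runs as a vertex whose outcome becomes known to $\ALG^v$ for free. The count in the lemma would then refer to top-level calls. If all calls must be counted, I would charge each recursive call to the $1$-edge that spawned it, as in \cref{lem: probe-vertex-sees-more-ones-than-zeroes}. Each such $1$-edge decrements $k'$, so the recursive calls are later charged against the knapsack progress in \cref{lem: graph-probing-first-phase-simple}, rather than here.
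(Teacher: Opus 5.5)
\textbf{There is a genuine gap in your Step 2.} Everything else (top-level calls $\le$ queries of $\ALG^v$, the $\alpha$-guarantee, and $\OPT^v(G,k)\le 2\,\OPT^e(G,k)$) matches the paper. The trouble is your claim that the reported rewards fill the knapsack at least as fast as the true ones. That is only a per-path statement about the vertices $\ALG^v$ queries \emph{when fed $b$}. After a misreport ($b_v=1$ while $Y_v=0$), $\ALG^v$ enters the ``$v$ active'' subtree of its decision tree and follows a different query sequence than under the true $Y$. So ``$\ALG^v$ stops no later than when the true reward of its own queried set reaches $k$'' gives no comparison with the cost of $\ALG^v(G,k)$ on the true instance, and none with $\OPT^v(G,k)$. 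An adaptive $\alpha$-approximation only controls a weighted average over branches. Misreports shift weight onto the ``active'' branches, and these can be the expensive ones; this is exactly the non-monotonicity you flagged, and the sidestep does not remove it.

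Your fallback of applying the analysis of \cite{deshpande2016approximation} to the ``dominating reported instance'' also fails. The reports do not form a product instance. For a loop-free $v$, $\Pr(b_v=1\mid\text{history})=p_v+(1-p_v)\prod_u p_u$ over its current neighbours, so it depends on the residual graph. Also, the $b$-values of vertices settled inside recursive calls are correlated with earlier reports. A guarantee proved for independent items therefore cannot be invoked. Monotonicity of the optimum in the marginals only bounds the benchmark, not the behaviour of $\ALG^v$. As written, ``either way, the expected number of vertices queried is at most $\alpha\,\OPT^v(G,k)$'' is asserted, not proved.

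\textbf{How the paper closes this.} It uses \cref{lem: graph-probing-misreporting-only-helps}: without loss of generality, the cost of $\ALG^v$ on $(V\setminus\{v\},k-\deg_v)$ is at most its cost on $(V,k)$. If not, replace that sub-policy by a simulation of $\ALG^v(V,k)$ that feeds it a freshly sampled outcome for $v$. This stays feasible and does not increase the cost. With this monotonicity, each misreport can only lower the number of queries. Hence the count under $b$ is at most the cost of $\ALG^v(G,k)$, which is at most $\alpha\,\OPT^v(G,k)\le 2\alpha\,\OPT^e(G,k)$ by \cref{lem: transformation-graph-probing-to-vertex-probing}. You need this policy-modification step, or an equivalent coupling, to complete the argument.

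\textbf{On Step 3.} The lemma counts only top-level calls, and each query of $\ALG^v$ triggers at most one. This is your first option; charging recursive calls to $1$-edges is not needed here (that is handled later via \cref{lem: probe-vertex-sees-more-ones-than-zeroes}).
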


First, we need the following lemmas:

\begin{lemma}
    In Algorithm \textsc{VertexProbingToGraphProbing}, for any vertex $v \in V$, if $b_v \neq \ast$ (i.e., if the outcome $b_v$ is not unknown), then $b_v \ge Y_v$. Further, $b_v = Y_v + 1$ if and only if $X_e = 1$ for each edge $e \in E$ incident on $v$.
\end{lemma}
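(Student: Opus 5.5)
The plan is a case analysis on how the entry $b_v$ gets assigned a value in \textsc{ProbeVertex}. In each case I compare the assigned value with $Y_v$ and with the outcomes of the original edges of $G$ incident on $v$. The second claim will be proved in the following precise form. If $b_v = Y_v + 1$, then $X_e = 1$ for every $e \in E$ incident on $v$. Conversely, if $b_v \neq \ast$, every incident $X_e = 1$, and $Y_v = 0$, then $b_v = Y_v + 1$. (When $Y_v = 1$ the inequality $b_v \ge Y_v$ forces $b_v = 1 = Y_v$, so the equivalence is only meaningful on $\{Y_v = 0\}$.)

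\emph{Preliminary bookkeeping.} Every loop at a vertex $u$ in $G'$ is created by \textsc{RemoveVertex}$(G', w)$ from an unprobed original edge $uw \in E$. Probing that loop therefore reveals exactly $X_{uw} = Y_u \lor Y_w$. So every probe made while processing $v$, whether of a loop at $v$ or of an edge $vu_i$, reveals $X_e$ for some original edge $e \in E$ incident on $v$. By \cref{lem: residual-graph-is-not-adjacent-to-probed-edges}, when \textsc{ProbeVertex}$(G', v, \cdot)$ is entered with $v \in V'$, no original edge at $v$ has been probed. All of them are still present in $E'$, either as edges to neighbours in $V'$ or as loops at $v$. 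I also check that once $b_v$ is set, $v$ has been removed from $V'$ and its value is never overwritten.

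\emph{Case $b_v \gets 0$.} This happens in exactly three places: when a loop at $v$ returns $0$, when the edge $vu_i$ returns $0$ inside \textsc{ProbeVertex}$(v)$, or symmetrically when $v$ is the neighbour $u_i$ in a call on another vertex and the edge returns $0$. In every case some original edge $e \ni v$ has $X_e = 0$. Since $X_e = Y_v \lor Y_{w}$, this forces $Y_v = 0 = b_v$. It also shows that $b_v = 0$ implies that not all incident edges are active.

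\emph{Case $b_v \gets 1$.} This happens only at the line ``remove $v$ from $V'$ and update $b_v \gets 1$''. That line is reached only after every loop at $v$ and every edge $vu_i$ has been probed and returned $1$; early returns caused by $k' = 0$ do not set $b_v$. By the bookkeeping above, these probes are exactly all original edges incident on $v$, so all incident $X_e = 1$. Trivially $b_v = 1 \ge Y_v$, which finishes the first claim. It also gives the forward direction of the second claim: $b_v = Y_v + 1$ means $b_v = 1$, hence all incident $X_e = 1$. For the converse, suppose $b_v \neq \ast$ and every incident $X_e = 1$. The first case rules out $b_v = 0$, so $b_v = 1$, and with $Y_v = 0$ this is $b_v = Y_v + 1$.

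The main obstacle is the bookkeeping step. I need to show that the set of probes made at $v$ before $b_v \gets 1$ covers \emph{all} original edges incident on $v$, that none was probed earlier, and that no such edge was silently deleted. The only deletions are of loops at a vertex being removed, which is not $v$ while $v \in V'$. I also need the no-overwrite property for the recursive calls \textsc{ProbeVertex}$(u_j)$, which I would handle by observing that the algorithm only acts on vertices still in $V'$. I would establish all of this by a short induction over the sequence of graph modifications made by \textsc{ProbeVertex} and \textsc{RemoveVertex}.
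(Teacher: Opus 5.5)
Your route is the paper's: a case analysis on where $b_v$ is assigned. Your $b_v\gets 0$ case is fine. Your precise reading of the ``if and only if'' is also right, since read literally the converse fails when $Y_v=1$ or $b_v=\ast$. The gap is in the bookkeeping behind the $b_v\gets 1$ case.

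\textbf{Where the bookkeeping fails.} You claim that when \textsc{ProbeVertex}$(G',v,\cdot)$ is entered with $v\in V'$, no original edge at $v$ has been probed. This is false for recursive calls.
\begin{itemize}
\item Suppose \textsc{ProbeVertex}$(v)$ probes $vu_i$, sees $1$, and later calls \textsc{ProbeVertex}$(u_i)$, either in the final loop or in the loop over $j<i$. By then $vu_i$ has already been probed and deleted from $E'$, so it is neither a loop at $u_i$ nor an edge to a $V'$-neighbour.
\item In a triangle $v,u_1,u_2$, the edge $u_1u_2$ can likewise be probed inside \textsc{ProbeVertex}$(u_1)$ before \textsc{ProbeVertex}$(u_2)$ starts.
\item \cref{lem: residual-graph-is-not-adjacent-to-probed-edges} only describes the state between top-level calls; its proof needs the recursive calls to have finished. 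So it cannot be invoked at the entry of a nested call.
\end{itemize}
Hence ``these probes are exactly all original edges incident on $v$'' is wrong for nested calls. The induction you plan (``none was probed earlier'') is aimed at a false statement.

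\textbf{How to repair it.} The conclusion is still true, but it needs a different invariant: for every $w\in V'$,
\begin{itemize}
\item[(a)] every original edge at $w$ probed so far had outcome $1$. A $0$ outcome always removes both endpoints from $V'$; for a loop, its other endpoint was already removed when \textsc{RemoveVertex} created the loop.
\item[(b)] every unprobed original edge at $w$ is still in $E'$, either as an edge to a $V'$-neighbour or as a loop at $w$. Only \textsc{RemoveVertex}$(w)$ deletes loops at $w$ without probing them, or turns $w$'s edges into loops at other vertices. The plain ``remove $v$ from $V'$'' line is reached only after all of $v$'s edges in $E'$ have been probed.
\end{itemize}
At the line $b_v\gets 1$, (a) covers the edges probed before entry. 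By (b), the loops and neighbour edges probed in the current call are exactly all the remaining original edges at $v$, and they all returned $1$.

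\textbf{The no-overwrite property.} This cannot simply be observed from the pseudocode, because the recursive calls are unguarded. Take the triangle with $Y_v=1$, $Y_{u_1}=Y_{u_2}=0$ and neighbour order $u_1,u_2$.
\begin{itemize}
\item \textsc{ProbeVertex}$(u_1)$ probes $u_1u_2$, gets $0$, and sets $b_{u_2}=0$.
\item The next call \textsc{ProbeVertex}$(u_2)$ from $v$'s final loop finds no edges and sets $b_{u_2}\gets 1$, contradicting the lemma.
\end{itemize}
So you must state it as a convention that \textsc{ProbeVertex} runs only on vertices of $V'$, as its input specification and the guard in the main loop intend. The paper's one-line proof tacitly assumes this too.
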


\begin{proof}
    If $b_v \neq \ast$, i.e., if $b_v$ is updated by \textsc{ProbeVertex} at some point, then $b_v = 0$ or $b_v = 1$. \textsc{ProbeVertex} updates $b_v = 0$ only if $X_{uv} = 0$ for some edge $uv \in E$. Since $X_{uv} = Y_u \lor Y_v \ge Y_v$, this implies that $Y_v = 0$. Further, \textsc{ProbeVertex} updates $b_v = 1$ only if $X_e = 1$ for each edge $e \in E$ incident on $v$.
\end{proof}

\begin{lemma}\label{lem: graph-probing-misreporting-only-helps}
    Consider an algorithm $\ALG^v$ for the Vertex Probing problem. Then we can assume without loss of generality that the cost of $\ALG^v$ on vertex set $V$ and knapsack size $k$ is at least the cost of $\ALG^v$ on vertex set $V \setminus \{v\}$ and knapsack size $k - \deg_v$.
\end{lemma}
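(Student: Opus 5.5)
We prove the statement by a wrapper construction. Given any algorithm $\ALG^v$ for Vertex Probing, we build a modified algorithm $\widetilde{\ALG}^v$ whose cost on every instance is at most that of $\ALG^v$, and which satisfies the required monotonicity. Replacing $\ALG^v$ by $\widetilde{\ALG}^v$ is the "without loss of generality."

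The natural candidate is to let $\widetilde{\ALG}^v$ on $(V, k)$ run the best of the available policies. Concretely, define it recursively in the size of the instance:
\[
    \mathrm{cost}(\widetilde{\ALG}^v; V, k) := \min\Big(\mathrm{cost}(\ALG^v; V, k),\ \max_{v \in V} \mathrm{cost}(\widetilde{\ALG}^v; V \setminus \{v\}, k - \deg_v)\Big).
\]
Operationally, the second option needs care: probing $v$ costs one extra probe, and we cannot simply pretend $Y_v = 1$. The cleaner formulation is the optimal policy itself.

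So the main step is to argue directly with the optimal Vertex Probing cost, which is the quantity that matters downstream (Lemma \ref{lem: graph-probing-vertex-actual-vs-reported-outcome} compares against $\alpha$ times an optimum). Let $\OPT^v(V, k)$ denote the optimal expected cost. We show $\OPT^v(V \setminus \{v\}, k - \deg_v) \le \OPT^v(V, k)$ by simulation. Run the optimal policy $P$ for $(V, k)$ on the instance $(V \setminus \{v\}, k - \deg_v)$. Whenever $P$ asks to probe $v$, do not probe it; instead feed $P$ the outcome $Y_v = 1$ at zero cost. Every other probe is executed and reported faithfully.

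We then check two properties.

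\emph{Feasibility.} When $P$ stops, it has a set $S$ with $\sum_{u \in S} \deg_u Y_u \ge k$, counting $v$ as $1$ if it was queried. Then $\sum_{u \in S \setminus \{v\}} \deg_u Y_u \ge k - \deg_v$. If instead $P$ stops because it has probed everything, then so has the simulation.

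\emph{Cost.} On each realization of $\{Y_u : u \ne v\}$, the simulated run follows exactly the branch of $P$ corresponding to $Y_v = 1$, minus one probe. So its cost is at most $P$'s cost on the realization with $Y_v = 1$. However, $\OPT^v(V, k)$ averages over $Y_v$ with probability $p_v$, and the $Y_v = 1$ branch might be more expensive than the average.

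This is the main obstacle. We handle it by observing that the branch $Y_v = 0$ faces a harder residual requirement. Formally, we prove by induction on $|V|$ that the optimal cost is monotone non-decreasing in $k$. A policy for the larger $k$ can be truncated at the moment it first reaches the smaller target, so it is also valid for the smaller $k$. Consequently, conditioned on the history before $v$ is probed, the optimal continuation after $Y_v = 1$ (residual requirement lowered by $\deg_v$) costs no more than the continuation after $Y_v = 0$. Using this, we can modify $P$ so that its $Y_v = 1$ branch cost is at most its average cost. Combined with the simulation, this gives $\OPT^v(V\setminus\{v\}, k-\deg_v) \le \OPT^v(V, k)$.

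Finally, we replace $\ALG^v$ by a policy that, at each node, uses the better of $\ALG^v$'s continuation and the simulated continuation. This retains the $\alpha$-approximation and satisfies the stated monotonicity, which is all that the later argument uses.
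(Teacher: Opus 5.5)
There is a genuine gap: what you actually prove is a statement about the optimum, $\OPT^v(V\setminus\{v\},k-\deg_v)\le\OPT^v(V,k)$. The lemma is instead about the given algorithm $\ALG^v$. Downstream this is the efficient $3$-approximation of \cite{deshpande2016approximation}, and the lemma is used to show that this algorithm's own probe count does not increase under misreporting.

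The obstacle you hit comes from feeding the fixed value $Y_v=1$, so that the simulated cost is the cost of the $Y_v=1$ branch. Your fix combines monotonicity of the optimal cost in $k$ with the fact that an optimal policy's continuation depends only on the residual state. Both ingredients hold only for an optimal policy. Your last paragraph does not bridge this, whichever reading of ``simulated continuation'' you intend:
\begin{itemize}
\item If it means $\ALG^v(V,k)$ run with $Y_v=1$ fed in, nothing bounds its cost by $\ALG^v$'s cost on $(V,k)$. An arbitrary algorithm's continuation after a reported $1$ can be far more expensive than after a $0$.
\item If it means the simulated optimal policy, the modified algorithm is no longer polynomial time. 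You also cannot decide which option is ``better'' without evaluating the optimum. So this is not a without-loss-of-generality replacement of the efficient $\ALG^v$.
\end{itemize}

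The missing idea is to simulate $v$ with a fresh independent sample rather than a fixed value. Run $\ALG^v$ for $(V,k)$ on $V\setminus\{v\}$. When it asks for $v$, probe nothing and report $\hat Y_v\sim\mathrm{Bernoulli}(\Pr(Y_v=1))$, drawn independently.
\begin{itemize}
\item \emph{Cost.} The vertex variables are independent, so the simulated run has exactly the distribution of $\ALG^v$ on $(V,k)$. Its expected cost is therefore at most $\ALG^v$'s cost on $(V,k)$, and the probe of $v$ is even saved.
\item \emph{Feasibility.} Since $\deg_v\hat Y_v\le\deg_v$, reaching $\sum_{u\in S}\deg_uY_u\ge k$ with $\hat Y_v$ in place of $Y_v$ gives $\sum_{u\in S\setminus\{v\}}\deg_uY_u\ge k-\deg_v$. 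Probing all of $V$ means probing all of $V\setminus\{v\}$.
\end{itemize}
Whenever $\ALG^v$ violates the inequality (detectable via sampling estimates of both costs), replace its behavior on $(V\setminus\{v\},k-\deg_v)$ by this simulation. This is the paper's proof. It works for every algorithm, never compares the $Y_v=1$ and $Y_v=0$ branches, and needs no monotonicity in $k$.
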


\begin{proof}
    First, note that we can estimate the expected cost of $\ALG^v$ on any input vertex set $\hat{V}$ with degrees $\deg_u, u \in \hat{V}$ and knapsack size $\hat{k} > 0$. To do so, we can simply sample independent outcomes $Y_u: u \in \hat{V}$ with given probabilities $\Pr(Y_u = 1)$ and run $\ALG^v$ for this scenario. Do this $N$ times, with $Z_i$ denoting the cost of the algorithm for the $i$th run for $i \in [N]$. Denote $Z = \frac{1}{N} \sum_{i \in [N]} Z_i$. Then the expected cost of the algorithm is precisely $\E Z$, and standard concentration bounds show that $Z$ is arbitrarily close to $\E Z$ in a polynomial number of runs $N$.

    Given this, we prove the lemma statement. Suppose $\ALG^v(V, k) < \ALG^v(V \setminus \{v\}, k - \deg_v)$ (both these numbers can be estimated as described above). Then we can simply replace $\ALG^v(V \setminus \{v\}, k - \deg_v)$ by $\ALG^v(V, k)$, except that when $\ALG^v(V, k)$ probes vertex $v$, we report $Y_v = 1$ with probability $\Pr(Y_v = 1)$ and $Y_v = 0$ with probability $(Y_v = 0)$. The cost of this algorithm is clearly the expected cost of $\ALG^v(G, k)$. Further, $\ALG^v(V, k)$ either probes each vertex $v \in V$ or fills the knapsack of size $k$. Vertices other than $v$ still fill a knapsack of size $k - \deg_v$.
\end{proof}

We are ready to prove Lemma \ref{lem: graph-probing-vertex-actual-vs-reported-outcome}.

\begin{proof}[Proof of Lemma \ref{lem: graph-probing-vertex-actual-vs-reported-outcome}]
    In the first phase of the algorithm, $\ALG^v$ is initialized with graph $G' = G$ and knapsack size $k' = k$. In each iteration, $\ALG^v$ seeks to probe some vertex $v \in V$ and expects the realized value of $Y_v$ in return. Then, it updates the bag size as $k' \gets k' - Y_v \deg_v$.

    In each iteration, \textsc{VertexProbingToGraphProbing} makes at most $1$ call to \textsc{ProbeVertex} and informs $\ALG^v$ of the outcome $b_v$ for the random variable $Y_v$ by probing edges incident on $v$. Therefore, the number of calls to \textsc{ProbeVertex} is the first phase is at most the number of vertices that $\ALG^v$ seeks to probe.

    By the previous lemma, $b_v \neq Y_v$ if and only if $b_v = 1$, $Y_v = 0$, and $X_e = 1$ for each edge $e \in E$ incident on $v$. $\ALG^v$ then incorrectly updates $k' \gets k' - \deg_v$. However, by Lemma \ref{lem: graph-probing-misreporting-only-helps}, this does not increase the number of probes that $\ALG^v$ makes. That is, misreporting $Y_v$ does not increase the number of calls made by $\ALG^v$.

    Finally, $\ALG^v(G, k) \le \alpha \ \OPT^v(G, k) \le 2 \alpha \ \OPT^e(G, k)$, where the first inequality holds because $\ALG^v$ is an $\alpha$-approximation for Vertex Probing and the second inequality holds by Lemma \ref{lem: transformation-graph-probing-to-vertex-probing}.
\end{proof}

Hereafter, we assume that $\ALG^v$ is the algorithm from \cite{deshpande2016approximation} that is a $3$-approximation for Stochastic min-Knapsack (and therefore $3$-approximation for Vertex Probing).

Therefore, we get the following stronger version of \cref{lem: graph-probing-first-phase-simple}, which bound the cost of the algorithm in the first phase.

\begin{lemma}\label{lem: graph-probing-first-phase}
    Let $M_0, M_1$ be the number of zeros, ones seen in the first phase of \textsc{VertexProbingToGraphProbing} on input graph $G$ and knapsack size $k$. Then $M_0 \le M_1 + 6 \ \OPT^e(G, k)$. Consequently, the number of probes in the first phase of the algorithm is at most $8 \ \OPT^e(G, k)$.
\end{lemma}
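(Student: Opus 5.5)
We prove the two claims by summing over runs of \textsc{ProbeVertex} in the first phase and then adding up the zeros and ones. Every edge probe in the first phase happens inside some run of \textsc{ProbeVertex}, counting only top-level calls from \cref{alg: vertex-probing-to-graph-probing}; recursive calls are absorbed into their parent's recursion tree. Let the top-level runs be $R_1, \ldots, R_L$, and let $m_0^{(\ell)}, m_1^{(\ell)}$ be the zeros and ones probed in run $R_\ell$.

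By \cref{lem: probe-vertex-sees-more-ones-than-zeroes}, $m_0^{(\ell)} \le m_1^{(\ell)} + 1$ for each run. Summing over the runs gives
\[
M_0 = \sum_\ell m_0^{(\ell)} \le \sum_\ell m_1^{(\ell)} + L = M_1 + L .
\]
By \cref{lem: graph-probing-vertex-actual-vs-reported-outcome} with $\alpha = 3$ (the algorithm of \cite{deshpande2016approximation}), the number of top-level calls satisfies $L \le 6\,\OPT^e(G, k)$. This gives the first claim, $M_0 \le M_1 + 6\,\OPT^e(G,k)$. A run that probes a loop first is also covered: a loop contributes a single zero and terminates that node, which is consistent with the tree-counting argument.

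For the second claim, the total number of probes in the first phase is $M_0 + M_1 \le 2M_1 + 6\,\OPT^e(G,k)$, so it suffices to show $M_1 \le \OPT^e(G,k)$. Every one-edge found decrements $k'$, and the algorithm returns as soon as $k' = 0$, so $M_1 \le k$. It remains to argue $k \le \OPT^e(G,k)$.
\begin{itemize}
\item If the realized graph has at least $k$ active edges, the optimal algorithm must probe at least $k$ edges.
\item If it has fewer than $k$ active edges, then $M_1$ is at most the number of active edges, and the optimal algorithm must probe all $|E|$ edges, which is at least $M_1$.
\end{itemize}
In either case $M_1 \le \OPT^e$ pointwise, and in particular in expectation. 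Hence $M_0 + M_1 \le 2\,\OPT^e + 6\,\OPT^e = 8\,\OPT^e(G,k)$.

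The main subtlety is in the expectation bookkeeping. The bound $L \le 6\,\OPT^e$ holds only in expectation, while the bound on $M_1$ holds pointwise, so the inequalities should be read in expectation, which is what the cost statement needs. A second subtlety is to check that the recursive calls \textsc{ProbeVertex}$(u_j)$ made after a zero-edge in the middle loop still fit the tree structure used in \cref{lem: probe-vertex-sees-more-ones-than-zeroes}. Such a $u_j$ was attached via a one-edge $vu_j$, so it is a child of $v$ and the tree count remains valid.
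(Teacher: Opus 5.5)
Your proposal is correct and follows the paper's proof essentially step for step. You sum the per-run bound $m_0 \le m_1 + 1$ from \cref{lem: probe-vertex-sees-more-ones-than-zeroes} over the at most $6\,\OPT^e(G,k)$ top-level calls to \textsc{ProbeVertex} given by \cref{lem: graph-probing-vertex-actual-vs-reported-outcome}, then use $M_1 \le k \le \OPT^e(G,k)$ to reach $8\,\OPT^e(G,k)$, and your added remarks (reading the bounds in expectation, and checking that the calls on $u_j$ after a zero-edge still fit the recursion tree) are sound clarifications of points the paper leaves implicit.
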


\begin{proof}
    Each phase of the algorithm consists of several calls to \textsc{ProbeVertex} made by $\ALG^v(G', k')$. By Lemma \ref{lem: graph-probing-vertex-actual-vs-reported-outcome}, the number of such calls is at most $6 \ \OPT^e(G, k)$.

    Let $M_1$ (resp. $M_0$) be the number of edges probed in phase $1$ that are $1$ (resp. $0$). Then by the above and Lemma~\ref{lem: probe-vertex-sees-more-ones-than-zeroes}, we get $M_0 \le M_1 + 6 \ \OPT^e(G, k)$. 

    The total number of probes in phase $1$ is therefore $M_0 + M_1 \le 2 M_1 + 6 \ \OPT^e(G, k)$. However, $M_1 \le k \le \OPT^e(G, k)$ since $\OPT^e$ must probe at least $k$ edges (in case there are $k$ non-zero edges) or exactly $|E| \ge k$ edges.
\end{proof}

Next, we prove that each phase of the algorithm reduces the number of nonzero edges to be found by at least a factor $2$. This completes the proof of \cref{thm: graph-probing}.

\HalfBagSize*

\begin{proof}
    Let $S \subseteq V$ be the set of vertices probed by $\ALG^v(G, k)$ in the first phase. $\ALG^v$ terminates if either $S = V$ or if $\sum_{v \in S} b_v \deg_v \ge k$. If $S = V$, then every edge in $G$ has been probed, and so \textsc{VertexProbingToGraphProbing} also terminates.

    So suppose $\sum_{v \in S} b_v \deg_v \ge k$. Further, by construction in \textsc{ProbeVertex}, if $b_v = 1$ for some $v \in S$, then every edge $e$ incident on $v$ has been probed and $X_{e} = 1$. Therefore, for the set $F$ of edges probed in phase 1,
    \begin{align*}
        2 \sum_{e \in F} X_e &\ge  \sum_{v \in S: b_v = 1} \sum_{e: e \ni v} X_e \ge \sum_{v \in S: b_v = 1} \deg_v = \sum_{v \in S} b_v \deg_v \ge k.
    \end{align*}
    Therefore, at least $\frac{k}{2}$ non-zero edges have been found in phase 1.
\end{proof}

\end{document}